\let\oldReturn\Return
\renewcommand{\Return}{\State\oldReturn}
\newtheorem{theorem}{Theorem}[section]
\newtheorem{lemma}[theorem]{Lemma}
\newtheorem{observation}[theorem]{Observation}
\newtheorem{remark}[theorem]{Remark}
\newcommand{\bx}{\mathbf{x}} 
\newcommand{\by}{\mathbf{y}}
\newcommand{\bc}{\mathbf{c}}
\newcommand{\bw}{\mathbf{w}}
\newcommand{\bp}{\mathbf{p}}
\let\epsilon\varepsilon
\title{Existence of Fair and Efficient Allocation of Indivisible Chores}
\author{Ryoga Mahara}
\affil{Department of Mathematical Informatics \\ The University of Tokyo \\ \texttt{mahara@mist.i.u-tokyo.ac.jp}}
\date{}
\begin{document}

\maketitle
\begin{abstract}
We study the problem of allocating indivisible chores among agents with additive cost functions in a {\it fair} and {\it efficient} manner. 
A major open question in this area is whether there always exists an allocation that is envy-free up to one chore (EF1) and Pareto optimal (PO).
Our main contribution is to provide a positive answer to this question by proving the existence of such an allocation for indivisible chores under additive cost functions.
This is achieved by a novel combination of a fixed point argument and a discrete algorithm, providing a significant methodological advance in this area.

Our additional key contributions are as follows.
We show that there always exists an allocation that is EF1 and fractional Pareto optimal (fPO), where fPO is a stronger efficiency concept than PO.
We also show that an EF1 and PO allocation can be computed in polynomial time when the number of agents is constant.
Finally, we extend all of these results to the more general setting of weighted EF1 (wEF1), which accounts for the entitlements of agents.
\end{abstract}

\newpage

\section{Introduction}\label{sec:int}

The fair allocation of \emph{indivisible} resources and burdens is a fundamental problem that has been extensively studied in mathematics, economics, and computer science~\cite{brams1996fair,brandt2016handbook}, and it has received increasing attention in recent years (see surveys~\cite{amanatidis2023fair,aziz2022algorithmic,guo2023survey}).
Fair allocation of indivisible items arises in many practical settings, such as inheritance division, course allocation, and task assignment.
In this work, our goal is to allocate a finite set of items among agents with individual valuation functions in a {\it fair} and {\it efficient} manner.
We assume that each agent has an additive valuation function.
We call an item a \emph{good} if all agents assign it a non-negative value, and a \emph{chore} if all agents assign it a non-positive value.

\emph{Envy-freeness} (EF)\cite{foley1966resource} is a fundamental notion of fairness. An allocation is EF if each agent prefers their own bundle over the bundle of any other agent.
However, when items are indivisible, an EF allocation may not exist even with just one item and two agents.
To overcome this limitation, several relaxed fairness concepts have been proposed. Among them, the most widely used is \emph{envy-freeness up to one item} (EF1)\cite{budish2011combinatorial}.
In the case of goods, EF1 requires that each agent prefers their own bundle to that of any other agent after removing at most one item from the other agent’s bundle, whereas for chores, it requires that each agent prefers their own bundle to that of any other agent after removing at most one item from their own bundle.
An EF1 allocation exists and is computed in polynomial time even for general combinatorial valuation functions, not just additive ones~\cite{lipton2004approximately,bhaskar2021approximate}.

\emph{Pareto optimality} (PO) is a standard notion of efficiency. An allocation is PO if no agent can be made better off without making someone else worse off.
A stronger notion, called \emph{fractional Pareto optimality} (fPO), extends this idea to fractional combinations of allocations.
Classical results in economic theory~\cite{negishi1960welfare, varian1976} show a connection between fractional Pareto-optimal allocations and weighted utilitarian welfare.
This connection immediately implies both the existence and efficient computability of fPO allocations, consequently, of PO allocations as well.

Achieving both fairness and efficiency simultaneously is a natural and fundamental objective in the fair division. 
The existence of EF1 or PO allocations is relatively easy to guarantee, and each can be computed separately without much difficulty.  
However, finding an allocation that satisfies both properties simultaneously, or even proving that such an allocation exists, is a natural and important goal but also a highly challenging problem.
The landscape of prior work differs depending on the type of items we consider.

\paragraph{Fair and efficient allocation for indivisible goods}
Caragiannis et al.~\cite{caragiannis2019unreasonable} established the novel result that maximizing the Nash social welfare~\cite{Nash50, kaneko1979nash}, defined as the geometric mean of the agents’ valuations, results in an allocation that is both EF1 and PO. However, since maximizing Nash social welfare is NP-hard~\cite{nguyen2014computational} and even APX-hard~\cite{lee2017apx}, this approach does not directly yield an efficient algorithm.
To overcome this barrier, Barman et al.~\cite{barman2018finding} proposed a pseudo-polynomial time algorithm that computes an EF1 and PO allocation and proved the existence of an EF1 and fPO allocation.
Nevertheless, their existence proof relies on a non-constructive convergence argument and does not directly yield an algorithm for computing such an allocation.
Later, Mahara~\cite{mahara2024polynomial} developed a polynomial-time algorithm for computing an EF1 and fPO allocation when the number of agents is constant.
The existence of a polynomial-time algorithm for finding an EF1 and PO (or fPO) allocation remains an important open question.

\paragraph{Fair and efficient allocation for indivisible chores}
Compared to the case of goods, the fair division of chores is much more challenging, even though the two settings look similar.
Despite many efforts by researchers, progress on chore division has been relatively limited.
The existence of an EF1 and PO allocation has long been a major open problem, which we address in this paper.
Moulin~\cite{moulin2019fair} highlighted this issue in his survey (page 436).
Unlike the case of goods, where maximizing Nash social welfare has been successfully used to show the existence of EF1 and PO allocations, no such technique is currently known for chores.
Instead, most existing results on EF1 and PO allocations for chores rely on the concept of competitive equilibrium, similar to the approach used for goods.
For goods, algorithms that compute EF1 and fPO allocations are typically based on the competitive equilibrium framework and use a potential function to guarantee termination.
In these settings, price increases and item reallocations consistently increase the potential, ensuring that the algorithm terminates.

However, a major obstacle in extending this approach to chores is the loss of monotonicity in the potential function.
For chores, changes in prices and item transfers can have inconsistent or even conflicting effects on the potential, making it significantly more difficult to prove termination.

As a result, the existence of EF1 and fPO (or PO) allocations for chores was previously known only for a few restricted cases:
instances with two agents~\cite{aziz2022fair};
instances where each agent's valuation is bivalued~\cite{ebadian2022fairly, garg2022fair};
instances with two types of chores~\cite{Aziz2023};
instances with three agents~\cite{garg2023new};
and instances with three types of valuation functions~\cite{garg2024weighted}.
Although these results all focus on restricted instances, some of them involve quite complicated arguments.
This has made it seem very difficult to prove similar existence results for general additive valuation functions.

\subsection{Our Contributions}

In this work, we consider the problem of fair division of indivisible chores among agents with additive cost functions.\footnote{We define the cost function as the valuation function multiplied by -1.}
Our main contribution is to show the existence of an EF1 and PO allocation for indivisible chores under additive cost functions.
This result resolves an important open question for the case of chores under additive costs.
\begin{theorem}\label{thm:main}
    When each agent has an additive cost function, an EF1 and PO allocation always exists.
\end{theorem}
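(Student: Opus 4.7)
The plan is to circumvent the non-monotone potential obstacle described in the introduction by combining a fixed point theorem with a discrete repair algorithm operating on fractional competitive equilibria. My approach proceeds in three stages.

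First, I would parameterize fractional fPO allocations via a minimum-pain-per-buck (MPB) competitive equilibrium. For positive prices $\bp = (p_j)_j$ on chores, define the MPB ratio $\alpha_i = \min_j c_{ij}/p_j$ and the MPB graph whose edges are pairs $(i,j)$ attaining this minimum. A fractional allocation supported on MPB edges gives each agent $i$ an earning $\sum_j x_{ij} p_j = e_i$. Any fractional MPB equilibrium is automatically fPO, because the allocation minimizes the weighted cost $\sum_i \alpha_i^{-1} c_i(X_i) = \sum_j p_j$, a positive combination of agent costs. Thus the space of fPO allocations is naturally parameterized by the earnings vector $\be$ and the prices $\bp$.

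Second, starting from such a fractional equilibrium I would run a discrete integrality/repair algorithm of the kind used in prior work on restricted cases: chore transfers along MPB alternating paths from the agent with the largest per-buck cost deficit to the agent with the smallest, interleaved with price updates that preserve fPO. In the goods case a product-of-prices potential gives termination, but for chores this fails, so instead I would treat the algorithm's terminal allocation as a set-valued function of the target earnings $\be$ and seek an $\be$ for which the realized earnings of the output coincide with $\be$. Applying Kakutani's fixed point theorem to this correspondence on the probability simplex of $\be$ would deliver the desired target; at such a fixed point, the combinatorial invariants built into the algorithm (bounded earning discrepancies along MPB paths translate into at-most-one-chore envy) would yield an EF1 and fPO integral allocation, and thus the required EF1 and PO allocation.

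The principal obstacle I foresee is establishing the upper hemicontinuity needed for Kakutani in the second step: because the MPB graph changes combinatorially as prices cross tie-breaking thresholds, the output of the discrete algorithm depends discontinuously on $\be$, and this discontinuity is precisely the manifestation of the lost potential monotonicity the introduction flags as the chore-specific difficulty. A likely resolution is to let the algorithm return \emph{fractional} equilibria with small support, prove closedness of the resulting correspondence, and then argue that a fixed point admits an integral representative for which the EF1 guarantees of the underlying discrete moves remain intact. Verifying that this extraction step preserves EF1 while keeping the allocation on MPB edges (hence fPO) will be the delicate technical heart of the proof, and is the place where a genuinely new idea beyond the competitive-equilibrium-plus-potential paradigm is required.
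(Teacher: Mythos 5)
Your proposal correctly anticipates the paper's broad strategy---a competitive-equilibrium/LP-duality parameterization of fPO allocations combined with a fixed-point theorem and a discrete transfer algorithm---but the architecture you build around these ingredients has a genuine gap at its center, and you essentially acknowledge it yourself. You propose to apply Kakutani's theorem to the correspondence sending a target earnings vector $\be$ to the realized earnings of the discrete algorithm's output. As you note, this correspondence is discontinuous because the MPB graph changes combinatorially as prices cross ties, and the suggested fix (let the algorithm output fractional equilibria of small support, prove closedness, then extract an integral representative preserving EF1) is exactly the unproven step; there is no argument that such a correspondence is upper hemicontinuous with convex values, nor that a fixed point of ``realized earnings $=$ target earnings'' yields pEF1 rather than merely equal earnings. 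The paper avoids this entirely by applying the fixed-point theorem \emph{before} the discrete algorithm, and to a different object: the KKM lemma is applied on the simplex of \emph{agent weights} $\bw$ (with a shrinking parameter $\tau$ keeping all weights strictly positive), using the cleanly closed sets $C_i = \{\bw : \text{some optimal allocation to } \mathbf{LP}(\bw,\tau) \text{ makes agent } i \text{ price envy-free}\}$. Closedness of each $C_i$ is easy because the set of integral allocations is finite, so no hemicontinuity of an algorithm is ever needed. The discrete algorithm is then run once, at the fixed weight vector, purely to merge the $n$ separate ``agent $i$ can be made pEF'' guarantees into a single pEF1 allocation.

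The second missing ingredient is the structural handle that makes that merging step work: the paper first perturbs the instance to be non-degenerate, which forces the tight-edge (MPB) graph at the chosen weights to be a \emph{forest}. This is not a technical convenience---the paper exhibits a three-agent, four-chore example where the tight-edge graph contains a cycle, every agent can individually be made price envy-free, and yet no pEF1 allocation supported on the graph exists. Without non-degeneracy your ``repair algorithm along MPB alternating paths'' has no termination argument and its target may simply not exist; with it, the forest structure bounds the number of contested chores by $n-1$, supplies a Hall-type matching, and admits a level-based potential that decreases monotonically (recovering exactly the monotonicity that, as you observe, is lost in the naive price-update approach). So while your instincts about which tools are relevant are sound, the proof as proposed would stall at the Kakutani step, and the decisive ideas---fixed point over weights rather than over the algorithm, and degeneracy-breaking to obtain a forest---are absent.
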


Building on this existence result, we derive several additional results.
First, we show the existence of an EF1 and fPO allocation for indivisible chores under additive cost functions.
\begin{theorem}\label{thm:fPO}
    When each agent has an additive cost function, an EF1 and fPO allocation always exists.
\end{theorem}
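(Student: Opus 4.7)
The plan is to derive Theorem~\ref{thm:fPO} from Theorem~\ref{thm:main} via a perturbation-and-limit argument, a technique familiar from the goods setting (Barman, Krishnamurthy, Vaish). For each $\epsilon>0$ I would introduce a perturbed cost matrix $c^\epsilon$ with $c^\epsilon\to c$ as $\epsilon\to 0$, chosen generically --- for concreteness, $c^\epsilon_{i,j}=c_{i,j}+\epsilon\,\delta_{i,j}$ with coefficients $\delta_{i,j}$ algebraically independent over $\mathbb{Q}$. Under this genericity, for every weight vector $\alpha\in\mathbb{R}^n_{>0}$ the value $\min_i \alpha_i c^\epsilon_{i,j}$ is uniquely attained at each chore $j$, and a standard LP/second-welfare-theorem argument then forces every PO integer allocation of $c^\epsilon$ to also be fPO for $c^\epsilon$.

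Next, I would apply Theorem~\ref{thm:main} to each instance $c^\epsilon$ to obtain an EF1 and PO allocation $X^\epsilon$ which, by the above observation, is in addition fPO for $c^\epsilon$; fPO provides weights $\alpha^\epsilon\in\mathbb{R}^n_{>0}$ (normalized to unit sum) such that $X^\epsilon_j\subseteq \argmin_i \alpha^\epsilon_i c^\epsilon_{i,j}$ for every chore $j$. Because the number of integer allocations is finite, along some sequence $\epsilon_k\downarrow 0$ a single allocation $X^\star$ occurs as $X^{\epsilon_k}$ infinitely often, and after a further subsequence I may assume $\alpha^{\epsilon_k}\to\alpha^\star\ge 0$ with $\sum_i \alpha^\star_i=1$. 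EF1 for $c$ is then inherited by a pigeonhole on the EF1 witnesses: for each pair $(i,j)$, the removed chore $e_{ij}\in X^\star_i$ witnessing EF1 in $c^{\epsilon_k}$ can be chosen to be the same for infinitely many $k$, and the inequality $c^{\epsilon_k}_i(X^\star_i\setminus\{e_{ij}\})\le c^{\epsilon_k}_i(X^\star_j)$ passes to the limit. In parallel, the fPO certificates pass to the limit, giving $\alpha^\star_{i'}c_{i',j}\le \alpha^\star_i c_{i,j}$ whenever $i'\in X^\star_j$.

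The main obstacle is to ensure that $\alpha^\star$ is \emph{strictly} positive: otherwise the limiting inequalities become trivial on the coordinates where $\alpha^\star$ vanishes, and it is easy to check that LP-optimality with a merely nonnegative weight vector does not suffice to certify fPO. I would address this by using the EF1 structure of $X^\star$: an agent $i$ with $\alpha^\star_i=0$ contributes nothing to the weighted-welfare objective, so chores could be moved to (or from) agent $i$ without affecting LP-optimality, and EF1 of $X^\star$ constrains the configuration of the other bundles relative to $X^\star_i$ tightly enough to either force $\alpha^\star>0$ or to allow the conclusion to be finished by removing agent $i$ and invoking the conclusion on the reduced instance. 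A secondary, more routine step is to verify the genericity claim that PO equals fPO after perturbation; this is a measure-theoretic union bound over the finitely many integer allocations and weight orderings, but care is required to handle all possible tie-breakings uniformly in $\alpha$.
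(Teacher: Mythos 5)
Your overall skeleton (perturb, invoke the main existence result on each perturbed instance, use finiteness of allocations plus compactness of normalized weights to extract a limit $(\bx^*,\alpha^*)$, and pass the LP certificates to the limit) is the same as the paper's, but two steps in your proposal have genuine gaps. First, the claim that a generic perturbation forces every PO integral allocation to be fPO is unjustified and, as stated, false: genericity of the cost matrix does not make $\min_i \alpha_i c^\epsilon_{ij}$ uniquely attained for \emph{every} $\alpha$ (the tie set is a nonempty hyperplane in $\alpha$-space for each pair of agents), and the gap between PO and fPO is not a measure-zero phenomenon in the costs --- a PO allocation may fail to be supported by \emph{any} weight vector. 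You do not actually need this claim: the construction behind Theorem~\ref{thm:main} already produces an allocation that is fPO (indeed optimal for $\mathbf{LP}(\bw,\tau)$ with an explicit positive weight vector) in the perturbed non-degenerate instance, and this is exactly what the paper uses. Relatedly, your pigeonhole on EF1 witnesses is more work than necessary, since for a sufficiently small perturbation each $\bx^z$ is already EF1 in the original instance (Lemma~\ref{lem:ef1}), so the repeated allocation $\bx^*$ is EF1 in $I$ outright.

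The second and more serious gap is the one you yourself flag: strict positivity of the limit weights $\alpha^*$. Your proposed resolution (use EF1 of $\bx^*$ to ``either force $\alpha^*>0$ or remove the zero-weight agent and induct'') is only a sketch and does not obviously close. EF1 does not by itself exclude configurations in which all chores sit with agents whose weights vanish in the limit (each holding one chore, say), and the reduction step is problematic: deleting a zero-weight agent together with its bundle changes the instance, and an fPO certificate for the reduced instance does not recombine into an fPO certificate for the full allocation, since a fractional Pareto improvement may route chores through the deleted agent. The paper resolves this \emph{before} taking limits, via the rescaling argument from the proof of Lemma~\ref{lem:po}: after normalizing $\max_i w^z_i=1$, whenever two consecutive weights (in sorted order) have ratio below $\delta/(q_{nm}c_{\max})$, all higher-weight agents receive no chores, so the lower weights can be scaled up without disturbing optimality; iterating gives the uniform bound $w^z_i \ge (\delta/(q_{nm}c_{\max}))^{n-1}$, which makes the limit weight vector strictly positive for free. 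Without an argument of this kind, your limit $\alpha^*$ may lie on the boundary and the limiting complementary-slackness conditions become vacuous exactly where you need them.
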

Furthermore, we show that when the number of agents is constant, an EF1 and PO allocation can be computed in polynomial time.
\begin{theorem}\label{thm:const}
    When each agent has an additive cost function and the number of agents is constant, an EF1 and PO allocation can be computed in polynomial time.
\end{theorem}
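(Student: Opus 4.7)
The plan is to derive Theorem~\ref{thm:const} by computing an EF1 and fPO allocation (which is a fortiori EF1 and PO) through a structured enumeration argument that exploits the constant number of agents, mirroring the approach of Mahara~\cite{mahara2024polynomial} for goods but adapted to chores using the machinery developed in Theorem~\ref{thm:fPO}.

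First, I would parameterize the space of fPO allocations. Classical welfare theory tells us that an allocation is fPO if and only if, for some positive weight vector $\bw = (w_1,\dots,w_n)$, each chore $j$ is assigned to an agent $i$ minimizing $w_i c_{ij}$. Only the ratios among the $w_i$'s matter, so the relevant parameter space is $(n-1)$-dimensional. When $n$ is constant, the hyperplane arrangement $\{w_i c_{ij} = w_{i'} c_{i'j} : i\neq i',\, j \in [m]\}$ partitions this space into $O(m^{n-1})$ cells, within each of which the set of weighted-cost minimizers of every chore is fixed. I would enumerate one representative weight vector from each cell, producing a polynomial-sized collection of allocation skeletons, each specifying for every chore the subset of agents to which it may be assigned while preserving fPO.

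Next, for each skeleton I would decide whether some completion is EF1. Chores with a unique weighted-cost minimizer are forced, and the remaining tied chores must be distributed among their candidate agents. Since the EF1 condition depends only on per-agent cumulative costs and pairwise comparisons after removing one chore, this feasibility problem reduces to a bounded-dimension assignment problem that, with $n$ constant, admits a polynomial-time dynamic program whose state records the vector of cumulative costs (with appropriate bookkeeping to control bit-complexity). By Theorem~\ref{thm:fPO}, at least one skeleton admits an EF1 completion, so the algorithm is guaranteed to return an EF1 and fPO allocation.

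The main obstacle will be making the per-skeleton feasibility step genuinely polynomial. Unlike the goods case, where the welfare potential behaves monotonically under standard price dynamics, chores display the potential-function pathologies noted in the introduction, so the goods-case dynamic program cannot be imported directly. I expect the crux of the argument to lie in combining the chore-specific transfer operations distilled from the proof of Theorem~\ref{thm:fPO} with the cell enumeration above, and in ensuring that the cost-vector states manipulated by the dynamic program can be maintained at polynomial bit-length (independent of the magnitudes of the input costs), so that the overall running time stays polynomial in the input size.
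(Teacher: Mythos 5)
Your overall architecture (reduce to enumerating the polynomially many ``tie patterns'' that weight vectors can induce, then search within each pattern for an EF1 completion, using Theorem~\ref{thm:fPO} to guarantee success) is sound in outline, and the first half is close in spirit to what the paper does. But the step you yourself flag as the crux --- deciding, for a fixed skeleton, whether some assignment of the tied chores is EF1 --- is a genuine gap, not a technicality. A dynamic program whose state is the vector of cumulative costs has exponentially many states when the $c_{ij}$ are arbitrary integers; bounding the \emph{bit-length} of each state does not bound the \emph{number} of states, so what you describe is at best pseudo-polynomial (this is exactly the Barman--Krishnamurthy--Vaish situation for goods, which is why~\cite{mahara2024polynomial} needed additional machinery). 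Moreover, without further structure a single face of your arrangement can tie a chore among several agents for \emph{many} chores simultaneously, so the completion space is genuinely exponential and cannot be searched by brute force either. A secondary point: the witness weight vector for the EF1 and fPO allocation generically lies on a \emph{lower-dimensional} face of the arrangement (ties are essential --- in the interior of a full-dimensional cell the optimal allocation is unique and need not be EF1), so you must enumerate all faces, not just the open cells; this is still polynomial for constant $n$ but needs to be said.

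The missing idea that closes the gap is perturbation to a non-degenerate instance. The paper first perturbs the costs (Lemma~\ref{lem:non-degenerate}) so that every fPO cost vector is achieved by a \emph{unique} fractional allocation (Bogomolnaia), and then invokes the rich family of consumption graphs of Br\^anzei--Sandomirskiy (Theorem~\ref{thm:branzei}): this is a polynomial-size, polynomial-time-computable list of bipartite graphs guaranteed to contain the consumption graph of the EF1 and fPO allocation of the perturbed instance. Since that allocation is integral, its consumption graph has every chore of degree one, so the algorithm simply scans the list, keeps the graphs that directly encode integral allocations, and checks EF1 --- there is no completion subproblem at all. (Equivalently, in a non-degenerate instance the tie graph for any fixed weight vector is a forest in which at most $n-1$ chores are non-forced, so completions number at most $n^{n-1}$; either route kills the exponential search.) Finally, one then transfers EF1 and PO back to the original instance via Lemmas~\ref{lem:ef1} and~\ref{lem:po}. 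Without some form of this degeneracy-breaking, your per-skeleton feasibility step does not run in polynomial time.
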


Finally, we extend all of these results to a more general notion of weighted envy-freeness up to one item (wEF1), which accounts for scenarios where agents have different entitlements or priorities.
Although it is possible to develop the arguments using wEF1 from the start, for the sake of clarity and readability, we choose to discuss it separately.
\begin{theorem}\label{thm:weight}
    The following hold:
     \begin{enumerate}
        \item When each agent has an additive cost function, a wEF1 and PO allocation always exists.\\
        \item When each agent has an additive cost function, a wEF1 and fPO allocation always exists.\\
        \item When each agent has an additive cost function and the number of agents is constant, a wEF1 and PO allocation can be computed in polynomial time.
    \end{enumerate}
\end{theorem}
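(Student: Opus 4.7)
The plan is to lift the proofs of Theorems~\ref{thm:main}, \ref{thm:fPO}, and \ref{thm:const} to the weighted setting by systematically rescaling all cross-agent cost comparisons by the entitlements $w_1,\ldots,w_n$. An allocation $(X_1,\ldots,X_n)$ will be called wEF1 if for every ordered pair $i,j$ there is some $o \in X_i$ with $c_i(X_i \setminus \{o\})/w_i \le c_i(X_j)/w_j$, which coincides with EF1 when all weights are equal. So each structural inequality or ``envy-graph'' edge used in the unweighted proofs has a natural weighted analogue obtained by dividing per-agent quantities by $w_i$, and the goal is to verify that the main arguments continue to function under this rescaling.

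First, I would recast the discrete building blocks of the proof of Theorem~\ref{thm:main} --- the price-adjustment rule, the choice of which agent receives a transferred chore, and the eligibility condition for a transfer --- in weighted form, using the rescaled envy inequalities throughout. Second, I would revisit the fixed-point argument: it operates on a compact convex domain of price (or allocation) vectors and uses continuity of best-response correspondences, both of which are preserved when unweighted linear inequalities are replaced by weighted ones. Third, I would re-run the discrete post-processing using a weighted potential such as $\sum_i p(X_i)/w_i$ in place of $\sum_i p(X_i)$ and verify that each local update still makes strict progress. The fPO refinement required for Theorem~\ref{thm:fPO} then follows by the same welfare argument adapted to a weighted utilitarian objective, and the polynomial-time statement in Theorem~\ref{thm:const} carries over because the combinatorial enumeration on which it relies is insensitive to the per-agent rescaling beyond a polynomial factor in the input encoding.

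The main obstacle I expect lies in the termination step of the discrete algorithm. The unweighted potential is calibrated so that every swap or price increase strictly improves it; under weights, this calibration must be redone because the relative scales of $c_i/w_i$ across agents need no longer share the integrality structure exploited in the unweighted case, and care is needed when the $w_i$ are arbitrary rationals. Once a suitable weighted potential is shown to change monotonically with a gap bounded away from zero, all three parts of Theorem~\ref{thm:weight} follow from the unweighted proofs with only cosmetic changes, consistent with the remark that wEF1 was separated purely for readability.
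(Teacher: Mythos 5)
Your proposal matches the paper's approach: Section~\ref{sec:weight} lifts the unweighted argument exactly as you describe, by dividing every per-agent fairness comparison by the entitlement and rerunning the KKM argument with weighted price envy-freeness, the initialization with weighted invariants (I'1)--(I'2), and the same transfer algorithm. Two clarifications are worth making, though. First, keep the entitlements $\alpha_i$ separate from the LP weights $\bw$ produced by the fixed-point argument: the KKM lemma still operates on the simplex of LP weight vectors exactly as before, and the entitlements enter only through the definitions of wpEF1, $r_i := \sum_{(i,j)\in E:\,d_G(j)=1} p_j/\alpha_i$, and the invariants --- your phrasing conflates the two roles of ``weight.'' Second, the obstacle you flag as the main difficulty (recalibrating the termination potential) does not arise: the paper's potential is not price-based (note that $\sum_i \bp(\bx_i) = \bp(M)$ is constant under reallocation, so that quantity could never serve as a potential), but is the purely combinatorial quantity $\sum_{i} n(n-\mathrm{level}(i,t)) + |\overline{\bx}^t_i|$ built from shortest alternating-path distances and critical chores in the forest $H$. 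Since this depends neither on prices nor on entitlements, the termination proof (Lemma~\ref{lem:terminate}) carries over verbatim to the weighted algorithm, with no integrality or calibration issues.
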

\paragraph{Our techniques}
We briefly explain the proof techniques used in our main contribution, Theorem~\ref{thm:main}.
Let $N=\{1,2,\ldots ,n\}$ denote the set of agents, and $M=\{1,2,\ldots ,m\}$ the set of chores in the instance.
We begin by slightly perturbing the cost functions and then show the existence of an EF1 and fPO allocation for the perturbed instance.
If the perturbation is sufficiently small, the resulting allocation also satisfies EF1 and PO in the original instance.

By classical results in economic theory~\cite{negishi1960welfare, varian1976}, 
an allocation is fPO if and only if it is an optimal solution 
that minimizes a weighted social cost for some positive weight vector 
$(w_i)_{i\in N} \in \mathbb{R}^n_{>0}$.
This weighted social cost minimization can be formulated as a linear program, whose dual problem has variables $(p_j)_{j\in M}$, which can be interpreted as prices (or rewards) for each chore.

We consider a weight vector $\bw$ on the $(n-1)$-dimensional standard simplex $\Delta^{n-1}$.
We also introduce a small shrinking parameter $\tau > 0$ and define weights $\bw'$ on the slightly shrunken simplex by setting $w'_i := \tau + (1 - \tau n) w_i$ for all $i\in N$.
We then consider the weighted social cost minimization problem under $\bw'$.
By choosing $\tau$ sufficiently small, we can guarantee that for any weight vector $\bw$ and for any optimal solution $\bx$ to the LP under $\bw'$, 
there exists an agent $i$ with $w_i > 0$ who is price envy-free (pEF) with respect to $\bx$ and the unique optimal solution $\bp$ to the dual LP.
Here, we say that agent $i$ is pEF if the total price of agent $i$'s bundle is at most that of any other agent.

Using this property, we apply the KKM (Knaster--Kuratowski--Mazurkiewicz) lemma~\cite{Knaster1929},
which is a variant of a fixed-point theorem, to show that
there exists a weight vector $\bw$ such that for every agent $i$, there is an optimal allocation $\bx$ to the LP under $\bw'$ in which agent $i$ is price envy-free with respect to $\bx$ and the unique optimal solution $\bp$ to the dual LP.
Since we use the weights $\bw'$ on the shrunken simplex, it is guaranteed that all these optimal allocations are fPO.

In general, there may not exist an allocation that is both EF1 and fPO 
among these optimal allocations under the desired weights.
However, for the perturbed instance, we can prove that there exists an optimal allocation under these weights that is both EF1 and fPO. 
This completes the proof of Theorem~\ref{thm:main}.

\paragraph{Novelty and significance}
Our existence proof is particularly interesting because it combines a fixed-point theorem with discrete algorithmic techniques. 
More precisely, we use a fixed-point argument to show the existence of desirable weights, and then apply a discrete algorithm to find an EF1 and fPO allocation among the optimal allocations for these weights.

Classical results highlight the power of fixed-point theorems.
For example, 
Nash~\cite{nash1950equilibrium} proved the existence of a mixed-strategy Nash equilibrium by applying Kakutani's fixed-point theorem;
Arrow and Debreu~\cite{arrow1954existence} used the same theorem to prove the existence of market equilibria;
and Su~\cite{edward1999rental} applied Sperner's lemma to show the existence of envy-free cake-cutting.
In contrast, most existence proofs in discrete fair division have relied solely on discrete algorithms.
Our work makes a significant contribution by showing how fixed-point methods can be powerfully combined with discrete algorithms to address challenging problems in discrete fair division.

The idea of using the KKM lemma to guarantee the existence of a desirable weight vector was first introduced by Igarashi and Meunier~\cite{igarashi2025fair}.
However, they applied the KKM lemma directly without combining it with discrete algorithms.
As a result, although they handled a more complex setting with category constraints and mixed goods and chores, they could only prove the existence of a type of fairness that is weaker than EF1 (and depends on $n$) together with PO.

In contrast, our approach differs from~\cite{igarashi2025fair} in two key ways.
First, we combine the KKM lemma with discrete algorithms.
Second, we adopt the concept of price envy-freeness instead of standard envy-freeness when applying the KKM lemma.
This can be seen as a way to look at both the primal (allocation) and the dual (prices) at the same time.
Such primal-dual perspectives are well known to be powerful in combinatorial optimization, and our results suggest that it is also useful in algorithmic game theory.

Finally, the simplicity of our existence proof is also notable.
As mentioned earlier, some existing results for special cases use complicated arguments, and it seems highly difficult to extend them to general additive cost functions.
We believe that the simplicity of our proof comes from our effective use of fixed-point theorem.
We hope that our approach will inspire further research on fair division in other settings.

\subsection{Related Work}

\paragraph{Fair and efficient allocation for divisible items}
The Fisher market model, which originates from the work of Irving Fisher (see~\cite{brainard2005compute}), has been extensively studied in both economics and computer science. 
This model is renowned for its remarkable fairness and efficiency properties.
Notably, Varian~\cite{varian1974equity} showed that when agents have equal budgets, the resulting equilibrium allocation in a Fisher market is both envy-free and Pareto optimal.
For divisible goods, it is known that such equilibria can be computed in polynomial time under additive valuation functions~\cite{devanur2008market, orlin2010improved, vegh2012strongly}. However, for divisible chores, it remains an open question whether a competitive equilibrium can be computed in polynomial time even under additive valuations. An FPTAS for this problem was developed in~\cite{boodaghians2022polynomial}.

\paragraph{Fair and efficient allocation for indivisible items}
For two agents, Aziz et al.~\cite{aziz2022fair} designed an algorithm to find an EF1 and PO allocation of goods and chores, based on the classical Adjusted Winner algorithm~\cite{brams1996fair}. 
The Fisher market approach has also been used to obtain a fair and efficient allocation for indivisible items.
Barman and Krishnamurthy~\cite{barman2019proximity} showed that a PROP1 and fPO allocation can be computed in polynomial time for indivisible goods.
They also showed that $\mathrm{EF}^1_1$ and fPO allocation can be computed in polynomial time for indivisible goods.
Here, PROP1 (proportionality up to one item) and $\mathrm{EF}^1_1$ are fairness concepts weaker than EF1.
Subsequently, Br{\^a}nzei and Sandomirskiy~\cite{branzei2024algorithms} showed similar results for indivisible chores.
More recently, Garg et al.~\cite{garg2025constant} proved that any fair division instance admits a $2$-EF2 and PO allocation, as well as an $(n-1)$-EF1 and PO allocation for indivisible chores.
Here, $2$-EF2 and $(n-1)$-EF1 are approximate fairness notions that are weaker than EF1.

\paragraph{Fair and efficient allocation with constraints}
There is growing interest in fair and efficient allocation under various practical constraints, extending beyond the unconstrained setting.
Shoshan et al.~\cite{shoshan2023efficient} studied the fair division problem under category constraints, and proposed a polynomial-time algorithm for the case of two agents. Specifically, when each category consists of items that are either all goods or all chores for each agent, their algorithm finds an EF1 and PO allocation. In the more general case, where goods and chores are mixed, the algorithm finds an EF[1,1] (envy-free up to one good and one chore) and PO allocation.
Igarashi and Meunier~\cite{igarashi2025fair} extended this result to general settings with $n$ agents, proving the existence of a Pareto-optimal allocation in which each agent can be made envy-free by reallocating at most $n(n-1)$ items.
For budget constraints, Wu et al.~\cite{wu2021budget} showed that any budget-feasible allocation that maximizes the Nash social welfare achieves a $1/4$-EF1 and PO allocation for goods.
Cookson et al.~\cite{cookson2025constrained} investigated fair division under matroid constraints, showing that maximizing the Nash social welfare yields a $1/2$-EF1 and PO allocation for goods.
For a broader overview of fair division under various constraints, we refer the reader to the survey by Suksompong~\cite{suksompong2021constraints}.
\paragraph{Weighted fair division}
In practical settings, agents often have different entitlements or priorities, which naturally motivates the study of weighted fair division concepts such as weighted EF1 (wEF1).
For goods, Chakraborty et al.~\cite{chakraborty2021weighted} showed that a wEF1 allocation, though it does not necessarily guarantee PO,  can be computed using a weighted picking sequence algorithm.
In contrast, for chores, the existence of wEF1 allocations was established by Wu et al.~\cite{wu2025weighted} through a modification of the weighted picking sequence algorithm.
Several of the results already mentioned for the unweighted setting are also known to extend to the weighted case~\cite{chakraborty2021weighted,wu2025weighted,branzei2024algorithms,garg2024weighted}.
For a more detailed overview, we refer the reader to the survey by Suksompong~\cite{suksompong2025weighted}.

\subsection{Organization}
In Section~\ref{sec:pre}, we introduce the fair division model, the relevant notions of fairness and efficiency, and the necessary graph terminology.
We also present the characterization of fPO and define key notions such as price envy-freeness and non-degenerate instances.
Section~\ref{sec:main} provides the proof of the existence of an EF1 and PO allocation.
Section~\ref{sec:fPO} shows the existence of an EF1 and fPO allocation.
Section~\ref{sec:const} provides the proof of Theorem~\ref{thm:const}, and Section~\ref{sec:weight} provides the proof of Theorem~\ref{thm:weight}.
Finally, Section~\ref{sec:con} discusses the conclusions and directions for future work.
Some proofs are deferred to Appendix~\ref{ap:1}.

\section{Preliminaries}\label{sec:pre}
For positive integer $\ell$, let $[\ell]$ denote $\{1,\ldots, \ell\}$.
\paragraph{The fair division model for indivisible chores}
A {\it fair division instance} $I$ is represented by a tuple $I = (N,M,\{c_i\}_{i\in N})$, where $N=[n]$ denotes a set of $n$ agents, $M=[m]$ denotes a set of $m$ indivisible chores, and $\{c_i\}_{i\in N}$ represents a set of cost functions of each agent $i\in N$.
In this paper, we assume that each cost function $c_i: 2^M \rightarrow \mathbb{R}_{\ge 0}$ is additive, i.e., for any $i \in N$ and any $S \subseteq M$, we have $c_i(S) = \sum_{j \in S} c_i(\{j\})$, with $c_i(\emptyset) = 0$ for all $i \in N$.
To simplify notation, we will write $c_{ij}$ instead of $c_i(\{j\})$ for a singleton chore $j\in M$.
We denote $c_{\max}:=\max_{i \in N, j \in M} c_{ij}$ and $c_{\min}:=\min_{i \in N, j \in M} c_{ij}$.

An {\it allocation} $\bx=(\bx_i)_{i \in N}$ is an $n$-partition of $M$, where $\bx_i\subseteq M$ is the {\it bundle} allocated to agent $i$.
Given an allocation $\bx$, the cost of agent $i\in N$ for the bundle $\bx_i$ is $c_i(\bx_i)=\sum_{j\in \bx_i} c_{ij}$.
A {\it fractional allocation} $\bx = (\bx_i)_{i \in N}$ represents a fractional assignment of the chores to the agents, where each $\bx_i$ is a vector $(x_{ij})_{j \in M}$, and $x_{ij} \in [0,1]$ denotes the fraction of chore $j$ allocated to agent $i$. Additionally, this allocation satisfies the condition that $\sum_{i \in N} x_{ij} = 1$ for each chore $j \in M$.
Given a fractional allocation $\bx$, the cost of agent $i\in N$ for $\bx_i$ is $c_i(\bx_i)=\sum_{j\in M} x_{ij}c_{ij}$.
We will use ``allocation'' to mean an integral allocation and specify ``fractional allocation'' otherwise.

\paragraph{Fairness notions}
Given a fair division instance $I=(N,M,\{c_i\}_{i\in N})$ and an allocation $\bx=(\bx_i)_{i \in N}$, 
we say that an agent $i\in N$ {\it envies} another agent $i'\in N$ if $c_i(\bx_i) > c_i(\bx_{i'})$.
An allocation $\bx$ is said to be {\it envy-free} (EF) if no agent envies any other agent.
An allocation $\bx$ is said to be {\it envy-free up to one chore} (EF1) if for any pair of agents $i,i'\in N$ where $i$ envies $i'$, there exists a chore $j\in \bx_i$ such that $c_i(\bx_i\setminus \{j\})\le  c_i(\bx_{i'})$.

\paragraph{Efficiency notions}
Given a fair division instance $I=(N,M,\{c_i\}_{i\in N})$ and an allocation $\bx=(\bx_i)_{i \in N}$, 
we say that $\bx$ is {\it Pareto dominated} by another allocation $\by$ if $c_i(\by_i)\le c_i(\bx_i)$ for every agent $i\in N$, and $c_{i'}(\by_{i'}) < c_{i'}(\bx_{i'})$ for some agent $i'\in N$.
An allocation $\bx$ is said to be {\it Pareto optimal} (PO) if $\bx$ is not Pareto dominated by any other allocation.
Similarly, a (fractional) allocation $\bx$ is said to be {\it fractionally Pareto optimal} (fPO) if $\bx$ is not Pareto dominated by any fractional allocation.
Note that a fractionally Pareto optimal allocation is also Pareto optimal, but not vice versa.

\paragraph{Characterization of fPO by linear programming}
Classical results in economic theory~\cite{negishi1960welfare, varian1976} establish a connection between fractional Pareto-optimal allocations and weighted utilitarian welfare. For divisible chores with additive cost functions, this result is stated and simply proved in~\cite{branzei2024algorithms}.

\begin{lemma}[Lemma 2.1 in~\cite{branzei2024algorithms}]\label{lem:chaoffPO}
A fractional allocation $\bx$ is fPO if and only if there exists a positive weight vector  $\bw = (w_i)_{i \in N} \in \mathbb{R}_{>0}^n$ such that $\bx$ minimizes the weighted social cost $\sum_{i \in N} w_i c_i(\bx_i)$ over all fractional allocations.
\end{lemma}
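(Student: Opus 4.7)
The plan is to prove the two directions separately, with all the substance in the $(\Rightarrow)$ direction. The $(\Leftarrow)$ direction is immediate: if $\bx$ minimizes $\sum_i w_i c_i(\bx_i)$ for some $\bw > 0$, then any $\by$ Pareto-dominating $\bx$ would satisfy $c_i(\by_i) \le c_i(\bx_i)$ for all $i$ with strict inequality somewhere, so strict positivity of $\bw$ forces $\sum_i w_i c_i(\by_i) < \sum_i w_i c_i(\bx_i)$, contradicting optimality of $\bx$.

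For the $(\Rightarrow)$ direction, the plan is to apply LP duality to a carefully chosen auxiliary LP. Given fPO $\bx$, I would write $c_i^* := c_i(\bx_i)$ and consider
\[
\min \ \sum_{i,j} c_{ij}\, y_{ij} \ \text{ s.t. } \ \sum_i y_{ij} = 1 \ \forall j, \ \sum_j c_{ij}\, y_{ij} \le c_i^* \ \forall i, \ y_{ij} \ge 0.
\]
I would first verify that $\bx$ is optimal for this LP: any feasible $\by$ with strictly smaller objective has $c_i(\by_i) \le c_i^*$ for every $i$ by feasibility and $c_i(\by_i) < c_i^*$ for at least one $i$ (since the totals differ), so $\by$ would fractionally Pareto-dominate $\bx$, contradicting fPO.

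I would then apply LP duality. Introducing $p_j \in \mathbb{R}$ as duals for the assignment constraints and $\lambda_i \ge 0$ as duals for the cost-bound constraints, strong duality holds because the primal is feasible (witnessed by $\bx$) and bounded below by $0$. The dual constraints read $p_j \le (1+\lambda_i)\, c_{ij}$ for all $i, j$, and complementary slackness gives $x_{ij} > 0 \Rightarrow p_j = (1+\lambda_i)\, c_{ij}$. Setting $w_i := 1 + \lambda_i \ge 1 > 0$, the pair $(\bx, \bp)$ exactly satisfies the KKT optimality conditions for the weighted-social-cost LP $\min \sum_{i,j} w_i c_{ij}\, y_{ij}$ over fractional allocations, so $\bx$ minimizes $\sum_i w_i c_i(\by_i)$ as required.

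The main subtlety I anticipate is ensuring \emph{strict} positivity of the weights rather than mere nonnegativity. A naive separating-hyperplane approach --- separating the cost-image polytope from the open cone of vectors componentwise smaller than $(c_1^*, \ldots, c_n^*)$ --- would only yield $\bw \ge 0$ and require extra work to upgrade each $w_i$ to $w_i > 0$. My LP-based plan sidesteps this: the uniform unit weight on every $c_i$ in the auxiliary LP's objective propagates through duality to guarantee $w_i \ge 1$ automatically, which is precisely why I formulate the auxiliary LP with the total cost (rather than some other objective) and why the constraints $\sum_j c_{ij}\, y_{ij} \le c_i^*$ appear with $\lambda_i \ge 0$.
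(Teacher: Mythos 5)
The paper does not prove this lemma at all: it imports it verbatim as Lemma~2.1 of Br\^anzei--Sandomirskiy, so there is no in-paper argument to compare against. Your proposal is a correct, self-contained proof. The $(\Leftarrow)$ direction is exactly the standard one-line argument. For $(\Rightarrow)$, your auxiliary LP is well posed (its feasible region is a nonempty compact polytope containing $\bx$, so strong duality and dual attainment hold), your verification that $\bx$ is optimal for it is sound (any feasible $\by$ with strictly smaller total cost must satisfy $c_i(\by_i)\le c_i^*$ for all $i$ with at least one strict inequality, contradicting fPO), and the dual constraints and complementary slackness do come out as $p_j\le(1+\lambda_i)c_{ij}$ and $x_{ij}>0\Rightarrow p_j=(1+\lambda_i)c_{ij}$, which is precisely primal-dual optimality for $\mathbf{LP}(\bw)$ with $w_i=1+\lambda_i\ge 1$. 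Your diagnosis of the main subtlety is also the right one: the classical separating-hyperplane route (supporting the achievable cost polytope at the Pareto-minimal point $(c_i^*)_{i\in N}$) only yields $\bw\ge 0$ directly and needs an extra ``proper efficiency for polytopes'' step to upgrade to $\bw>0$; your choice of the unit-cost objective in the auxiliary LP bakes that upgrade into the duality and is the cleaner way to present it. One cosmetic remark: since the paper allows $c_{ij}=0$ only after a preprocessing step in Section~3, and the preliminaries assume $c_{ij}\ge 0$, your boundedness claim (``bounded below by $0$'') is fine, but it is even simpler to note that the feasible region is compact so no boundedness discussion is needed.
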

Given a positive weight vector $\bw=(w_i)_{i\in N} \in \mathbb{R}^n_{>0}$, 
consider the following primal-dual pair of linear programs:
\begin{center}
\begin{minipage}[t]{0.48\textwidth}
\raggedright
$\mathbf{LP}(\bw)$
\begin{align*}
\text{Min.} \quad & \sum_{i \in N} \sum_{j \in M} w_i  c_{ij} x_{ij} \\
\text{s.t.} \quad & \sum_{i \in N} x_{ij} = 1 \quad \forall j \in M \\
& x_{ij} \geq 0 \quad \forall i \in N,\ \forall j \in M
\end{align*}
\end{minipage}
\hfill
\begin{minipage}[t]{0.48\textwidth}
\raggedright
$\mathbf{Dual\text{-}LP}(\bw)$
\begin{align*}
\text{Max.} \quad & \sum_{j \in M} p_j \\
\text{s.t.} \quad & p_j \leq w_i c_{ij} \quad \forall i \in N,\ \forall j \in M
\end{align*}
\end{minipage}
\end{center}
Each dual variable $p_j$ can be interpreted as the price (or reward) of chore $j$.
Note that the complementary slackness condition is given by $$x_{ij} > 0 \implies p_j = w_i c_{ij} \quad \forall i \in N, \forall j \in M.$$
Since the feasible region of $\mathbf{LP}(\bw)$ represents the set of all fractional allocations, and each integral feasible solution of $\mathbf{LP}(\bw)$ corresponds to an allocation\footnote{If $x_{ij} = 1$, then chore $j$ is allocated to agent $i$.}, Lemma~\ref{lem:chaoffPO} can be restated for integral allocations as follows, providing a characterization of fPO in the discrete setting.

\begin{lemma}\label{lem:cha}
An allocation $\bx$ is fPO if and only if there exists a positive weight vector $\bw = (w_i)_{i \in N} \in \mathbb{R}_{>0}^n$ such that $\bx$ is an optimal allocation to $\mathbf{LP}(\bw)$.
\end{lemma}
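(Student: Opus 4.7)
The plan is to derive Lemma~\ref{lem:cha} as a direct corollary of Lemma~\ref{lem:chaoffPO}, using two observations about $\mathbf{LP}(\bw)$: its feasible region coincides exactly with the set of fractional allocations (the constraints are precisely the fractional-allocation constraints), and its objective equals the weighted social cost $\sum_{i \in N} w_i c_i(\bx_i)$. Thus ``optimal for $\mathbf{LP}(\bw)$'' and ``minimizes weighted social cost over fractional allocations'' are literally the same statement, so it only remains to reconcile the integral-versus-fractional aspects of the definitions.

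For the forward direction, I would start from an integral allocation $\bx$ that is fPO. Since the fPO definition applies verbatim to integral allocations (an integral allocation is just a particular fractional allocation), Lemma~\ref{lem:chaoffPO} immediately supplies a positive weight vector $\bw$ under which $\bx$ minimizes $\sum_{i \in N} w_i c_i(\bx_i)$ over all fractional allocations. This minimization is exactly optimality in $\mathbf{LP}(\bw)$, which gives the desired conclusion.

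For the backward direction, I would assume $\bx$ is an integral allocation that is optimal for $\mathbf{LP}(\bw)$ for some $\bw \in \mathbb{R}^n_{>0}$, and argue by contradiction. If some fractional allocation $\by$ Pareto dominated $\bx$, then $c_i(\by_i) \le c_i(\bx_i)$ for every $i \in N$, with strict inequality for at least one $i^* \in N$. Multiplying by the strictly positive weights $w_i$ and summing yields
\[
\sum_{i \in N} w_i c_i(\by_i) \;<\; \sum_{i \in N} w_i c_i(\bx_i),
\]
which contradicts the optimality of $\bx$ in $\mathbf{LP}(\bw)$. Hence no such $\by$ exists, and $\bx$ is fPO.

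There is essentially no major obstacle, as the statement is a restatement of Lemma~\ref{lem:chaoffPO} for integral allocations. The one place where care is needed is the backward direction: the strict positivity of every $w_i$ is what allows the single strict inequality at $i^*$ to propagate into a strict inequality in the weighted sum. If even one weight were zero, a Pareto improvement at that coordinate could be invisible to the objective, and the implication would fail; this is precisely why the lemma insists on $\bw \in \mathbb{R}^n_{>0}$ rather than $\mathbb{R}^n_{\ge 0}$.
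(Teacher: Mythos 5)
Your proposal is correct and matches the paper's treatment: the paper presents Lemma~\ref{lem:cha} as an immediate restatement of Lemma~\ref{lem:chaoffPO}, justified by exactly your two observations (the feasible region of $\mathbf{LP}(\bw)$ is the set of fractional allocations and its objective is the weighted social cost), with no further proof given. Your explicit contradiction argument for the backward direction simply re-derives the easy half of Lemma~\ref{lem:chaoffPO} and correctly pinpoints where strict positivity of the weights is needed.
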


\paragraph{Price envy-freeness}
Price envy-freeness (pEF) and price envy-freeness up to one item (pEF1) are key concepts introduced by Barman et al.~\cite{barman2018finding}. They are widely used for finding EF1 and PO allocations for both goods and chores~\cite{barman2018finding, ebadian2022fairly, mahara2024polynomial, garg2022fair, garg2023new, garg2024weighted}.

Let $I = (N, M, \{c_i\}_{i \in N})$ be a fair division instance. Let $\bx = (\bx_i)_{i \in N}$ be an allocation and $\bp = (p_j)_{j \in M}$ a price vector, where each $p_j$ is a positive real number. For any subset of chores $S \subseteq M$, let $\bp(S) := \sum_{j \in S} p_j$ denote the total price of the chores in $S$.
An agent $i \in N$ is said to \emph{price envy} another agent $i' \in N$ if $\bp(\bx_i) > \bp(\bx_{i'})$, and is \emph{price envy-free} if this inequality does not hold for any other agent. 
An allocation $\bx$ is called \emph{price envy-free} (pEF) if every agent is price envy-free. An allocation $\bx$ is called \emph{price envy-free up to one chore} (pEF1) if, for any pair of agents $i, i' \in N$ where $i$ price envies $i'$, there exists a chore $j \in \bx_i$ with $\bp(\bx_i \setminus \{j\}) \le \bp(\bx_{i'})$.

For any $S \subseteq M$, let $\hat{\bp}(S)$ denote the total price of $S$ after removing the most expensive chore. Formally, $\hat{\bp}(S)$ is defined as follows:
\begin{align*}
  \hat{\bp}(S):=
  \left\{
    \begin{array}{ll}
      \min_{j\in S} \bp(S\setminus \{j\}) & {\rm if}~S \neq \emptyset,\\
      0 & {\rm otherwise}.
    \end{array}
  \right.
\end{align*}

\begin{observation}\label{ob:pEF1}
An allocation $\bx$ is pEF1 if and only if $\max_{i \in N} \hat{\bp}(\bx_{i}) \le \min_{i' \in N} \bp(\bx_{i'})$ holds.
\end{observation}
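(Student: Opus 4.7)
The plan is to prove both implications by directly unpacking the definitions, splitting into cases based on whether one agent price-envies another.

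For the forward direction, assuming $\bx$ is pEF1, I would fix any pair $i, i' \in N$ and establish $\hat{\bp}(\bx_i) \le \bp(\bx_{i'})$. If $i$ price-envies $i'$, the pEF1 condition directly yields a chore $j \in \bx_i$ with $\bp(\bx_i \setminus \{j\}) \le \bp(\bx_{i'})$, and by the definition of $\hat{\bp}$ as a minimum over single-chore removals, $\hat{\bp}(\bx_i) \le \bp(\bx_i \setminus \{j\})$. If $i$ does not price-envy $i'$, then $\bp(\bx_i) \le \bp(\bx_{i'})$, and since every $p_j$ is positive, removing any single chore can only decrease the total, giving $\hat{\bp}(\bx_i) \le \bp(\bx_i) \le \bp(\bx_{i'})$ when $\bx_i \neq \emptyset$; the degenerate case $\bx_i = \emptyset$ is handled by the convention $\hat{\bp}(\emptyset) = 0 \le \bp(\bx_{i'})$. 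Since this holds pointwise, taking the maximum over $i$ and the minimum over $i'$ yields $\max_{i \in N} \hat{\bp}(\bx_i) \le \min_{i' \in N} \bp(\bx_{i'})$.

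For the reverse direction, assume the inequality and fix any pair $i, i' \in N$ for which $i$ price-envies $i'$, i.e., $\bp(\bx_i) > \bp(\bx_{i'})$. Since $\bp(\bx_{i'}) \ge 0$ and the prices are strictly positive, this strict inequality forces $\bx_i \neq \emptyset$, so $\hat{\bp}(\bx_i)$ is attained by some $j^* \in \bx_i$ with $\bp(\bx_i \setminus \{j^*\}) = \hat{\bp}(\bx_i)$. Combining this with the hypothesis, $\bp(\bx_i \setminus \{j^*\}) = \hat{\bp}(\bx_i) \le \max_{k \in N} \hat{\bp}(\bx_k) \le \min_{k' \in N} \bp(\bx_{k'}) \le \bp(\bx_{i'})$, so $j^*$ serves as the witness required by pEF1 for the pair $(i, i')$.

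Because the statement is essentially a reformulation of the definition, I do not expect any genuine obstacle. The only point requiring care is the empty-bundle edge case, which is absorbed into the definition of $\hat{\bp}$, and the fact that the pEF1 condition need only be checked against agents who are price-envied, whereas the max/min formulation sums up information across all pairs; this is reconciled by observing that when $i$ does \emph{not} price-envy $i'$, the inequality $\hat{\bp}(\bx_i) \le \bp(\bx_{i'})$ follows for free from the monotonicity of $\bp$ under item removal.
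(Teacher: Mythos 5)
Your proof is correct and follows essentially the same route as the paper, which simply asserts the pairwise equivalence ``$\bx$ is pEF1 $\Leftrightarrow$ $\hat{\bp}(\bx_i)\le\bp(\bx_{i'})$ for all $i,i'$'' directly from the definition and then passes to the max/min form. Your added case analysis (price-envying vs.\ not, and the empty-bundle convention) just makes explicit the monotonicity argument the paper leaves implicit.
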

\begin{proof}
From the definition, we directly obtain 
    \begin{align*}
    \text{An allocation}~\bx~\text{is pEF1}
     &\Longleftrightarrow \forall i, i' \in N, \hat{\bp}(\bx_{i}) \le \bp(\bx_{i'})  \\
     &\Longleftrightarrow \max_{i \in N} \hat{\bp}(\bx_{i}) \le \min_{i' \in N} \bp(\bx_{i'}).
    \end{align*}
\end{proof}

Similar to the case of goods~\cite{barman2018finding}, pEF1 implies EF1 for a suitable choice of prices $\bp$.

\begin{lemma}\label{lem:pEF1isEF1}
Given a positive weight vector $\bw \in \mathbb{R}_{>0}^n$, let $\bx$ be an optimal allocation to $\mathbf{LP}(\bw)$, and let $\bp$ be an optimal solution to $\mathbf{Dual\text{-}LP}(\bw)$.  
If the allocation $\bx$ is pEF1, then $\bx$ is EF1.
\end{lemma}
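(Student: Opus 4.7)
The plan is to extract two LP-optimality identities that directly link costs and prices, and then read off EF1 from pEF1 by a straightforward scaling argument. The two ingredients I would use are: (i) dual feasibility of $\bp$, which gives $p_j \le w_i c_{ij}$ for all $i \in N$ and $j \in M$; and (ii) complementary slackness for the primal-dual pair $(\bx, \bp)$, which (since $\bx$ is integral) gives $p_j = w_i c_{ij}$ whenever $j \in \bx_i$.

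From (ii), summing over $j \in \bx_i$ yields $c_i(\bx_i) = \bp(\bx_i)/w_i$ exactly, while (i) gives the one-sided bound $c_i(\bx_{i'}) \ge \bp(\bx_{i'})/w_i$ for any other agent $i'$ (the ``perceived'' cost is at least the normalized price). These are the only facts I need from the LP setup; the rest is arithmetic in the positive weights $w_i$.

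Now I would fix any pair $i, i'$ for which agent $i$ envies $i'$ in costs, i.e., $c_i(\bx_i) > c_i(\bx_{i'})$. Chaining the two relations above gives $\bp(\bx_i)/w_i > \bp(\bx_{i'})/w_i$, hence $\bp(\bx_i) > \bp(\bx_{i'})$ because $w_i > 0$. Thus $i$ also price-envies $i'$, and the hypothesis pEF1 supplies a chore $j \in \bx_i$ with $\bp(\bx_i \setminus \{j\}) \le \bp(\bx_{i'})$. I would then show the same chore $j$ witnesses EF1: using the exact identity on $\bx_i$,
\begin{equation*}
c_i(\bx_i \setminus \{j\}) \;=\; c_i(\bx_i) - c_{ij} \;=\; \frac{\bp(\bx_i) - p_j}{w_i} \;=\; \frac{\bp(\bx_i \setminus \{j\})}{w_i} \;\le\; \frac{\bp(\bx_{i'})}{w_i} \;\le\; c_i(\bx_{i'}),
\end{equation*}
where the last inequality is the one-sided bound from dual feasibility applied to $i'$'s bundle.

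There is no real obstacle here; the only thing to watch is that the direction of the dual inequality $p_j \le w_i c_{ij}$ matches the chore convention (prices lower-bound costs, so the ``perceived cost'' of another agent's bundle is at least its normalized price), which is exactly what makes the chain $c_i(\bx_i \setminus \{j\}) \le \bp(\bx_{i'})/w_i \le c_i(\bx_{i'})$ go through in the correct direction. I would also note that positivity of $\bw$ is essential for preserving inequalities when multiplying by $w_i$, and that integrality of $\bx$ is used only to apply complementary slackness coordinatewise.
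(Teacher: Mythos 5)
Your proof is correct and follows essentially the same route as the paper's: complementary slackness gives the exact identity $w_i c_i(\bx_i\setminus\{j\}) = \bp(\bx_i\setminus\{j\})$, dual feasibility gives $\bp(\bx_{i'}) \le w_i c_i(\bx_{i'})$, and chaining these with the pEF1 guarantee yields EF1. The one small difference is that you explicitly verify that cost-envy implies price-envy before invoking the pEF1 hypothesis (whose antecedent is that $i$ \emph{price}-envies $i'$), a step the paper's proof leaves implicit; this is a welcome bit of extra care rather than a divergence in method.
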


\begin{proof}
Let $i, i' \in N$ be any pair of agents such that $i$ envies $i'$.  
Since $i$ envies $i'$, agent $i$ must have at least one chore.
Since the allocation $\bx$ is pEF1, there exists a chore $j \in \bx_i$ such that $\bp(\bx_i \setminus \{j\}) \le \bp(\bx_{i'}).$
Since $\bx$ and $\bp$ are optimal solutions to the primal and dual problems, they satisfy the complementary slackness conditions by the complementary slackness theorem. Hence, we have
\begin{align*}
w_i c_i(\bx_i \setminus \{j\}) &= \bp(\bx_i \setminus \{j\}) \\
&\le \bp(\bx_{i'}) \\
&\le w_i c_i(\bx_{i'}), 
\end{align*}
where the last inequality follows from the dual feasibility condition.

Therefore, we obtain $c_i(\bx_i \setminus \{j\}) \le c_i(\bx_{i'})$.
Since $i$ and $i'$ are arbitrary, it follows that $\bx$ is EF1.
\end{proof}

\paragraph{Graph terminology}
Throughout this paper, we consider only undirected graphs.

Let $G = (V, E)$ be an undirected graph.  
The graph $G$ is said to be \emph{bipartite} if its vertex set $V$ can be partitioned into two disjoint sets $L$ and $R$ such that every edge joins a vertex in $L$ to a vertex in $R$.  
In this case, we write $G = (L, R; E)$.
A bipartite graph $G = (L, R; E)$ is said to be \emph{complete} if every vertex in $L$ is adjacent to every vertex in $R$.

A \emph{path} is a sequence of distinct vertices $(v_1, v_2, \ldots, v_k)$ such that $(v_i, v_{i+1}) \in E$ for all $i \in [k-1]$.  
A \emph{cycle} is a sequence of vertices $(v_1, v_2, \ldots, v_k, v_{k+1})$ such that 
$(v_i, v_{i+1}) \in E$ for all $i \in [k]$, 
$v_1 = v_{k+1}$, 
$v_1, \ldots, v_k$ are all distinct, 
and $k \ge 3$.
A \emph{forest} is an acyclic graph; that is, a graph containing no cycles.
A \emph{matching} is a set of pairwise disjoint edges.  
A matching is said to \emph{cover} a subset $U \subseteq V$ if every vertex in $U$ is incident to some edge in the matching.

For a subset $S \subseteq V$, let $\Gamma_G(S)$ denote the set of \emph{neighbors} of $S$ in $G$, defined by 
$$
\Gamma_G(S) = \{ v \in V \setminus S \mid (s, v) \in E \text{ for some } s \in S \}.
$$
Given a subset $S \subseteq V$, the \emph{induced subgraph} $G[S]$ is the graph with vertex set $S$ and edge set $\{(u,v) \in E \mid u, v \in S\}$.
For a vertex $v \in V$, let $d_G(v)$ denote the \emph{degree} of $v$.

The following is a classical result in discrete mathematics.
\begin{theorem}[Hall's theorem~\cite{hall1935representatives}]\label{thm:hall}
Let $G=(L, R; E)$ be a bipartite graph. 
Then, $G$ has a matching that covers $L$ if and only if $|S| \le |\Gamma_G(S)|$ for any $S\subseteq L$.
\end{theorem}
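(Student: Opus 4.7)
The plan is to prove Hall's theorem by induction on $|L|$. The forward (necessity) direction is immediate: if $M$ is a matching covering $L$, then for any $S \subseteq L$, the edges of $M$ incident to $S$ inject $S$ into $\Gamma_G(S)$, so $|\Gamma_G(S)| \ge |S|$. The real work is the converse (sufficiency), and there I would set up an induction on $|L|$, with the base case $|L|=1$ being trivial since $|\Gamma_G(L)| \ge 1$ provides some edge to use.

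For the inductive step with $|L| \ge 2$, I would split into two cases according to whether the Hall condition has slack everywhere or not. In the easy case, suppose $|\Gamma_G(S)| \ge |S|+1$ for every nonempty proper subset $S \subsetneq L$. Then pick any vertex $u \in L$ together with any neighbor $v \in \Gamma_G(\{u\})$, delete both, and observe that for every $S \subseteq L \setminus \{u\}$ the inequality $|\Gamma_{G'}(S)| \ge |\Gamma_G(S)| - 1 \ge |S|$ still holds in the reduced bipartite graph $G'$. The inductive hypothesis then produces a matching covering $L \setminus \{u\}$, which together with the edge $(u,v)$ gives a matching covering $L$.

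The more delicate case is when there exists a nonempty proper subset $S \subsetneq L$ with $|\Gamma_G(S)| = |S|$ (a ``tight'' set). Here I would apply the induction twice. First, the induced bipartite graph on $S \cup \Gamma_G(S)$ trivially inherits Hall's condition from $G$ and yields, by induction, a matching $M_1$ that covers $S$. Second, consider the graph $G''$ obtained by deleting $S \cup \Gamma_G(S)$; for any $T \subseteq L \setminus S$ one has
\[
|\Gamma_{G''}(T)| = |\Gamma_G(T \cup S)| - |\Gamma_G(S)| \ge |T \cup S| - |S| = |T|,
\]
so Hall's condition holds in $G''$ and induction yields a matching $M_2$ covering $L \setminus S$. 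Since $M_1$ and $M_2$ live on disjoint vertex sets, their union is a matching covering $L$.

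The main obstacle, and the only nontrivial step, is the verification of Hall's condition on the residual graph $G''$ in the tight case; this is where the hypothesis that $S$ is tight (equality, not just inequality) is essential, and it is easy to get tangled in the set-theoretic bookkeeping $\Gamma_G(T \cup S) \supseteq \Gamma_{G''}(T) \cup \Gamma_G(S)$ if one is not careful. Everything else, including the reduction in the slack case, is routine once the case split is set up correctly.
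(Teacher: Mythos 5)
Your proof is correct: it is the classical Halmos--Vaughan induction on $|L|$, with the standard case split between the situation where Hall's condition holds with slack on every nonempty proper subset and the situation where a tight set exists, and both the neighborhood bound $|\Gamma_{G'}(S)| \ge |\Gamma_G(S)| - 1$ in the slack case and the identity $|\Gamma_{G''}(T)| = |\Gamma_G(T \cup S)| - |\Gamma_G(S)|$ in the tight case are justified correctly. The paper itself states Hall's theorem only as a cited classical result and provides no proof, so there is nothing in the paper to compare your argument against; your writeup is a valid, self-contained proof of the cited statement.
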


\paragraph{Non-degenerate instance}
The notion of non-degenerate instances used in this paper has appeared across various problem settings~\cite{duan2016improved, branzei2024algorithms, bogomolnaia2019dividing}, and it plays a crucial role in our existence proof.

Given a fair division instance $I=(N,M,\{c_i\}_{i\in N})$, consider a cycle $C$ in a complete bipartite graph $(N,M; E)$ given by $C= (i_1,j_1,i_2,j_2,\ldots , i_\ell, j_\ell, i_{\ell+1})$.
We define the product of cost along the cycle as 
$$\pi(C):=\prod_{k=1}^\ell \frac{c_{i_kj_k}}{c_{i_{k+1}j_k}}.$$
We say that a fair division instance $I=(N,M,\{c_i\}_{i\in N})$ is {\it non-degenerate} if the product $\pi(C) \neq 1$ for any cycle $C$ in the complete bipartite graph $(N,M,E)$.

We can make an instance non-degenerate by perturbing each cost value.  
In this paper, we adopt the perturbation technique used in~\cite{duan2016improved}.
Given any instance $I$, we construct a perturbed instance $I^\epsilon = (N, M, \{c^{(\epsilon)}_i\}_{i \in N})$, 
where each perturbed cost is defined as $c^{(\epsilon)}_{ij} := c_{ij} q_{ij}^\epsilon$\footnote{The superscript~$\epsilon$ in~$c^{(\epsilon)}_{ij}$ denotes an index indicating dependence on~$\epsilon$, 
whereas the superscript~$\epsilon$ in~$q_{ij}^{\epsilon}$ denotes exponentiation.} for the original cost $c_{ij}$.  
Here, each $q_{ij}$ denotes the $(m(i - 1) + j)$-th smallest prime number.

We define two positive quantities $\delta$ and $\delta'$ associated with the instance $I$ as follows:

$$
\delta := \min_{i\in N} \min_{S,T: c_i(S)\neq c_i(T)} |c_i(S)-c_i(T)|,
$$

$$
\delta' := \min \left\{
\left| \prod_{k=1}^\ell c_{i_k j_k} - \prod_{k=1}^\ell c_{i_{k+1} j_k} \right|
\;\middle|\;
C = (i_1, j_1, \dots, i_{\ell}, j_{\ell}, i_{\ell+1}) \text{ is a cycle in }  (N, M; E),\ \pi(C) \neq 1
\right\}.
$$

The following three lemmas show that, for sufficiently small $\epsilon > 0$, any EF1 and fPO allocation in the perturbed instance $I^{\epsilon}$ is also EF1 and PO in the original instance $I$.
Proofs of these lemmas are given in Appendix~\ref{ap:1}.
\begin{lemma}\label{lem:non-degenerate}
Let $I = (N, M, \{c_i\}_{i \in N})$ be a fair division instance. 
If 
$0 < \epsilon < \log_{(q_{nm})^n}\left( 1 + \frac{\delta'}{2  (c_{\max})^n} \right)$, 
then the perturbed instance $I^\epsilon$ is non-degenerate.
\end{lemma}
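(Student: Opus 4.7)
The plan is to fix an arbitrary cycle $C = (i_1, j_1, i_2, j_2, \ldots, i_\ell, j_\ell, i_{\ell+1})$ in the complete bipartite graph $(N,M;E)$ (with $i_{\ell+1} = i_1$) and show that $\pi^\epsilon(C) \neq 1$ under the assumed bound on $\epsilon$. Writing
\[
\pi^\epsilon(C) = \frac{A\alpha}{B\beta}, \quad \text{where } A := \prod_{k=1}^\ell c_{i_k j_k},\ B := \prod_{k=1}^\ell c_{i_{k+1} j_k},\ \alpha := \prod_{k=1}^\ell q_{i_k j_k}^\epsilon,\ \beta := \prod_{k=1}^\ell q_{i_{k+1} j_k}^\epsilon,
\]
the task reduces to proving $A\alpha \neq B\beta$. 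The key structural observation is that $\alpha \neq \beta$ for every $\epsilon > 0$: the two multisets of pairs $\{(i_k, j_k)\}_{k=1}^\ell$ and $\{(i_{k+1}, j_k)\}_{k=1}^\ell$ differ, because each chore $j_k$ is paired with $i_k$ in the first multiset but with the distinct agent $i_{k+1}$ in the second, so by unique factorization (the primes $q_{ij}$ being assigned distinctly to each agent--chore pair) we obtain $\prod q_{i_k j_k} \neq \prod q_{i_{k+1} j_k}$, and hence $\alpha \neq \beta$.

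I would then split into two cases. In the easy case $\pi(C) = 1$, we have $A = B$, so $\pi^\epsilon(C) = \alpha/\beta \neq 1$ by the observation above, independent of $\epsilon$. In the harder case $\pi(C) \neq 1$, the definition of $\delta'$ gives $|A - B| \geq \delta'$, and I would exploit the decomposition
\[
A\alpha - B\beta = (A - B)\alpha + B(\alpha - \beta).
\]
The first term has absolute value at least $\delta' \cdot \alpha \geq \delta'$ since $\alpha \geq 1$, while the second can be bounded using $B \leq c_{\max}^\ell$ together with $\alpha, \beta \in [1, q_{nm}^{n\epsilon}]$, which gives $|\alpha - \beta| \leq q_{nm}^{n\epsilon} - 1$. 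The stated range for $\epsilon$ is designed so that this perturbation is strictly smaller than $\delta'$, thereby forcing $A\alpha - B\beta \neq 0$.

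The main obstacle will be the quantitative bookkeeping needed to match the exact form of the bound. A naive estimate via the above decomposition gives a threshold of the rough form $\log_{q_{nm}^n}\bigl(1 + \delta'/c_{\max}^n\bigr)$, whereas the lemma advertises the cleaner $\log_{n q_{nm}}\bigl(1 + \delta'/(2nc_{\max})\bigr)$. Bridging this gap likely requires a finer handling of $|\alpha - \beta|$, for instance by telescoping the prime-power ratio across the $\ell$ factors, or applying a Bernoulli-type inequality such as $q^{n\epsilon} - 1 \leq n(q^\epsilon - 1) q^{(n-1)\epsilon}$, in order to convert $q_{nm}^{n\epsilon}$ into a form compatible with $(nq_{nm})^\epsilon$ and replace the crude factor $c_{\max}^n$ by $nc_{\max}$; once the estimate is put in this shape, the claimed upper bound on $\epsilon$ should drop out.
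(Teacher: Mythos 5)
Your overall strategy is the same as the paper's: split on whether $\pi(C)=1$, use the coprimality of $P=\prod_k q_{i_k j_k}$ and $Q=\prod_k q_{i_{k+1} j_k}$ (products over disjoint sets of agent--chore pairs, hence of disjoint sets of primes) to dispose of the first case, and in the second case show that the multiplicative perturbation is too small to close a gap of size $\delta'$. Both your case analysis and the threshold you actually derive, $\epsilon<\log_{q_{nm}^{n}}\bigl(1+\delta'/c_{\max}^{n}\bigr)$ (with $c_{\max}^{n}$ read as $\max\{1,c_{\max}\}^{n}$ if costs may lie below $1$), are correct.

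The ``quantitative bookkeeping'' you defer to the end, however, cannot be carried out: the bound advertised in the lemma is genuinely too weak, and no Bernoulli-type refinement will recover it. The irreparable discrepancy is not the bounded factor between $(nq_{nm})^{\epsilon}$ and $q_{nm}^{n\epsilon}$ but the unbounded factor between $c_{\max}^{\ell}$ and $2nc_{\max}$. The paper's own proof reaches the stated bound only by asserting $Q\le nq_{nm}$ and $B\le nc_{\max}$, both of which fail for $\ell\ge 2$ because $Q$ and $B$ are products of $\ell$ distinct primes and of $\ell$ costs, respectively. A concrete counterexample to the lemma as stated: take $n=m=2$, $c_{11}=c_{12}=c_{21}=1000$, $c_{22}=1001$, so that $q_{11}=2$, $q_{12}=3$, $q_{21}=5$, $q_{22}=7$, $\delta'=1000$, $c_{\max}=1001$, and the stated threshold is $\log_{14}\bigl(1+\tfrac{1000}{4004}\bigr)\approx 0.0845$; yet at $\epsilon=\ln(1.001)/\ln(15/14)\approx 0.0145$ the four-cycle satisfies $\pi(C)=1.001\cdot(14/15)^{\epsilon}=1$ in $I^{\epsilon}$, so $I^{\epsilon}$ is degenerate. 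You should therefore not try to match the advertised constant: your threshold is the correct one, and substituting it (and propagating the change into the choice of $\beta$ in the proof of Theorem~\ref{thm:fPO} and of $\epsilon$ in the proof of Theorem~\ref{thm:const}) is all that is needed, since the downstream arguments only require some explicit positive threshold.
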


\begin{lemma}\label{lem:ef1}
Let $I = (N, M, \{c_i\}_{i \in N})$ be a fair division instance, and suppose that 
$0 < \epsilon < \log_{q_{nm}}\left( 1 + \frac{\delta}{2m c_{\max}} \right)$. 
If an allocation $\bx$ is EF1 in $I^{\epsilon}$, then $\bx$ is also EF1 in $I$.
\end{lemma}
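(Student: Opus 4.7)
My plan is to show that, under the given bound on $\epsilon$, the multiplicative perturbation is too small to flip the order of any two distinct cost-sums for any single agent, so EF1 transfers back from $I^\epsilon$ to $I$. The argument has essentially three short steps.

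First I would bound the per-subset distortion uniformly: for every agent $i$ and every $S \subseteq M$, the difference $c_i^\epsilon(S) - c_i(S) = \sum_{j \in S} c_{ij}(q_{ij}^\epsilon - 1)$ is non-negative (since each prime $q_{ij} \ge 2$ and $\epsilon > 0$) and at most $m\, c_{\max}\,(q_{nm}^\epsilon - 1)$, because every $q_{ij}$ is one of the first $nm$ primes and hence $q_{ij} \le q_{nm}$. The hypothesis on $\epsilon$ is chosen precisely so that $q_{nm}^\epsilon - 1 < \delta/(2 m c_{\max})$, which then gives $0 \le c_i^\epsilon(S) - c_i(S) < \delta/2$ uniformly in $i$ and $S$.

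The key claim to prove from this is that for any agent $i$ and any $A, B \subseteq M$, the perturbed ordering respects the original ordering in the sense that $c_i^\epsilon(A) \le c_i^\epsilon(B)$ implies $c_i(A) \le c_i(B)$. I would argue this by contrapositive: if $c_i(A) > c_i(B)$, then by the definition of $\delta$ the gap satisfies $c_i(A) - c_i(B) \ge \delta$, while the step-one distortion bound gives $|(c_i^\epsilon(A) - c_i(A)) - (c_i^\epsilon(B) - c_i(B))| < \delta/2$; adding these shows $c_i^\epsilon(A) - c_i^\epsilon(B) > \delta/2 > 0$.

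Finally I would read off EF1 in $I$. Given any pair $i, i' \in N$ with $c_i(\bx_i) > c_i(\bx_{i'})$, the claim first forces $c_i^\epsilon(\bx_i) > c_i^\epsilon(\bx_{i'})$, so $i$ envies $i'$ under $c^\epsilon$ as well; EF1 of $\bx$ in $I^\epsilon$ then furnishes a chore $j \in \bx_i$ with $c_i^\epsilon(\bx_i \setminus \{j\}) \le c_i^\epsilon(\bx_{i'})$, and applying the claim once more with $A = \bx_i \setminus \{j\}$ and $B = \bx_{i'}$ yields $c_i(\bx_i \setminus \{j\}) \le c_i(\bx_{i'})$, as required. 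There is no genuine obstacle; the only thing to be careful about is that the threshold on $\epsilon$ is tuned exactly to make the uniform distortion strictly less than $\delta/2$, which is what prevents the perturbation from flipping any non-equal cost comparison.
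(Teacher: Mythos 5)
Your proposal is correct and follows essentially the same route as the paper: both rest on the observation that the multiplicative perturbation shifts every subset cost by an amount in $[0,\delta/2)$, so it cannot reverse any strict inequality $c_i(S)>c_i(T)$ (whose gap is at least $\delta$ by definition of $\delta$). The paper phrases the final step as a contrapositive (not EF1 in $I$ implies not EF1 in $I^{\epsilon}$) while you argue in the forward direction, but the content is identical.
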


\begin{lemma}\label{lem:po}
Let $I = (N, M, \{c_i\}_{i \in N})$ be a fair division instance, and suppose that 
$0 < \epsilon < \log_{q_{nm}}\left( 1 + \alpha \right)$, where $\alpha: = \frac{{\delta}^{n}}{2m (q_{nm})^{n-1} (c_{\max})^{n}}$.
If an allocation $\bx$ is fPO in $I^{\epsilon}$, then $\bx$ is PO in $I$.
\end{lemma}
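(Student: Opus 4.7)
The plan is to argue the contrapositive: assume $\bx$ is fPO in $I^\epsilon$ and, for contradiction, that $\bx$ is not PO in $I$. Then there is an allocation $\by$ with $c_i(\by_i) \le c_i(\bx_i)$ for all $i$ and strict inequality at some $i^* \in N$; by the definition of $\delta$ applied to $\bx_{i^*}$ and $\by_{i^*}$, the strict gap satisfies $c_{i^*}(\bx_{i^*}) - c_{i^*}(\by_{i^*}) \ge \delta$.

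By Lemma~\ref{lem:cha}, fPO of $\bx$ in $I^\epsilon$ yields positive weights $\bw = (w_i)_{i\in N}$ with $\bx$ minimizing $\sum_i w_i c^\epsilon_i(\bx_i)$, so $\sum_i w_i c^\epsilon_i(\by_i) \ge \sum_i w_i c^\epsilon_i(\bx_i)$. Writing $c^\epsilon_{ij} = c_{ij}(1 + (q_{ij}^\epsilon - 1))$ and bounding each perturbation term by $c_{\max}(q_{nm}^\epsilon - 1)$, this rearranges to
\[
w_{i^*}\,\delta \;\le\; \sum_i w_i\,[c_i(\bx_i) - c_i(\by_i)] \;\le\; 2 m c_{\max}\,(q_{nm}^\epsilon - 1) \sum_i w_i,
\]
giving $q_{nm}^\epsilon - 1 \ge \frac{w_{i^*}\delta}{2 m c_{\max} \sum_i w_i}$.

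The remaining task is to lower-bound the ratio $w_{i^*}/\sum_i w_i$ in terms of instance parameters only. The MPB conditions $w_i c^\epsilon_{ij} \ge w_{\alpha(j)} c^\epsilon_{\alpha(j), j}$ (from fPO, with $\alpha(j)$ the owner of chore $j$ in $\bx$) impose multiplicative constraints on pairwise weight ratios, which can be chained along a path of length at most $n - 1$ in the bipartite assignment graph. Since the definition of $\delta$ applied to a singleton $\{j\}$ and the empty set implies $c_{ij} \ge \delta$ whenever $c_{ij} > 0$, each chain factor contributes at least $\delta/(c_{\max}\,q_{nm}^\epsilon)$, producing a bound of the form $w_{i^*}/\sum_i w_i \gtrsim \delta^{n-1}/(q_{nm}^{n-1}\,c_{\max}^{n-1})$. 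Substituting this back and using $\epsilon < \log_{q_{nm}}(1 + \alpha)$ with $\alpha = \delta^n/(2 m q_{nm}^{n-1} c_{\max}^n)$ yields the desired contradiction.

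The main obstacle is executing the MPB chain argument carefully. Agents with empty bundle in $\bx$ have their weights less constrained by fPO from above, which complicates the control on $\sum_i w_i$; one may either restrict the sum to non-empty-bundle agents (using that $i^*$ is necessarily non-empty, since $c_{i^*}(\bx_{i^*}) > 0$) or normalize $\bw$ to minimize $\max_i w_i$ subject to fPO. The factors $\delta^{n-1}$ and $q_{nm}^{n-1}$ appearing in the definition of $\alpha$ then correspond precisely to the $n - 1$ links of this multiplicative chain.
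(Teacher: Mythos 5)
Your overall strategy matches the paper's: obtain weights $\bw$ from Lemma~\ref{lem:cha}, compare the weighted costs of $\bx$ and the dominating allocation $\by$, absorb the perturbation into a term of order $m c_{\max}(q_{nm}^{\epsilon}-1)$, and reduce everything to a lower bound on the relative weight of the strictly improving agent. The first half of your argument (the inequality $w_{i^*}\delta \le \sum_i w_i[c_i(\bx_i)-c_i(\by_i)] \le 2m c_{\max}(q_{nm}^{\epsilon}-1)\sum_i w_i$) is sound. The problem is that the second half --- the bound $w_{i^*}/\sum_i w_i \gtrsim \delta^{n-1}/(q_{nm}^{n-1}c_{\max}^{n-1})$ --- is precisely the hard step, and the mechanism you propose for it does not work. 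The MPB conditions only bound from above the weights of agents who \emph{own} a chore in $\bx$ (and for those agents a single inequality $w_{i'}c^{\epsilon}_{i'j}\le w_{i^*}c^{\epsilon}_{i^*j}$ already suffices; no chain of length $n-1$ is needed). Agents with empty bundles in $\bx$ have weights that are completely unconstrained from above, they are unreachable by any alternating chain in the assignment graph, and they cannot be dropped from the sum: the dominating allocation $\by$ may assign chores to exactly those agents, so their weights genuinely appear in $\sum_i w_i \cdot \mathrm{pert}(\by_i)$ on the right-hand side. Your first proposed fix (restricting the sum to non-empty-bundle agents) therefore fails, and your second fix (renormalizing $\bw$) is stated in one line without being carried out --- but it is the entire content of the lemma.

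The paper closes this gap with an explicit rescaling argument that you would need to reproduce: sort the weights as $w_{i_1}\ge\cdots\ge w_{i_n}$, normalize $w_{i_1}=1$, and show that whenever $w_{i_{k+1}}/w_{i_k} < \delta/(q_{nm}c_{\max})$, the agents $i_1,\dots,i_k$ receive no chore in $\bx$ (else moving a chore to $i_{k+1}$ would improve $\mathbf{LP}(\bw)$), so that $w_{i_{k+1}},\dots,w_{i_n}$ can be scaled up to close the gap without destroying optimality of $\bx$. Iterating gives $w_{i_n}\ge(\delta/(q_{nm}c_{\max}))^{n-1}$, which is where the exponent $n-1$ in $\alpha$ actually comes from --- it counts consecutive ratios in the sorted weight order, not edges of the assignment graph. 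One further quantitative point: even granting $w_{i^*}\ge(\delta/(q_{nm}c_{\max}))^{n-1}\max_i w_i$, your inequality uses $\sum_i w_i$ in the denominator, which costs an extra factor of $n$ that the definition of $\alpha$ does not absorb (only a factor of $2$ is available); you need to bound the perturbation term by $m c_{\max}(q_{nm}^{\epsilon}-1)\max_i w_i$ rather than by $2m c_{\max}(q_{nm}^{\epsilon}-1)\sum_i w_i$ for the constants to close.
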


\section{Existence of EF1 and PO Allocation of Indivisible Chores}\label{sec:main}
In this section, we prove Theorem~\ref{thm:main}.
Without loss of generality, we may assume that $c_{ij} > 0$ for all $i\in N$ and $j\in M$. 
Indeed, if $c_{ij} = 0$ for some $i$ and $j$, we can simply assign chore $j$ to agent $i$ and remove it from the instance. 
It is easy to verify that this operation does not affect both EF1 and PO properties.

By Lemmas~\ref{lem:non-degenerate}--\ref{lem:po}, to prove the existence of an EF1 and PO allocation, it suffices to show the existence of an EF1 and fPO allocation in a non-degenerate instance.
Hence, in this section, we assume that the instance is non-degenerate.

Our goal is to find an EF1 and fPO allocation in a non-degenerate instance $I = (N, M, \{c_i\}_{i \in N})$.
By Lemmas~\ref{lem:cha} and~\ref{lem:pEF1isEF1}, it suffices to find a positive weight vector $\bw \in \mathbb{R}_{>0}^n$ such that $\bx$ is an optimal allocation to $\mathbf{LP}(\bw)$ and $\bp$ is an optimal solution to $\mathbf{Dual\text{-}LP}(\bw)$, with the property that $\bx$ is pEF1.
We first determine such a desirable weight vector $\bw$ using a fixed-point argument.

Let $\Delta^{n-1}$ denote the $(n{-}1)$-{\it dimensional standard simplex}, defined as
$\Delta^{n-1} := \{ (x_1, \dots, x_n) \in \mathbb{R}^n \mid \sum_{i=1}^n x_i = 1,\ x_i \ge 0 \ \forall i \in [n] \}$.
For any $\bw \in \Delta^{n-1}$ and {\it shrinking parameter} $\tau$ with $0 < \tau < \frac{c_{\min}}{2n c_{\max}}$, we define the following pair of primal and dual linear programs:
\begin{center}
\small
\begin{minipage}[t]{0.46\textwidth}
\raggedright
$\mathbf{LP}(\bw, \tau)$
\begin{align*}
\text{Min.} \quad & \sum_{i \in N} \sum_{j \in M} (\tau + (1-\tau n)w_i)  c_{ij} x_{ij} \\
\text{s.t.} \quad & \sum_{i \in N} x_{ij} = 1 \quad \forall j \in M \\
& x_{ij} \geq 0 \quad \forall i \in N, \forall j \in M
\end{align*}
\end{minipage}
\hfill
\begin{minipage}[t]{0.46\textwidth}
\raggedright
$\mathbf{Dual\text{-}LP}(\bw, \tau)$
\begin{align*}
\text{Max.} \quad & \sum_{j \in M} p_j \\
\text{s.t.} \quad & p_j \leq (\tau + (1-\tau n)w_i) c_{ij} \quad \forall i \in N, \forall j \in M \\
\end{align*}
\end{minipage}
\end{center}
Note that the complementary slackness condition is given by $$x_{ij} > 0 \implies p_j = (\tau + (1-\tau n)w_i) c_{ij} \quad \forall i \in N, \forall j \in M.$$
\begin{observation}\label{ob:1}
The following properties hold for any $\bw \in \Delta^{n-1}$ and $\tau$ with $0 < \tau < \frac{c_{\min}}{2n c_{\max}}$:
\begin{enumerate}
    \item Define $w'_i := \tau + (1 - \tau n) w_i$ for all $i \in [n]$. Then $\bw' = (w'_i)_{i \in N}$ is a positive vector in $\Delta^{n-1}$.
    
    \item The primal linear program $\mathbf{LP}(\bw, \tau)$ admits an integral optimal solution.
    
    \item Any integral optimal solution of $\mathbf{LP}(\bw, \tau)$ yields an fPO allocation.
    
    \item The dual linear program $\mathbf{Dual\text{-}LP}(\bw, \tau)$ has a unique optimal solution $\bp^* = (p^*_j)_{j \in M}$. Moreover, we have $p^*_j > 0$ for all $j \in M$.
\end{enumerate}
\end{observation}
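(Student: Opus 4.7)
The plan is to verify the four items independently, since they are essentially separate structural observations about the LP pair $(\mathbf{LP}(\bw,\tau), \mathbf{Dual\text{-}LP}(\bw,\tau))$. For item 1, I will compute $\sum_{i\in N} w'_i = n\tau + (1-\tau n) \sum_{i\in N} w_i = n\tau + (1-\tau n) = 1$ to establish simplex membership, and use the hypothesis $\tau < c_{\min}/(2n c_{\max}) \le 1/(2n)$ to deduce $1 - \tau n > 1/2 > 0$, so that $w'_i \ge \tau > 0$ for every $i \in N$. For item 2, I will observe that the constraints $\sum_i x_{ij} = 1$ couple variables only within a single chore $j$, so $\mathbf{LP}(\bw,\tau)$ decouples into $m$ single-chore LPs, each of which is a minimization of a linear form over a standard simplex. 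Each such subproblem attains its optimum at a vertex, obtained by assigning chore $j$ entirely to some agent in $\argmin_{i\in N} w'_i c_{ij}$; concatenating these vertex choices yields an integral optimum of the full LP.

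For item 3, I will identify $\mathbf{LP}(\bw,\tau)$ with the LP from Lemma~\ref{lem:cha} applied to the positive weight vector $\bw'$ produced in item 1, which directly certifies that any integral optimum is fPO. For item 4, I will use the same decoupling across chores on the dual side: the variable $p_j$ appears only in the $n$ constraints $p_j \le w'_i c_{ij}$, so maximizing $\sum_j p_j$ forces $p^*_j = \min_{i \in N} w'_i c_{ij}$ for every $j \in M$. This value is uniquely determined as a number (even if the $\argmin$ is not unique), which gives uniqueness of the dual optimum, and it is strictly positive because $w'_i > 0$ by item 1 and $c_{ij} > 0$ by the standing assumption at the start of this section.

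I do not expect a real obstacle in any of the steps: the entire observation is a routine structural check about a transportation-type LP whose constraint matrix decouples across chores. In particular, the non-degeneracy hypothesis on the perturbed instance is not used anywhere in this observation; it will enter only later, when the fixed-point argument is combined with the discrete rearrangement of the primal optimum.
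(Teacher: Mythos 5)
Your proposal is correct and follows essentially the same route as the paper: item 1 by summing the $w'_i$, item 3 by invoking Lemma~\ref{lem:cha} with the positive weight vector $\bw'$, and item 4 by noting that the dual forces $p^*_j = \min_{i\in N} w'_i c_{ij} > 0$, which is a uniquely determined positive number. The only (harmless) difference is in item 2, where you justify integrality by decoupling the LP into $m$ per-chore minimizations over simplices, whereas the paper cites total unimodularity of the constraint matrix; both are standard and equally valid here.
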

\begin{proof}
The first claim follows directly by summing the components of $\bw'$.  
The second follows from the fact that the constraint matrix of $\mathbf{LP}(\bw, \tau)$ is totally unimodular.  
The third claim follows directly from Lemma~\ref{lem:cha}.  
Finally, the last claim follows from the structure of $\mathbf{Dual\text{-}LP}(\bw, \tau)$, since the optimal price for each $j \in M$ is given by $p^*_j = \min_{i \in N} (\tau + (1 - \tau n) w_i) c_{ij} > 0$.
\end{proof}

Since the shrinking parameter is sufficiently small, the following lemma holds.

\begin{lemma}\label{lem:tau}
Let $\bw \in \Delta^{n-1}$ be any point in the $(n{-}1)$-dimensional standard simplex.  
Let $\bx$ be any optimal allocation to $\mathbf{LP}(\bw, \tau)$, and let $\bp$ be the optimal solution to $\mathbf{Dual\text{-}LP}(\bw, \tau)$.  
Then there exists an agent $i \in N$ such that $w_i > 0$ and agent $i$ is price envy-free with respect to $\bx$ and $\bp$.
\end{lemma}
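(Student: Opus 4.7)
The plan is to exhibit an explicit witness, namely the agent $i^{\star}$ with the largest weight. I will show that this agent either lies in the trivial regime (all weights positive) or has an empty bundle, in which case its bundle price is $0$ and is therefore automatically a minimum.

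First I would dispose of the trivial case: if $w_i > 0$ for every $i \in N$, then any $i^{\star} \in \argmin_{i \in N} \bp(\bx_i)$ is pEF and has $w_{i^{\star}} > 0$, so the claim follows without invoking the smallness of $\tau$.

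Otherwise, fix some $i^* \in N$ with $w_{i^*} = 0$, and pick $i^{\star} \in \argmax_{i \in N} w_i$. Since the $w_i$ sum to $1$, we have $w_{i^{\star}} \ge 1/n$, and hence $w'_{i^{\star}} = \tau + (1-\tau n) w_{i^{\star}} \ge 1/n$. I would argue by contradiction that $\bx_{i^{\star}} = \emptyset$. If some chore $j$ were assigned to $i^{\star}$, complementary slackness for $\mathbf{LP}(\bw,\tau)$ would give $p_j = w'_{i^{\star}} c_{i^{\star},j}$, while dual feasibility at $i^*$ would give $p_j \le w'_{i^*} c_{i^*,j} = \tau c_{i^*,j} \le \tau c_{\max}$. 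Combining these and using $w'_{i^{\star}} \ge 1/n$, one obtains $c_{i^{\star},j} \le n \tau c_{\max}$, which the hypothesis $\tau < c_{\min}/(2n c_{\max})$ drives strictly below $c_{\min}/2 < c_{\min}$, contradicting $c_{i^{\star},j} \ge c_{\min}$.

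The main (and essentially only) obstacle is spotting the right witness. Once one notices that the $\tau$-shrinkage of the simplex forces the maximum-weight agent to reject every chore whenever a zero-weight agent is present, the numerical estimate is a one-line calculation tuned precisely to the hypothesis on $\tau$. The conclusion is then immediate: $\bx_{i^{\star}} = \emptyset$ implies $\bp(\bx_{i^{\star}}) = 0 \le \bp(\bx_{i'})$ for every $i' \in N$, exhibiting $i^{\star}$ as a pEF agent with $w_{i^{\star}} > 0$.
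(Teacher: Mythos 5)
Your proof is correct and follows essentially the same route as the paper: handle the all-positive case trivially, then show that when some weight vanishes, the maximum-weight agent (whose shrunken weight is at least $1/n$) must receive an empty bundle because $\tau < c_{\min}/(2nc_{\max})$ makes every chore strictly cheaper for the zero-weight agent. The only cosmetic difference is that you derive the contradiction via complementary slackness and dual feasibility, whereas the paper phrases it as a primal reallocation that would strictly decrease the objective; the underlying numerical estimate is identical.
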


\begin{proof}
If $w_i > 0$ for all $i \in N$, the claim follows immediately.  
Now suppose that $w_k = 0$ for some agent $k \in N$.  
Let $h \in N$ be any agent such that $w_h \ge \frac{1}{n}$.
We first show that
$(\tau + (1 - \tau n) w_h )c_{hj} > \tau c_{kj}$ holds for any $j \in M$.  
Rearranging gives $\tau(c_{kj} + n w_h c_{hj} - c_{hj}) < w_h c_{hj}$.
We have
\begin{align*}
\tau(c_{kj} + n w_h c_{hj} - c_{hj}) 
&< \tau(c_{kj} + n w_h c_{hj}) \\
&\le \tau(2n w_h c_{\max}) \\
&< w_h c_{\min} \\
&\le  w_h c_{hj},
\end{align*}
where the second inequality follows from $n w_h \ge 1$, and the third follows from $\tau < \frac{c_{\min}}{2n c_{\max}}$.

Hence, the inequality holds, which implies that agent $h$ receives no chore in $\bx$.  
Otherwise, reallocating any chore from $h$ to $k$ would strictly decrease the objective value of $\mathbf{LP}(\bw, \tau)$, contradicting the optimality of $\bx$.
Therefore, agent $h$ is price envy-free with respect to $\bx$ and $\bp$.
\end{proof}

The following is a classical result known as the KKM (Knaster–Kuratowski–Mazurkiewicz) Lemma~\cite{Knaster1929}, which can be viewed as a set-covering version of Sperner's lemma.

\begin{lemma}[KKM Lemma~\cite{Knaster1929}]\label{lem:kkm}
Let $C_1, C_2, \dots, C_n$ be closed subsets of the $(n{-}1)$-dimensional standard simplex $\Delta^{n-1}$.  
Suppose that for every point $(w_1, w_2, \dots, w_n) \in \Delta^{n-1}$, there exists an index $i \in [n]$ such that $w_i > 0$ and $(w_1, w_2, \dots, w_n) \in C_i$.  
Then the intersection of all the sets is nonempty, i.e.,
$$
\bigcap_{i=1}^{n} C_i \ne \emptyset.
$$
\end{lemma}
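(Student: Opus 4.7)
The plan is to derive the KKM lemma from Sperner's lemma together with a compactness argument, which is the most standard route. First recall Sperner's lemma: for any simplicial subdivision $T$ of $\Delta^{n-1}$ and any labeling $\ell$ of the vertices of $T$ satisfying $\ell(v) \in \{i \in [n] : v_i > 0\}$ for every vertex $v$, there exists a \emph{fully labeled} small simplex of $T$, i.e., one whose $n$ vertices carry all $n$ distinct labels.

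The hypothesis of the KKM lemma is tailored exactly to produce such a labeling. Fix a sequence of triangulations $T_1, T_2, \dots$ of $\Delta^{n-1}$ (for example, iterated barycentric subdivisions) whose mesh sizes tend to $0$. For each vertex $v$ of $T_k$, the KKM hypothesis supplies some index $i \in [n]$ with $v_i > 0$ and $v \in C_i$; pick one such $i$ and set $\ell_k(v) := i$. Then $\ell_k$ satisfies Sperner's condition on $T_k$, so Sperner's lemma yields a fully labeled simplex $\sigma_k$ of $T_k$ whose vertices $v_1^{(k)}, \dots, v_n^{(k)}$ satisfy $v_i^{(k)} \in C_i$ for every $i \in [n]$.

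Next I would pass to the limit. By compactness of $\Delta^{n-1}$, extract a subsequence along which $v_1^{(k)}$ converges to some $p^{\ast} \in \Delta^{n-1}$. Since $\operatorname{diam}(\sigma_k)$ is bounded by the mesh of $T_k$, which tends to $0$, the remaining vertices $v_2^{(k)}, \dots, v_n^{(k)}$ must converge to the same point $p^{\ast}$. For each $i \in [n]$, the sequence $\{v_i^{(k)}\}_k$ lies in the closed set $C_i$ and converges to $p^{\ast}$, so $p^{\ast} \in C_i$. Hence $p^{\ast} \in \bigcap_{i=1}^{n} C_i$, showing that this intersection is nonempty.

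The one step that deserves care is the verification that the ad hoc labeling $\ell_k$ genuinely satisfies Sperner's hypothesis at every vertex of every triangulation. This is precisely where the KKM assumption is used in an essential way: the guarantee that for every $v \in \Delta^{n-1}$ there exists $i$ with $v_i > 0$ and $v \in C_i$ is exactly what allows a valid Sperner labeling, and once that hurdle is cleared the conclusion drops out of standard compactness and the closedness of the $C_i$.
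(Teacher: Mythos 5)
Your proof is correct. Note, however, that the paper does not prove this lemma at all: it is stated as a classical result with a citation to Knaster--Kuratowski--Mazurkiewicz, and the surrounding text only remarks that it ``can be viewed as a set-covering version of Sperner's lemma.'' Your argument carries out exactly that reduction in the standard way: the hypothesis that every $w\in\Delta^{n-1}$ lies in some $C_i$ with $w_i>0$ is precisely the condition needed to define a valid Sperner labeling on each triangulation (a vertex $v$ may only receive a label from $\{i : v_i>0\}$), Sperner's lemma then produces a fully labeled cell with $v_i^{(k)}\in C_i$ for each $i$, and letting the mesh tend to $0$ together with compactness of the simplex and closedness of the $C_i$ forces all $n$ vertex sequences to a common limit lying in $\bigcap_i C_i$. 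All steps are sound; the only point worth flagging is the one you already flag yourself, namely that the labeling is well defined and Sperner-admissible, and your treatment of it is correct.
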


Now we apply the KKM lemma to show the existence of a desirable weight vector $\bw \in \Delta^{n-1}$.

\begin{lemma}\label{lem:existw}
There exists a weight vector $\bw \in \Delta^{n-1}$ such that for each agent $i \in N$, there exists an optimal allocation $\bx$ to $\mathbf{LP}(\bw, \tau)$ in which agent $i$ is price envy-free with respect to $\bx$ and the unique optimal solution $\bp$ to $\mathbf{Dual\text{-}LP}(\bw, \tau)$.
\end{lemma}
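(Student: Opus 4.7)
The plan is to apply the KKM Lemma (Lemma~\ref{lem:kkm}). For each $i \in N$, define $C_i \subseteq \Delta^{n-1}$ to be the set of $\bw$ such that some integral optimal allocation $\bx$ of $\mathbf{LP}(\bw,\tau)$ makes agent $i$ price envy-free with respect to $\bx$ and the unique dual optimum $\bp$ of $\mathbf{Dual\text{-}LP}(\bw,\tau)$. Any $\bw \in \bigcap_{i=1}^n C_i$ then yields the conclusion of the lemma, so it suffices to verify the two hypotheses of KKM: (a) every $\bw \in \Delta^{n-1}$ lies in some $C_i$ with $w_i > 0$, and (b) each $C_i$ is closed. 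Hypothesis (a) is immediate from Lemma~\ref{lem:tau}, using Observation~\ref{ob:1}.2 to produce an integral optimal allocation in the first place.

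The main obstacle is proving closedness of $C_i$, and I would approach it by a sequential-compactness argument. Take a sequence $(\bw^{(k)})_{k\ge 1} \subseteq C_i$ converging to $\bw^* \in \Delta^{n-1}$, together with witnesses $\bx^{(k)}$ and dual optima $\bp^{(k)}$. Because the set of integral allocations of $M$ to $N$ is finite, after passing to a subsequence I may assume $\bx^{(k)} = \bx^*$ is constant in $k$. Three statements then need to be transferred to the limit: (i) $\bx^*$ is optimal for $\mathbf{LP}(\bw^*,\tau)$; (ii) $\bp^{(k)}$ converges to the unique dual optimum $\bp^*$ at $\bw^*$; and (iii) the inequalities $\bp^{(k)}(\bx^*_i) \le \bp^{(k)}(\bx^*_{i'})$ for all $i' \in N$ pass to the limit to give price envy-freeness of agent $i$ with respect to $(\bx^*,\bp^*)$.

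For (i), the objective value $\sum_{i\in N}\sum_{j\in M}(\tau+(1-\tau n)w_i)c_{ij}x_{ij}$ is linear, hence continuous, in $\bw$ for any fixed integral allocation. Thus the inequality $\mathrm{obj}_{\bw^{(k)}}(\bx^*) \le \mathrm{obj}_{\bw^{(k)}}(\by)$ survives the limit for every integral allocation $\by$, which together with Observation~\ref{ob:1}.2 (the LP attains its optimum at an integral solution) gives LP optimality of $\bx^*$ at $\bw^*$. For (ii), Observation~\ref{ob:1}.4 provides the closed-form expression $p^*_j(\bw) = \min_{i\in N}(\tau+(1-\tau n)w_i)c_{ij}$, which is continuous in $\bw$, so $\bp^{(k)} \to \bp^*$ automatically. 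Given (ii), the finitely many inequalities in (iii) pass to the limit by continuity of the inner product. Hence $\bw^* \in C_i$, so $C_i$ is closed, and applying Lemma~\ref{lem:kkm} produces a weight vector $\bw \in \bigcap_{i=1}^n C_i$, completing the proof.
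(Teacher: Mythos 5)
Your proposal is correct and follows essentially the same route as the paper's proof: the same definition of the sets $C_i$, the same sequential-compactness argument for closedness (passing to a subsequence with a constant integral allocation, using continuity of the objective in $\bw$ and the closed-form dual optimum), and the same application of Lemma~\ref{lem:tau} for the covering condition before invoking the KKM lemma. Your treatment is in fact slightly more explicit than the paper's on why LP optimality and dual convergence pass to the limit.
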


\begin{proof}
For each $i \in N$, define the subset $C_i \subseteq \Delta^{n-1}$ to be the set of weight vectors $\bw = (w_i)_{i \in N}$ such that there exists an optimal allocation $\bx = (\bx_i)_{i \in N}$ to $\mathbf{LP}(\bw, \tau)$ and the corresponding dual optimal solution $\bp$ for which agent $i$ is price envy-free with respect to $\bx$ and $\bp$.

We first show that each $C_i$ is closed. Let $\{\bw^t\}_{t \in \mathbb{N}} \subseteq C_i$ be a sequence of weight vectors converging to some $\bw^* \in \Delta^{n-1}$. For each $t$, there exists an optimal allocation $\bx^t$ to $\mathbf{LP}(\bw^t, \tau)$ and a corresponding dual solution $\bp^t$ to $\mathbf{Dual\text{-}LP}(\bw^t, \tau)$ such that agent $i$ is price envy-free with respect to $\bx^t$ and $\bp^t$. Since the number of possible allocations is finite, we may assume, by passing to a subsequence, that $\bx^t = \bx^*$ for all $t$. Since the objective function of $\mathbf{LP}(\bw, \tau)$ is continuous in $\bw$, $\bx^*$ is also optimal to $\mathbf{LP}(\bw^*, \tau)$.
Moreover, it is easy to see that agent $i$ remains price envy-free with respect to $\bx^*$ and $\bp^*$, where $\lim_{t\rightarrow \infty} \bp^t = \bp^*$.
Hence, $\bw^* \in C_i$, implying that $C_i$ is closed.

The covering condition required by the KKM lemma for $\{C_i\}_{i \in N}$ follows directly from Lemma~\ref{lem:tau}.

Therefore, by the KKM lemma (Lemma~\ref{lem:kkm}), there exists a weight vector $\bw \in \bigcap_{i \in N} C_i$. By the definition of $C_i$, this means that for each agent $i \in N$, there exists an optimal allocation $\bx$ to $\mathbf{LP}(\bw, \tau)$ in which agent $i$ is price envy-free with respect to $\bx$ and the unique optimal solution $\bp$ to $\mathbf{Dual\text{-}LP}(\bw, \tau)$.
\end{proof}

In what follows, let $\bw \in \Delta^{n-1}$ be a weight vector satisfying the condition stated in Lemma~\ref{lem:existw}, and let $\bp$ denote the unique optimal solution to the dual problem $\mathbf{Dual\text{-}LP}(\bw, \tau)$.  
Define a bipartite graph $G(\bw, \bp) = (N, M; E)$ by
$$
(i, j) \in E \iff p_j = (\tau + (1 - \tau n) w_i) c_{ij}.
$$
We refer to $G(\bw, \bp)$ simply as $G$.
By the complementary slackness theorem, an allocation $\bx$ is optimal to $\mathbf{LP}(\bw, \tau)$ if and only if
$$j \in \bx_i \ \Longrightarrow\ p_j = (\tau + (1 - \tau n) w_i) c_{ij},\ \forall i \in N, \forall j \in M.$$
This is equivalent to
$$
j \in \bx_i \ \Longrightarrow\ (i, j) \in E.
$$
Therefore, the statement of Lemma~\ref{lem:existw} can be equivalently reformulated in terms of the graph $G$:
for every agent $i \in N$, there exists an allocation $\bx$ that assigns chores only along edges in $G$ such that agent $i$ is price envy-free with respect to $\bx$ and $\bp$.
The following lemma shows that under this assumption, one can construct a pEF1 allocation by appropriately assigning chores along the edges of $G$.
The proof of Lemma~\ref{lem:pEF1inG} is deferred to Section~\ref{sec:proof-pEF1inG}.

\begin{lemma}\label{lem:pEF1inG}
There exists an allocation $\bx$ that is both pEF1 and an optimal allocation to $\mathbf{LP}(\bw, \tau)$.
\end{lemma}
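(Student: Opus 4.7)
The plan is to choose a specific optimal integer allocation by minimizing an appropriate potential, and to argue that this minimizer is automatically pEF1, invoking Lemma~\ref{lem:existw} to rule out violations. Let $\mathcal{A}^\star$ denote the set of integer optimal solutions of $\mathbf{LP}(\bw,\tau)$; by complementary slackness, these are precisely the allocations in which every chore $j$ is given to an agent $i$ with $(i,j)\in E(G)$, and $\mathcal{A}^\star$ is finite and nonempty by Observation~\ref{ob:1}. Define the potential $\Phi(\bx)\in\mathbb{R}^n$ to be the vector $(\bp(\bx_i))_{i\in N}$ sorted in non-increasing order, compared lexicographically, and let $\bx^\star\in\mathcal{A}^\star$ minimize $\Phi$.

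Suppose, for contradiction, that $\bx^\star$ is not pEF1. By Observation~\ref{ob:pEF1} there exist agents $h,l\in N$ with $\hat{\bp}(\bx^\star_h)>\bp(\bx^\star_l)$; writing $p^\star:=\max_{j\in\bx^\star_h}p_j>0$, this gives $|\bx^\star_h|\ge 2$ and $\bp(\bx^\star_h)>\bp(\bx^\star_l)+p^\star$. Apply Lemma~\ref{lem:existw} for agent $l$ to obtain $\by\in\mathcal{A}^\star$ in which $l$ is price envy-free, i.e., $\bp(\by_l)\le \bp(\by_k)$ for every $k\in N$; in particular $\bp(\by_l)\le\bp(\by_h)$. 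Form the bipartite swap-multigraph $D$ on vertex set $N\cup M$ whose edges are $(j,\bx^\star(j))$ and $(j,\by(j))$ for each chore $j$; each chore has degree two, so the edge set decomposes into alternating cycles and alternating paths with endpoints in $N$. Switching $\bx^\star$-edges for $\by$-edges along any single component yields another allocation $\bx''\in\mathcal{A}^\star$, because every chore remains assigned to a $G$-neighbor.

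The remaining, and main, obstacle is to exhibit one such component whose local switch strictly decreases $\Phi$ in the lexicographic order, contradicting the minimality of $\bx^\star$. Switching along all components simultaneously returns $\by$, and because $\bp(\bx^\star_h)-\bp(\bx^\star_l)>p^\star$ while $\bp(\by_h)-\bp(\by_l)\ge 0$, the global switch transports strictly positive net price out of $\bx^\star_h$ into other agents' bundles. The task is to localize a single component that already strictly decreases $\bp(\bx^\star_h)$ without raising any other bundle to the value of $\bp(\bx^\star_h)$. The non-degeneracy of the instance, by ruling out accidental ties among bundle prices, plays a crucial role here: it guarantees strict rather than tight inequalities throughout the analysis. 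I anticipate completing the argument by a case analysis on which component of $D$ contains the maximum-price chore in $\bx^\star_h$, or by induction on the number of chores where $\bx^\star$ and $\by$ disagree, possibly after replacing $\by$ by a strictly ``closer'' optimal allocation still satisfying the hypotheses of Lemma~\ref{lem:existw} for agent $l$.
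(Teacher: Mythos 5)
Your proposal stops exactly where the paper's proof begins: the entire content of Lemma~\ref{lem:pEF1inG} is the ``remaining, and main, obstacle'' that you defer to a future case analysis. Beyond that admitted gap, two specific steps are wrong or unsupported. First, the claim that the global switch from $\bx^\star$ to $\by$ ``transports strictly positive net price out of $\bx^\star_h$'' does not follow from $\bp(\bx^\star_h)-\bp(\bx^\star_l)>p^\star$ together with $\bp(\by_h)-\bp(\by_l)\ge 0$: applying Lemma~\ref{lem:existw} to agent $l$ only makes $l$ weakly poorest in $\by$, which is perfectly consistent with $\bp(\by_h)>\bp(\bx^\star_h)$ (both $h$'s and $l$'s bundles can grow at the expense of a third agent). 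So you have not established that even the union of all components of $D$, let alone a single one, decreases $\bp(\bx^\star_h)$. Second, you misstate what non-degeneracy buys: it guarantees that the tight-edge graph $G$ is a forest (no cycle $C$ with $\pi(C)=1$), not that bundle prices are free of ties. The forest structure is essential --- the remark accompanying Figure~\ref{fig:1} exhibits an instance where $G$ has a single cycle, every agent can individually be made price envy-free, and yet no pEF1 allocation supported on $G$ exists --- so any correct argument must use it, and yours never does.

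For comparison, the paper does not minimize a lexicographic potential over optimal allocations. It first forces all degree-one chores, observes that the residual forest $H$ has at most $n-1$ chores, all coverable by a matching (Hall's theorem), sets the anchor $r_{\max}$ equal to the largest forced load, uses Lemma~\ref{lem:existw} only to build an initial optimal allocation in which every bundle has price at least $r_{\max}$ and every agent in $R$ holds at most one residual chore, and then runs an alternating-path transfer algorithm in $H$ whose termination is proved with a level-based potential; the output satisfies $\max_{i}\hat{\bp}(\bx_i)\le r_{\max}\le\min_{i'}\bp(\bx_{i'})$, hence is pEF1 by Observation~\ref{ob:pEF1}. If you wish to pursue the lexicographic-minimizer route, you would still need essentially all of this machinery (the forest structure, the bound $|M_H|\le n-1$, and a genuine alternating-path exchange argument) to produce a strictly improving switch, so as written the proposal is not a proof.
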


Assuming this lemma, we can now prove our main theorem (Theorem~\ref{thm:main}).

\begin{proof}[Proof of Theorem~\ref{thm:main}]
Let $\bx$ be an allocation satisfying the conditions of Lemma~\ref{lem:pEF1inG}.
By the third part of Observation~\ref{ob:1}, we see that $\bx$ is fPO.
Moreover, by Lemma~\ref{lem:pEF1isEF1}, the allocation $\bx$ is also EF1.
Hence, $\bx$ is an EF1 and fPO allocation in the non-degenerate instance.
By Lemmas~\ref{lem:non-degenerate}--\ref{lem:po}, for a sufficiently small perturbation parameter $\epsilon$, 
the allocation $\bx$ remains an EF1 and PO allocation in the original instance.
This completes the proof, showing that an EF1 and PO allocation exists for any instance.
\end{proof}

\subsection{Proof of Lemma~\ref{lem:pEF1inG}}\label{sec:proof-pEF1inG}
It remains to prove Lemma~\ref{lem:pEF1inG}.
We show Lemma~\ref{lem:pEF1inG} by designing a discrete algorithm to find an allocation $\bx$ that is both pEF1 and an optimal allocation to $\mathbf{LP}(\bw, \tau)$.

By the non-degeneracy assumption, the graph $G$ has the following desirable property.

\begin{lemma}
The graph $G = (N, M; E)$ is a forest.
\end{lemma}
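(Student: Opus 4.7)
The plan is to show that a cycle in $G$ would directly violate the non-degeneracy assumption. I will argue by contradiction: suppose $G$ contains a cycle, necessarily of the form $C = (i_1, j_1, i_2, j_2, \ldots, i_\ell, j_\ell, i_{\ell+1})$ with $i_{\ell+1} = i_1$, since $G$ is bipartite between $N$ and $M$.

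For each index $k \in [\ell]$, both edges $(i_k, j_k)$ and $(i_{k+1}, j_k)$ lie in $E$. By the definition of $G$ and the optimality of $\bp$, this means
\begin{equation*}
(\tau + (1 - \tau n) w_{i_k}) c_{i_k j_k} = p_{j_k} = (\tau + (1 - \tau n) w_{i_{k+1}}) c_{i_{k+1} j_k}.
\end{equation*}
By the first part of Observation~\ref{ob:1}, each factor $\tau + (1 - \tau n) w_i$ is positive, so I may rearrange this to
\begin{equation*}
\frac{c_{i_k j_k}}{c_{i_{k+1} j_k}} = \frac{\tau + (1 - \tau n) w_{i_{k+1}}}{\tau + (1 - \tau n) w_{i_k}}.
\end{equation*}

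Taking the product over $k = 1, \ldots, \ell$, the right-hand side telescopes because $i_{\ell+1} = i_1$, yielding
\begin{equation*}
\pi(C) = \prod_{k=1}^{\ell} \frac{c_{i_k j_k}}{c_{i_{k+1} j_k}} = \prod_{k=1}^{\ell} \frac{\tau + (1 - \tau n) w_{i_{k+1}}}{\tau + (1 - \tau n) w_{i_k}} = 1.
\end{equation*}
This contradicts the non-degeneracy of $I$, which asserts that $\pi(C) \neq 1$ for every cycle $C$ in the complete bipartite graph $(N, M; E)$. Hence $G$ contains no cycle, i.e., $G$ is a forest.

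There is no real obstacle here: the only subtle point is to make sure the denominators $\tau + (1 - \tau n) w_i$ are strictly positive so that the ratio manipulation is legitimate, which is guaranteed by the choice of $\tau < \tfrac{c_{\min}}{2 n c_{\max}}$ and the fact that $\bw \in \Delta^{n-1}$. Everything else follows from the definition of the edges of $G$ via complementary slackness and a telescoping product along the cycle.
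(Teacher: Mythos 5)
Your proof is correct and takes essentially the same route as the paper: both derive the defining equalities $p_{j} = (\tau + (1-\tau n)w_i)c_{ij}$ for the two edges incident to each vertex of the cycle and conclude $\pi(C)=1$ by a telescoping product, contradicting non-degeneracy. The only cosmetic difference is that you pair the two equalities at each chore $j_k$ so that the weights $\tau + (1-\tau n)w_{i_k}$ telescope, whereas the paper pairs them at each agent $i_k$ so that the prices $p_{j_k}$ telescope.
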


\begin{proof}
Suppose, for the sake of contradiction, that $G$ contains a cycle $C = (i_1, j_1, \dots, i_\ell, j_\ell, i_{\ell+1})$.
For convenience, let $j_0 := j_{\ell}$.
Since $(i_k, j_k), (i_k, j_{k-1}) \in E$ for all $k \in [\ell]$, we have
$$p_{j_k} = (\tau + (1 - \tau n) w_{i_k}) c_{i_k j_k}, \quad p_{j_{k-1}} = (\tau + (1 - \tau n) w_{i_k}) c_{i_k j_{k-1}}\quad \forall k \in [\ell].$$ 
Thus, 
$$\frac{c_{i_k j_k}}{c_{i_k j_{k-1}}} = \frac{p_{j_k}}{p_{j_{k-1}}} \quad \forall k \in [\ell].$$ 
Therefore, 
$$\pi(C) = \prod_{k=1}^{\ell} \frac{c_{i_k j_k}}{c_{i_k j_{k-1}}} = \prod_{k=1}^{\ell} \frac{p_{j_k}}{p_{j_{k-1}}} = 1,$$
which contradicts the non-degeneracy of the instance.
Hence, $G = (N, M; E)$ must be a forest.
\end{proof}

\begin{remark}
The fact that $G$ is a forest is crucial for establishing Lemma~\ref{lem:pEF1inG}.  
Indeed, if $G$ contains a cycle, then Lemma~\ref{lem:pEF1inG} may fail to hold, as shown in the following counterexample.
Consider a fair division instance with three agents and four chores.
Suppose that $p_1 = 1$ and $p_2 = p_3 = p_4 = 2$, and let the graph $G$ be as shown in Figure~\ref{fig:1}.
It is easy to verify that for every agent $i\in N$, there exists an allocation that assigns chores only along edges in $G$ such that agent $i$ is price envy-free.
However, there does not exist a pEF1 allocation that assigns chores only along edges in $G$.
\begin{figure}[htbp]
    \centering
    \includegraphics[width=0.35\textwidth]{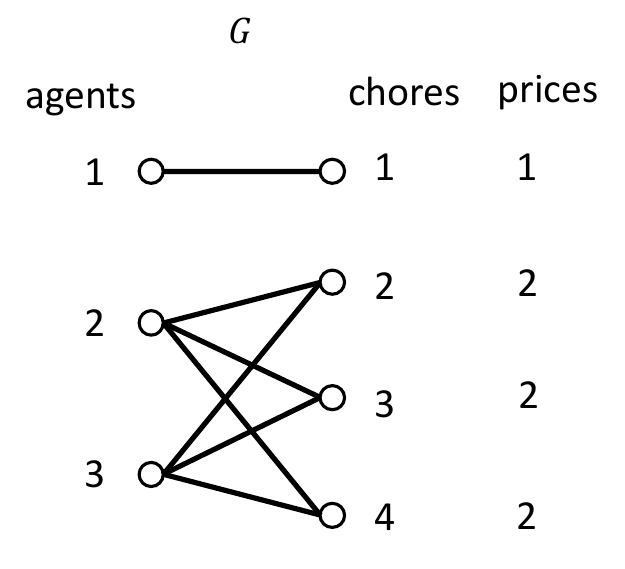}
    \caption{A bipartite graph $G$ containing a cycle that prevents the existence of a pEF1 allocation.}
    \label{fig:1}
\end{figure}

\end{remark}

As a simple observation, any chore with degree one in $G$ must be assigned to a specific agent in every optimal allocation.  
Thus, we first assign all such chores and consider the residual instance.
Let $H = (N, M_H; E_H)$ be the bipartite graph obtained from $G = (N, M; E)$ by removing all chores $j \in M$ with degree one in $G$.
Note that since $G$ is a forest, $H$ is also a forest.

\begin{lemma}\label{lem:n-1}
We have $|M_H| \le n - 1$.
\end{lemma}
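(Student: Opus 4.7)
The plan is to combine the forest structure of $H$ with a simple degree count on the chore side. Since $H$ is obtained from the forest $G$ by deleting some vertices (and their incident edges), $H$ is itself a forest on $n + |M_H|$ vertices, so it has at most $n + |M_H| - 1$ edges.

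Next I would establish a lower bound on $|E_H|$ by showing that every chore in $M_H$ still has degree at least $2$ in $H$. A chore $j \in M_H$ was not removed, so by definition $d_G(j) \ge 2$. Crucially, only chore-vertices (never agent-vertices) are deleted in passing from $G$ to $H$, and the neighbors of any chore lie entirely in $N$. Hence no neighbor of $j$ has been removed, and $d_H(j) = d_G(j) \ge 2$.

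Using that $H$ is bipartite between $N$ and $M_H$, the handshake identity on the chore side gives
\begin{equation*}
|E_H| \;=\; \sum_{j \in M_H} d_H(j) \;\ge\; 2|M_H|.
\end{equation*}
Combining with the forest bound $|E_H| \le n + |M_H| - 1$ yields $2|M_H| \le n + |M_H| - 1$, i.e.\ $|M_H| \le n - 1$.

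There is no real obstacle here: the only things to be careful about are (i) confirming that deletion of vertices preserves the forest property, and (ii) observing that because we delete only chore vertices, the degree of any surviving chore does not decrease when passing from $G$ to $H$. Both are immediate, and the rest is a one-line double-count.
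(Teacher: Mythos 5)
Your proposal is correct and follows essentially the same argument as the paper: bound $|E_H|$ above by $n+|M_H|-1$ using the forest property, bound it below by $2|M_H|$ using that every surviving chore has degree at least two, and combine. The extra remark that deleting only chore vertices preserves the degrees of the remaining chores is a detail the paper leaves implicit, but it does not change the proof.
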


\begin{proof}
The graph $H$ has $n + |M_H|$ vertices.  Since $H$ is a forest, it follows that  $|E_H| \le n + |M_H| - 1$.
On the other hand, each vertex $j \in M_H$ has degree at least two in $H$, which implies that $|E_H| \ge 2|M_H|$.
Combining these inequalities yields $2|M_H| \le n + |M_H| - 1$, hence we have $|M_H| \le n - 1$.
\end{proof}

\begin{lemma}\label{lem:covermatching}
There exists a matching in $H$ that covers all vertices in $M_H$.
\end{lemma}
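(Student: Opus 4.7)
The plan is to verify Hall's condition (Theorem~\ref{thm:hall}) for the bipartite graph $H = (N, M_H; E_H)$, with $M_H$ as the side that must be covered. Concretely, I will show that $|\Gamma_H(S)| \ge |S|$ (in fact strictly greater) for every nonempty $S \subseteq M_H$, after which Hall's theorem immediately produces the desired matching.

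Fix an arbitrary nonempty $S \subseteq M_H$ and consider the induced subgraph $H[S \cup \Gamma_H(S)]$. Because $H$ is bipartite with parts $N$ and $M_H$, every neighbor in $H$ of a vertex $j \in S \subseteq M_H$ lies in $N$, and hence automatically lies in $\Gamma_H(S)$. Therefore no edge of $H$ incident to $S$ is lost upon restricting to the induced subgraph, so the degree of each $j \in S$ inside $H[S \cup \Gamma_H(S)]$ equals its degree in $H$, which is at least $2$ by the very definition of $H$ (vertices of degree $1$ in $G$ were removed).

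Now count the edges of $H[S \cup \Gamma_H(S)]$ in two ways. On one hand, the induced subgraph is a subgraph of the forest $H$, so it is itself a forest on $|S| + |\Gamma_H(S)|$ vertices, and thus has at most $|S| + |\Gamma_H(S)| - 1$ edges. On the other hand, since the subgraph is bipartite with parts $S$ and $\Gamma_H(S)$, every edge is incident to exactly one vertex of $S$, and summing degrees across $S$ gives that the number of edges equals $\sum_{j \in S} \deg_H(j) \ge 2|S|$. Combining the two bounds yields $2|S| \le |S| + |\Gamma_H(S)| - 1$, i.e., $|\Gamma_H(S)| \ge |S| + 1 > |S|$.

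This verifies Hall's condition on every $S \subseteq M_H$, so Theorem~\ref{thm:hall} produces a matching in $H$ covering $M_H$. I do not anticipate any serious obstacle: the proof hinges on the two structural facts already in hand, namely that $H$ is a forest (inherited from $G$) and that every chore vertex retained in $H$ has degree at least $2$ (true by construction), combined with the standard edge-count bound for forests.
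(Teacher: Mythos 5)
Your proof is correct and follows essentially the same route as the paper's: verifying Hall's condition for each nonempty $S \subseteq M_H$ by double-counting edges of $H[S \cup \Gamma_H(S)]$, using that it is a forest (at most $|S| + |\Gamma_H(S)| - 1$ edges) while each $j \in S$ contributes at least $2$ edges. Your added remark that the degrees of vertices in $S$ are preserved in the induced subgraph is a small but welcome clarification of a step the paper leaves implicit.
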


\begin{proof}
By Hall's theorem (Theorem~\ref{thm:hall}), it suffices to show that $|S| \le |\Gamma_H(S)|$ for any $S \subseteq M_H$.
This clearly holds for $S = \emptyset$.

Now fix any nonempty subset $S \subseteq M_H$ and consider the induced subgraph $H[S \cup \Gamma_H(S)]$.
This subgraph has $|S| + |\Gamma_H(S)|$ vertices.
Since $H[S \cup \Gamma_H(S)]$ is a forest, $H[S \cup \Gamma_H(S)]$ has at most $|S| + |\Gamma_H(S)| - 1$ edges.
On the other hand, since each vertex $j \in S$ has degree at least two in $H[S \cup \Gamma_H(S)]$, $H[S \cup \Gamma_H(S)]$ has at least $2|S|$ edges.
Combining these yields $2|S| \le |S| + |\Gamma_H(S)| - 1,$
which gives $|S| \le |\Gamma_H(S)| - 1 < |\Gamma_H(S)|$.
Therefore, there exists a matching that covers all vertices in $M_H$.
\end{proof}

For each agent $i \in N$, define
$$
r_i := \sum_{(i, j) \in E \,:\, d_G(j) = 1} p_j
$$
to be the total price of degree-one chores assigned to agent $i$ in $G$.
Let $r_{\max} := \max_{i \in N} r_i$ denote the maximum such total over all agents, and let $ R := \{i \in N \mid r_i = r_{\max} \}$
denote the set of agents who receive this maximum total price of degree-one chores.

For any optimal allocation $\bx = (\bx_i)_{i \in N}$ to $\mathbf{LP}(\bw, \tau)$, we consider the following two invariants:
\begin{description}
\item[(I1)] Every agent in $R$ receives at most one chore from $M_H$, i.e., $|\bx_i \cap M_H| \le 1\quad \forall i \in R$.
\item[(I2)] Every agent receives a bundle whose total price is at least $r_{\max}$, i.e., $\bp(\bx_i) \ge r_{\max}\quad  \forall i \in N$.
\end{description}

Note that by the definition of $R$, invariant (I2) implies that every agent in $N\setminus R$ must receive at least one chore from $M_H$.
The following lemma guarantees the existence of an optimal allocation satisfying invariants (I1) and (I2).

\begin{lemma}\label{lem:initial}
There exists an optimal allocation $\bx = (\bx_i)_{i \in N}$ to $\mathbf{LP}(\bw, \tau)$ satisfying invariants \textup{(I1)} and \textup{(I2)}.
\end{lemma}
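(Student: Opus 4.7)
The plan is to start from the allocation produced by Lemma~\ref{lem:existw} applied to some $i^* \in R$, which automatically satisfies (I2), and then iteratively repair (I1) via \emph{chain transfers} along the forest $H$, exploiting the fact that intermediate vertices of such a chain must lie in $R$. Concretely, pick any $i^* \in R$ and take the optimal allocation $\bx$ in which $i^*$ is price envy-free. Because $i^*$ receives all its forced degree-one chores, $\bp(\bx_{i^*}) \ge r_{i^*} = r_{\max}$, and pEF of $i^*$ then gives $\bp(\bx_{i'}) \ge r_{\max}$ for every $i' \in N$, so (I2) holds. Among all optimal allocations of $\mathbf{LP}(\bw, \tau)$ satisfying (I2), pick one that minimizes
\[
\Phi(\bx) \;=\; \sum_{v \in R} \max\bigl(|\bx_v \cap M_H| - 1,\ 0\bigr),
\]
and aim to show $\Phi(\bx) = 0$.

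Suppose, for contradiction, that $\Phi(\bx) > 0$, and pick $i \in R$ with $|\bx_i \cap M_H| \ge 2$. Let $T$ be the connected component of $H$ containing $i$, with vertex sets $N_T \subseteq N$ and $M_T \subseteq M_H$. Call $v \in N_T$ \emph{bad} if $v \in R$ and $|\bx_v \cap M_H| \ge 1$, and \emph{good} otherwise. I would construct a walk $v_0 = i,\ j_1,\ v_1,\ j_2,\ v_2,\ldots$ in $T$ by the rule: whenever the current $v_\ell$ is bad, pick any $j_{\ell+1} \in \bx_{v_\ell} \cap M_H$ (nonempty since $v_\ell$ is bad) and let $v_{\ell+1}$ be any neighbor of $j_{\ell+1}$ in $T$ distinct from $v_\ell$ (exists because $d_T(j_{\ell+1}) \ge 2$). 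Because each $j_m$ is currently held by $v_{m-1}$ alone and $T$ is a tree, one checks that the walk is automatically simple, hence finite, so it must terminate at some good $v_k$ with $k \ge 1$. That a good vertex exists in $N_T$ follows from the bound $|M_T| \le |N_T| - 1$, a consequence of $d_T(j) \ge 2$ for $j \in M_T$ together with acyclicity of $T$: if every $v \in N_T$ were bad, then $\sum_{v \in N_T}|\bx_v \cap M_T| \ge |N_T| > |M_T|$, a contradiction.

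Finally I would execute the chain simultaneously, transferring $j_\ell$ from $v_{\ell-1}$ to $v_\ell$ for $\ell = 1, \ldots, k$, and verify the three requirements. Every transfer uses an edge of $T \subseteq G$, so the new allocation is still an optimal solution to $\mathbf{LP}(\bw, \tau)$. For (I2), the key observation is that $v_0 = i$ and every intermediate $v_m$ $(1 \le m \le k-1)$ lies in $R$, so $r_{v_m} = r_{\max}$ and therefore $\bp(\bx_{v_m}) \ge r_{\max}$ automatically, no matter how the chain rearranges its $M_H$-chores; the terminal $v_k$ only \emph{gains} the chore $j_k$, so its price cannot decrease. For $\Phi$, only $v_0$ and $v_k$ change their $M_H$-counts: $v_0$ had $\ge 2$, so its contribution drops by $1$, while $v_k$ is good, so its contribution stays $0$. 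Thus $\Phi$ strictly decreases, contradicting the choice of $\bx$. Hence $\Phi(\bx) = 0$ and (I1) holds.

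\textbf{Main obstacle.} The delicate point is that a naive one-step transfer of a chore out of a violating $i \in R$ can fail to decrease $\Phi$, because the only available neighbors of that chore in $G$ may themselves be bad vertices in $R$. The chain argument circumvents this by routing the transfer along a simple path in the tree $T$ until it reaches a good terminus; (I2)-preservation along the chain is then free because every interior vertex is in $R$, and the existence of a good terminus is guaranteed by the counting bound $|M_T| \le |N_T| - 1$ valid on each component of the forest $H$.
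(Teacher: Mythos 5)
Your proof is correct, but it takes a genuinely different route from the paper's. Both arguments start identically: take the optimal allocation in which some $i^* \in R$ is price envy-free, and use $\bp(\bx_{i^*}) \ge r_{i^*} = r_{\max}$ together with pEF of $i^*$ to get (I2) for everyone. From there the paper constructs an (I1)-compliant allocation in one shot: it pushes every chore of $\Gamma_H(N\setminus R)$ onto agents in $N\setminus R$ (which can only raise their bundle prices, so (I2) survives), and then invokes Hall's theorem via Lemma~\ref{lem:covermatching} to assign the remaining chores $M_H \setminus \Gamma_H(N\setminus R)$ to agents in $R$ along a matching, which gives (I1) directly. You instead run an extremal argument: minimize a violation potential $\Phi$ over (I2)-feasible optimal allocations and repair any violation by a non-backtracking chain transfer in the forest $H$; simplicity of non-backtracking walks in trees plus the per-component count $|M_T| \le |N_T| - 1$ (the component-wise version of Lemma~\ref{lem:n-1}) guarantees the chain reaches a ``good'' terminus, and (I2) is preserved for free because every vertex that loses a chore lies in $R$ and so already carries price $r_{\max}$ from its forced degree-one chores. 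Your approach avoids Hall's theorem entirely and is closer in spirit to the paper's own Algorithm~\ref{alg:1} (path transfers with a monotone potential); the paper's matching construction is shorter once Lemma~\ref{lem:covermatching} is available and yields the target allocation non-iteratively. One presentational remark: the counting bound alone does not force your walk to end at a good vertex --- what does is the simplicity of the non-backtracking walk combined with the fact that the walk can only halt at a good vertex --- so in a polished write-up you should make the ``non-backtracking walks in forests are simple'' step explicit rather than leaving it as ``one checks.''
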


\begin{proof}
Consider the set of chores $\Gamma_H(N \setminus R)$.
We first show that these chores can be allocated to agents in $N \setminus R$ such that every agent receives chores whose total price is at least $r_{\max}$.

Let $i \in R$ be arbitrary. Since $\bw \in \Delta^{n-1}$ is a weight vector satisfying the condition stated in Lemma~\ref{lem:existw}, there exists an optimal allocation $\by$ to $\mathbf{LP}(\bw, \tau)$ such that agent $i$ is price envy-free, i.e., $\bp(\by_i) \le \bp(\by_{i'})$ for all $i' \in N$.
Since $i \in R$, we have $r_{\max} \le \bp(\by_i)$, it follows that $\bp(\by_{i'}) \ge r_{\max}$ for all $i' \in N \setminus R$.
If there exists a chore in $\Gamma_H(N \setminus R)$ that is not allocated to $N \setminus R$ under $\by$, then we can reassign it arbitrarily to an agent in $N \setminus R$ without decreasing the total price of any bundle held by agents in $N \setminus R$. Thus, we can ensure that each agent in $N \setminus R$ receives a bundle from $\Gamma_H(N \setminus R)$ whose total price is at least $r_{\max}$.

Next, consider the remaining chores $M_H \setminus \Gamma_H(N \setminus R)$ (which may be empty). By Lemma~\ref{lem:covermatching}, there exists a matching that covers $M_H \setminus \Gamma_H(N \setminus R)$.
By the definition of $\Gamma_H(N \setminus R)$, each of these chores is matched only to agents in $R$. Thus, we can assign them to agents in $R$ according to a matching.

Combining these assignments yields an optimal allocation $\bx$ to $\mathbf{LP}(\bw, \tau)$ that satisfies both \textup{(I1)} and \textup{(I2)}.
\end{proof}

The allocation $\bx$ obtained in Lemma~\ref{lem:initial} satisfies invariants \textup{(I1)} and \textup{(I2)}, but it may not be pEF1.
We use $\bx$ as an initial allocation and then compute a pEF1 allocation algorithmically.

Given an optimal allocation $\bx=(\bx_i)_{i \in N}$ to $\mathbf{LP}(\bw, \tau)$, we define the sets of \emph{violator agents} and \emph{unmatched agents} as follows:
$$V := \{i \in N \mid \hat{\bp}(\bx_i) > r_{\max} \}, 
\qquad U := \{i \in R \mid \bx_i \cap M_H = \emptyset \}.
$$
We say that a path $P = (i_0, j_1, i_1, j_2, \ldots, i_{\ell-1}, j_\ell, i_\ell)$ in the graph $H$ is an \emph{alternating path} if
$j_k \in \bx_{i_k}\ \text{for all } k \in [\ell].$
We refer to an edge $(i,j)$ in $H$ with $j\in \bx_i$ as an ${\it allocation\ edge}$.

To obtain a pEF1 allocation, the set of violators 
$V$ must be empty.
By removing chores from the bundles of violators, they can be turned into non-violators.
At the same time, assigning chores to unmatched agents does not violate invariant (I1).
Therefore, the algorithm tries to transfer chores from agents in 
$V$ to those in 
$U$, while carefully ensuring that invariants (I1) and (I2) are preserved.

\begin{lemma}\label{lem:pathexist}
Suppose that an optimal allocation $\bx = (\bx_i)_{i \in N}$ to $\mathbf{LP}(\bw, \tau)$ satisfies invariants \textup{(I1)} and \textup{(I2)}, and that $\bx$ is not pEF1. Then the following hold:
\begin{enumerate}
    \item $U \neq \emptyset$.
    \item $\emptyset \neq V \subseteq N \setminus R$.
    \item There exists an alternating path from $U$ to $V$ in $H$.
\end{enumerate}
\end{lemma}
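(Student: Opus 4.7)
The plan is to prove the three claims in the order (2), (1), (3), since the latter two build on the first. I will repeatedly use the decomposition $\bp(\bx_i) = r_i + \bp(\bx_i \cap M_H)$, splitting each bundle into its degree-one $G$-chores (contributing exactly $r_i$) and its $M_H$-chores.

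For claim (2), $V \ne \emptyset$ follows immediately from the failure of pEF1: by Observation~\ref{ob:pEF1} and invariant (I2), $\max_i \hat{\bp}(\bx_i) > \min_{i'} \bp(\bx_{i'}) \ge r_{\max}$, producing an agent with $\hat{\bp}(\bx_i) > r_{\max}$. To show $V \subseteq N \setminus R$, I take any $i \in R$ and split on $|\bx_i \cap M_H|$, which by (I1) is $0$ or $1$. If it is $0$, the bundle consists only of degree-one $G$-chores totalling $r_i = r_{\max}$, so $\hat{\bp}(\bx_i) \le r_{\max}$. If it is $1$ with unique $M_H$-chore $j_0$, the maximum-priced chore of $\bx_i$ has price at least $p_{j_0}$, so $\hat{\bp}(\bx_i) \le \bp(\bx_i) - p_{j_0} = r_{\max}$. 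Either way, $i \notin V$.

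For claim (1), I first strengthen the above to observe that any $i^* \in V$ must satisfy $|\bx_{i^*} \cap M_H| \ge 2$: otherwise the identical case split (now using $r_{i^*} \le r_{\max}$ in place of $r_{\max}$) gives $\hat{\bp}(\bx_{i^*}) \le r_{i^*} \le r_{\max}$, contradicting $i^* \in V$. Then I count $M_H$-chores globally. Each $i \in R$ contributes at most one by (I1), while each $i \in N \setminus R$ contributes at least one (since (I2) forces $\bp(\bx_i \cap M_H) \ge r_{\max} - r_i > 0$), with the violator $i^*$ contributing at least two. Hence the number of $M_H$-chores assigned to $R$ is at most $|M_H| - (|N \setminus R| + 1) \le (n-1) - |N \setminus R| - 1 = |R| - 2$ by Lemma~\ref{lem:n-1}. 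Since each $R$-agent has exactly $0$ or $1$ chore from $M_H$, at least two agents of $R$ have none, giving $|U| \ge 2$.

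For claim (3), I construct the required path backward from a violator $i^* \in V$ by a greedy extension. The current path always has the form $(i_k, j_{k+1}, i_{k+1}, \dots, j_\ell, i_\ell = i^*)$ with $j_{k'} \in \bx_{i_{k'}}$; I start from the trivial path $(i^*)$. While $i_k \notin U$, I pick some $j_k \in \bx_{i_k} \cap M_H$ (nonempty: an $R$-agent outside $U$ has exactly one such chore by (I1), and an $(N\setminus R)$-agent has at least one by (I2)), and then any $H$-neighbor $i_{k-1}$ of $j_k$ other than $i_k$, which exists because $d_H(j_k) \ge 2$. Freshness of $j_k$ on the path is immediate from disjointness of bundles, and freshness of $i_{k-1}$ is forced by the forest property of $H$: any collision $i_{k-1} = i_m$ with $m > k$ would close a cycle in $H$ formed by the two new edges $(i_k, j_k)$ and $(j_k, i_{k-1})$ together with the existing portion $i_k - j_{k+1} - \cdots - j_m - i_m$ of the path. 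Since the path strictly grows and $H$ is finite, the procedure terminates, and the only possible termination condition is $\bx_{i_k} \cap M_H = \emptyset$, which forces $i_k \in U$. The resulting path is an alternating path from $U$ to $V$.

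The main obstacle lies in claim (3): a priori there is no reason the backward walk should not get stuck at some non-$U$ agent, either because every available predecessor chore only leads back to an agent already on the path, or because no $j_k$ can be chosen at all. The forest structure of $H$ (inherited via non-degeneracy) is precisely the tool that forbids any such collision, and together with the degree bound $d_H(j) \ge 2$ on $M_H$ and the invariants (I1), (I2), it forces the walk to advance until it reaches an unmatched $R$-agent.
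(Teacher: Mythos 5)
Your proof is correct and follows essentially the same route as the paper's: the same case analysis on $|\bx_i \cap M_H|$ for agents in $R$ (via (I1) and $r_i = r_{\max}$) to get $V \subseteq N\setminus R$, the same count of $M_H$-chores against the bound $|M_H|\le n-1$ of Lemma~\ref{lem:n-1} to get $U\neq\emptyset$, and the same backward walk from a violator using (I2), the definition of $U$, $d_H(j)\ge 2$, and the forest property of $H$ to build the alternating path. The only (harmless) deviation is that you route claim (1) through claim (2) to conclude $|U|\ge 2$, whereas the paper's count does not need a violator: each agent of $N\setminus R$ already needs at least one $M_H$-chore by (I2), leaving at most $|R|-1$ such chores for $R$ and hence some unmatched agent.
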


\begin{proof}
For the first statement, observe that (I2) implies that every agent in $N \setminus R$ must have at least one chore from $M_H$ in $\bx$. Since Lemma~\ref{lem:n-1} guarantees $|M_H| \leq n - 1$, there must exist at least one agent in $R$ who receives no chore from $M_H$,  
which implies that $U \neq \emptyset$.

For the second statement, since $\bx$ is not pEF1, it follows from Observation~\ref{ob:pEF1} and (I2) that 
$$
\max_{i \in N} \hat{\bp}(\bx_i) > \min_{i' \in N} \bp(\bx_{i'}) \ge r_{\max},
$$
which implies $V \neq \emptyset$. Moreover, by (I1) and (I2), for any $i \in R$, we have $\hat{\bp}(\bx_i) \le r_{\max} \le \min_{i' \in N} \bp(\bx_{i'})$, which implies that $i \notin V$. Hence, $V \subseteq N \setminus R$.

For the third statement, pick any $i_1 \in V$. By (I2), agent $i_1$ must have at least one chore from $M_H$, hence there exists $j_1 \in \bx_{i_1}$ with $(i_1, j_1) \in E_H$. Since $d_H(j_1) \ge 2$, there exists another agent $i_2 \neq i_1$ with $(i_2, j_1) \in E_H$. If $i_2 \in U$, then we have found an alternating path from $U$ to $V$. Otherwise, $i_2 \in R \setminus U$ or $i_2 \in N \setminus R$. In either case, by the definition of $U$ and (I2), there exists $j_2 \in \bx_{i_2}$ with $(i_2, j_2) \in E_H$.
By continuing inductively in this way, we construct an alternating path. Since the number of vertices is finite and $H$ is a forest, this process must eventually reach some agent in $U$. This completes the proof.
\end{proof}

\paragraph{The FindpEF1 algorithm}
Interestingly, our algorithm builds upon the one proposed for the case of goods in~\cite{mahara2024polynomial}.

The \textit{FindpEF1} algorithm (Algorithm \ref{alg:1}) modifies the current optimal allocation $\bx$ to $\mathbf{LP}(\bw, \tau)$ to a new optimal allocation $\bx'$ to $\mathbf{LP}(\bw, \tau)$, while preserving the invariants (I1) and (I2).
By Lemma~\ref{lem:initial}, such an initial optimal allocation is guaranteed to exist.

The algorithm repeatedly performs the following operation as long as the current allocation is not pEF1.
It first finds an alternating path $P=(i_0, j_1, i_1, j_2,\ldots ,i_{\ell-1}, j_\ell, i_\ell)$ of the {\it shortest} length in the graph $H$, where $i_0\in U$ and $i_\ell \in V$.
By Lemma~\ref{lem:pathexist}, at least one such path is guaranteed to exist.
Next, it identifies an index along the path $P$. Specifically, it selects the largest index $a \in [\ell-1]$ such that $r_{\max} \ge \bp(\bx_{i_a} \cup \{j_{a+1}\} \setminus \{j_a\})$. This means that $i_a$ is the closest agent to $i_\ell$ on the path $P$ that satisfies this condition.
If no such $a$ exists, it is set to $0$.
The algorithm then reallocates the chores along the identified path.
Specifically, agent $i_a$ acquires $j_{a+1}$, agent $i_\ell$ relinquishes $j_{\ell}$, and each intermediate agent $i_k$ acquires $j_{k+1}$ and relinquishes $j_k$.
The allocations for the remaining agents remain unchanged.
Finally, the algorithm returns the updated allocation $\bx'$.
Figure~\ref{fig:2} illustrates this reallocation process along the identified path in $H$.
\begin{algorithm}[t]
\caption{An algorithm to find a pEF1 optimal allocation to $\mathbf{LP}(\bw, \tau)$}
\label{alg:1}
\begin{algorithmic}[1]
\Require The graph $H= (N,M_H; E_H)$, the dual optimal solution $\bp=(p_j)_{j \in M}$ to $\mathbf{Dual\text{-}LP}(\bw, \tau)$, and an optimal allocation $\bx=(\bx_i)_{i \in \mathcal{N}}$ to $\mathbf{LP}(\bw, \tau)$ satisfying (I1) and (I2).
\Ensure A pEF1 optimal allocation $\bx'=(\bx'_i)_{i \in \mathcal{N}}$ to $\mathbf{LP}(\bw, \tau)$.
\While{the current allocation $\bx$ is not pEF1}
\State Let $U$ be the set of unmatched agents and $V$ the set of violator agents for $\bx$.
\State Find a {\it shortest} alternating path $P=(i_0, j_1, i_1, j_2,\ldots ,i_{\ell-1}, j_\ell, i_\ell)$ in $H$, where $i_0\in U$ and $i_\ell \in V$.
\State Let $a \in [\ell-1]$ be the {\it largest} index such that $r_{\max} > \bp(\bx_{i_a} \cup \{j_{a+1}\} \setminus \{j_a\})$.
(If no such $a$ exists, set $a=0$.)
\For{$\forall i \in N$}
\If{$i=i_a$}
\State $\bx_{i_a}\gets \bx_{i_a} \cup \{j_{a+1}\}$
\ElsIf{$i=i_{\ell}$}
\State $\bx_{i_{\ell}}\gets \bx_{i_{\ell}} \setminus \{j_{\ell}\}$
\ElsIf{$i=i_k$ with $a<k<\ell$}
\State $\bx_{i_k}\gets \bx_{i_k} \cup \{j_{k+1}\} \setminus \{j_k\}$
\Else
\State $\bx_i\gets \bx_i$
\EndIf
\EndFor
\EndWhile
\Return $\bx$
\end{algorithmic}
\end{algorithm}
\begin{figure}[tbp]
    \centering
    \includegraphics[width=0.5\textwidth]{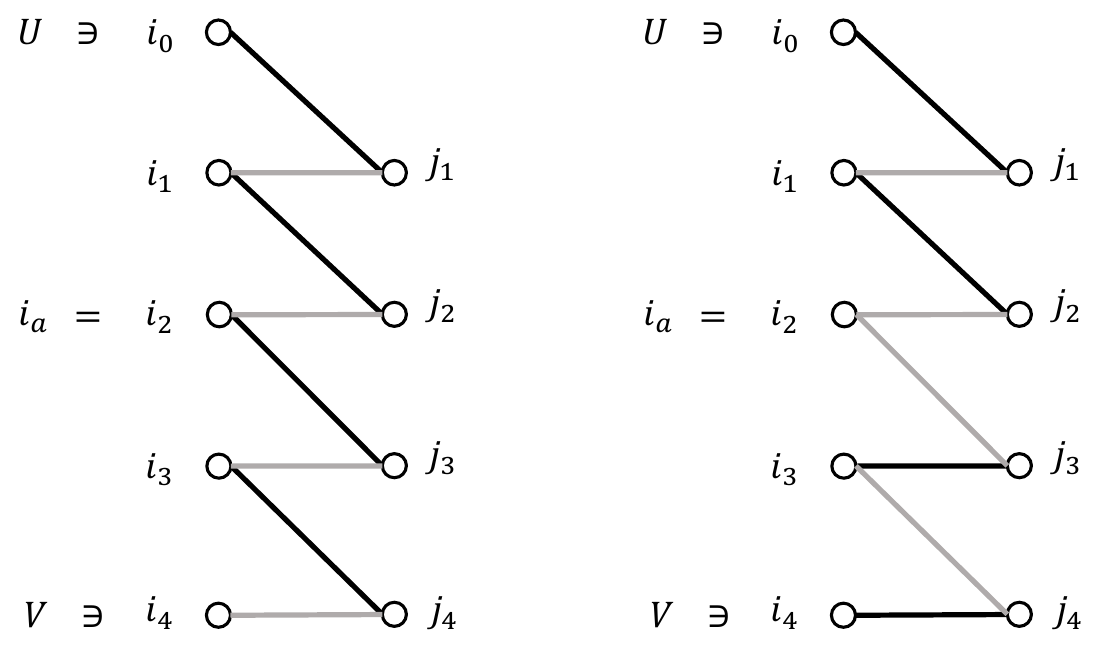}
    \caption{Illustration of the FindpEF1 algorithm (Algorithm~\ref{alg:1}). The left figure shows the state before applying a transfer operation, and the right figure shows the state after applying it. The black edges represent edges in $H$, while the gray edges represent allocation edges.}
    \label{fig:2}
\end{figure}

\paragraph{Correctness and termination of FindpEF1 algorithm}
We now prove the correctness and termination of Algorithm~\ref{alg:1}.
A {\it Transfer step} refers to the item transfer operation performed in lines 5--13 of Algorithm~\ref{alg:1}.
For any positive integer $t$, we define {\it time step $t$} as the state of the algorithm {\it immediately before} the transfer step executed in the $t$-th iteration of Algorithm~\ref{alg:1}.
Let $\bx^t = (\bx^t_i)_{i \in \mathcal{N}}$ denote the current allocation at time step $t$.
Similarly, let $U^t$ denote the set of unmatched agents and $V^t$ denote the set of violator agents at time step $t$.

Note that since item transfers are performed along alternating paths, the algorithm always maintains an optimal allocation to $\mathbf{LP}(\bw, \tau)$.
Hence, if it terminates, Algorithm~\ref{alg:1} returns an allocation that is both pEF1 and optimal for $\mathbf{LP}(\bw, \tau)$.

\begin{lemma}\label{lem:invariant}
Algorithm~\ref{alg:1} maintains invariants \textup{(I1)} and \textup{(I2)} throughout its execution.
\end{lemma}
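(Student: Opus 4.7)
Fix a single Transfer step in Algorithm~\ref{alg:1} and let $\bx$ denote the allocation immediately before it and $\bx'$ the allocation immediately after. Our plan is to show (I1) and (I2) for $\bx'$ by tracking, agent by agent, how the Transfer along the shortest alternating path $P = (i_0, j_1, i_1, \dots, i_{\ell-1}, j_\ell, i_\ell)$ modifies each bundle. The only agents whose bundles change are $i_a, i_{a+1}, \dots, i_\ell$; agents outside the path are untouched, so both invariants for them are inherited from the previous step, which holds by the initial choice (Lemma~\ref{lem:initial}) and induction on the iteration count.

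\textbf{Step for (I2).} I will treat $i_a$, the strictly intermediate agents, and $i_\ell$ separately. For $i_a$ (including the case $a=0$, in which $i_a = i_0 \in U$), we only add $j_{a+1}$, so $\bp(\bx'_{i_a}) = \bp(\bx_{i_a}) + p_{j_{a+1}} \ge \bp(\bx_{i_a}) \ge r_{\max}$ by (I2) on $\bx$. For each strictly intermediate $i_k$ with $a < k < \ell$, the maximality of $a$ guarantees $\bp(\bx_{i_k} \cup \{j_{k+1}\} \setminus \{j_k\}) > r_{\max}$, so $\bp(\bx'_{i_k}) > r_{\max}$. For $i_\ell$, which lies in $V$, we use $\bp(\bx'_{i_\ell}) = \bp(\bx_{i_\ell}) - p_{j_\ell} \ge \bp(\bx_{i_\ell}) - \max_{j \in \bx_{i_\ell}} p_j = \hat{\bp}(\bx_{i_\ell}) > r_{\max}$, where the last inequality is the definition of $V$.

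\textbf{Step for (I1) — the main obstacle.} Because intermediate agents swap $j_k$ for $j_{k+1}$ (both in $M_H$) and $i_\ell$ only loses a chore, the only agent whose number of $M_H$-chores can strictly increase is $i_a$. If $a = 0$ then $i_0 \in U$, so $|\bx_{i_0} \cap M_H| = 0$ before the step, hence $|\bx'_{i_0} \cap M_H| = 1$ and (I1) survives. The delicate case is $a \ge 1$: I need to argue that $i_a \notin R$, so that the increase does not violate (I1). Here is where the choice of $a$ via the price condition pays off. Since $i_a$ is an intermediate vertex on an alternating path we have $j_a \in \bx_{i_a} \cap M_H$. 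Assume for contradiction that $i_a \in R$; then (I1) on $\bx$ forces $\bx_{i_a} \cap M_H = \{j_a\}$, so $\bp(\bx_{i_a}) = r_{i_a} + p_{j_a} = r_{\max} + p_{j_a}$, and thus $\bp(\bx_{i_a} \cup \{j_{a+1}\} \setminus \{j_a\}) = r_{\max} + p_{j_{a+1}}$. Because $p_{j_{a+1}} > 0$ (Observation~\ref{ob:1}), this strictly exceeds $r_{\max}$, contradicting the defining condition $r_{\max} \ge \bp(\bx_{i_a} \cup \{j_{a+1}\} \setminus \{j_a\})$ that was used to pick $a$. Hence $i_a \notin R$, and (I1) is preserved.

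\textbf{Wrap-up.} Combining the two steps, every Transfer step maps an allocation satisfying (I1) and (I2) to another one satisfying (I1) and (I2); since the initial allocation supplied to Algorithm~\ref{alg:1} already satisfies both (by Lemma~\ref{lem:initial}), the invariants hold throughout the execution. The only subtlety in the argument is the (I1) analysis at $i_a$, and the entire difficulty there is resolved by the way line~4 of the algorithm selects $a$: the price condition is precisely what rules out $i_a \in R$ in the intermediate case.
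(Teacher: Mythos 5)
Your proof is correct and takes essentially the same approach as the paper: the same case analysis along the alternating path, the same three inequalities for (I2), and the same contradiction with the maximality/price condition defining $a$ (using $p_{j_{a+1}}>0$) to rule out the problematic case for (I1) at $i_a$. Your split into $a=0$ versus $a\ge 1$ is just a rephrasing of the paper's split into $i_a\in U^t$ versus $i_a\in R\setminus U^t$, and if anything your write-up of the $i_a$ case is slightly more explicit than the paper's.
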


\begin{proof}
We prove by induction on $t$ that both invariants (I1) and (I2) are maintained in each iteration of the algorithm.
The initial optimal allocation $\bx^1$ satisfies (I1) and (I2) by Lemma~\ref{lem:initial}.

Now assume that at time step $t$, the current allocation $\bx^t$ satisfies \textup{(I1)} and \textup{(I2)}.
Let $P=(i_0, j_1, i_1, j_2,\ldots ,i_{\ell-1}, j_\ell, i_\ell)$ be the shortest alternating path in $H$ computed in Algorithm~\ref{alg:1} during this iteration.
We show that after the transfer step, the updated allocation $\bx^{t+1}$ also satisfies \textup{(I1)} and \textup{(I2)}. 

\smallskip
\noindent\textbf{Invariant (I1):} 
Fix any agent $i \in R$.
We consider the following four cases:
\begin{itemize}
    \item $\bx^{t+1}_i = \bx^t_i$.\\
    In this case, (I1) clearly holds.
    \item $i = i_\ell$.\\
    By Lemma~\ref{lem:pathexist}, we have $i_\ell \in V^t \subseteq N \setminus R$, and hence $i \neq i_\ell$. Thus, this case does not occur.
    \item $i = i_a$.\\
    In this case, agent $i$ must belong to $U^t$. Indeed, if $i \in R \setminus U^t$, then we have $\bp(\bx^t_{i_a} \cup \{j_{a+1}\} \setminus \{j_{a}\})  \ge r_{\max}$ since $i \in R$, contradicting the choice of $i_a$. Thus, (I1) is preserved after the update.
    \item Otherwise.\\
    Agent $i$ acquires one chore from $M_H$ and relinquishes one chore from $M_H$. Hence, the number of chores from $M_H$ assigned to $i$ remains unchanged, and (I1) is preserved.
\end{itemize}

\smallskip
\noindent\textbf{Invariant (I2):} 
Fix any agent $i \in N$.
We consider the following four cases:
\begin{itemize}
    \item $\bx^{t+1}_i = \bx^t_i$.\\
    In this case, (I2) clearly holds.
    \item $i = i_\ell$.\\
    Since $i \in V^t$, we have 
    $$
    \bp(\bx^{t+1}_i) = \bp(\bx^t_i \setminus \{j_\ell\}) \ge \hat{\bp}(\bx^t_i) > r_{\max}.
    $$
    Hence, (I2) continues to hold after the update.
    \item $i = i_a$.\\
    We have
    $$
    \bp(\bx^{t+1}_i) = \bp(\bx^t_i \cup \{j_{a+1}\}) > \bp(\bx^t_i) \ge r_{\max}.
    $$
    Thus, (I2) continues to hold after the update.
    \item Otherwise.\\
    Let $i = i_k$ for some $a < k < \ell$ along the path $P$. 
    Agent $i_k$ acquires $j_{k+1}$ and relinquishes $j_k$. 
    By the choice of index $a$, the total prices of the updated bundle satisfies
    $$
    \bp(\bx^{t+1}_{i_k}) = \bp(\bx^t_{i_k} \cup \{j_{k+1}\} \setminus \{j_k\}) \ge r_{\max},
    $$
    otherwise $i_k$ would have been chosen as $i_a$, contradicting the definition of $a$. 
    Hence, (I2) continues to hold after the update.
\end{itemize}

Therefore, both invariants (I1) and (I2) are maintained in each iteration of the algorithm.
\end{proof}

Let us now discuss the termination of Algorithm~\ref{alg:1}.
For each $i\in N$, we define the {\it level} of agent $i$ at time step $t$, denoted by $\mathrm{level}(i,t)$, as {\it half} the length of the shortest alternating path from $U^t$ to agent $i$ in the graph $H$.
It is easy to verify that, if invariants (I1) and (I2) hold, then for any agent $i$ there exists at least one alternating path from $U^t$ to $i$. This follows by an argument similar to that used in the proof of the third statement in Lemma~\ref{lem:pathexist}.
Note that $\mathrm{level}(i,t) \in \{0,1,\ldots, n-1\}$.

Similarly, we define the level of chore $j \in M_H$ at time step $t$, denoted by $\mathrm{level}(j,t)$, as {\it half} the length of the shortest alternating path from $U^t$ to agent $i$ with $j \in \bx^t_i$ in $H$ such that the path includes $j$.
Note that for any $j\in \bx^t_i$, $\mathrm{level}(j,t) \ge \mathrm{level}(i,t)$ holds.
We say that a chore $j\in M_H$ is {\it critical} for agent $i$ if $j\in \bx^t_i$ and $\mathrm{level}(i,t)=\mathrm{level}(j,t)$.
We denote the set of all critical chores for agent $i$ at time step $t$ by $\overline{\bx}^t_i$.

Finally, we define the potential function $\Phi(t)$ at time step $t$ \footnote{A similar potential function is used in~\cite{barman2018finding}.} as
\begin{equation*}
\Phi(t) :=\sum_{i \in N} n(n-\mathrm{level}(i,t)) + |\overline{\bx}^t_i|.
\end{equation*}

We show that the levels of agents and chores do not decrease during the execution of Algorithm~\ref{alg:1}. 
This monotonicity will be crucial for establishing the termination of the algorithm.
\begin{lemma}\label{lem:levelnondec}
For any time step $t$, we have $\mathrm{level}(i,t) \le \mathrm{level}(i,t+1)$ and $\mathrm{level}(j,t) \le \mathrm{level}(j,t+1)$ for all $i\in N$ and $j \in M_H$.
\end{lemma}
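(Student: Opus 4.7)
The plan is to prove $\mathrm{level}(v, t) \le \mathrm{level}(v, t+1)$ by a distance-non-decreasing argument in the spirit of shortest-path augmentation analyses, adapted to the forest structure of $H$. Write $d^t(v)$ for the length of the shortest alternating path from $U^t$ to $v$ in the allocation $\bx^t$ and $d^{t,t+1}(v)$ for the same quantity measured in $\bx^{t+1}$ but still starting from $U^t$. The transfer can only shrink the set of unmatched agents: $U^{t+1} \subseteq U^t$, with $i_0$ dropping out of $U$ precisely when $a=0$, and no agent ever joining $U$ during a transfer step. Consequently $d^{t+1}(v) \ge d^{t,t+1}(v)$, so it suffices to show $d^{t,t+1}(v) \ge d^t(v)$ for every agent $v \in N$; the analogous statement for each chore $j \in M_H$ follows by the same reasoning applied to the alternating path ending at its current holder, which must traverse $j$ by uniqueness of tree paths.

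The key structural input is that $H$ is a forest, so the tree path between any two vertices is unique, and whether it is alternating is determined entirely by the allocation status of its edges. The transfer alters only the edges $e_{2a+1}, \dots, e_{2\ell}$ of $P$ (labelling the edges of $P$ as $e_1, \dots, e_{2\ell}$ in order), flipping each between allocation and non-allocation edge, while leaving every other edge of $H$ untouched. Thus, if an alternating path $Q$ in $\bx^{t+1}$ from some $u \in U^t$ to $v$ avoids every flipped edge, then $Q$ is also an alternating path in $\bx^t$ starting at $u \in U^t$, and $d^t(v) \le |Q|$. I would dispose of this case first and then focus on the situation in which $Q$ uses at least one flipped edge.

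For that harder case I argue by contradiction: assume some agent $v$ has $d^{t,t+1}(v) < d^t(v)$, choose such a $v$ minimizing $d^{t,t+1}(v)$, and let $Q$ be a shortest alternating path in $\bx^{t+1}$ from some $u \in U^t$ to $v$. Because $H$ is a tree, the intersection $Q \cap P$ is a single contiguous subpath, so there is a well-defined first vertex at which $Q$ enters the flipped portion of $P$. A parity analysis at this entry vertex, exploiting that the alternation pattern on the flipped edges is exactly the reverse of the pattern in $\bx^t$, severely restricts the possible entry configurations. In each surviving configuration I would splice a suitable prefix or reversed segment of $P$ with the portion of $Q$ lying off $P$ to produce an alternating path in $\bx^t$ from $U^t$ to $v$ of length at most $|Q|$, using the shortest-path property of $P$ in $\bx^t$, the choice of $a$ as the largest index with $r_{\max} \ge \bp(\bx_{i_a} \cup \{j_{a+1}\} \setminus \{j_a\})$, and invariants (I1)--(I2) to justify the splicing.

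The main obstacle I anticipate is the case $a \ge 1$, where only the suffix of $P$ is flipped. An alternating path in $\bx^{t+1}$ can now legitimately enter the flipped segment from a subtree rooted at some $i_k$ with $k > a$ and traverse it backwards toward $i_a$, a direction that was not alternating in $\bx^t$. The vertex $i_a$ is especially delicate since in $\bx^{t+1}$ it retains $j_a$ in addition to acquiring $j_{a+1}$, leaving two distinct allocation edges incident to it. Handling this cleanly calls for a vertex-by-vertex case split on which subtree of $H$ (obtained by deleting the edges of $P$) contains the source $u$ and which contains the target $v$; the forest property keeps the enumeration local and finite.
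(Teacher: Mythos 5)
Your overall strategy (reduce to non-decrease of alternating-path distances from $U^t$, note that only the edges $e_{2a+1},\dots,e_{2\ell}$ of $P$ flip allocation status, dispose of paths avoiding flipped edges, and handle the rest by splicing with segments of $P$) points in a workable direction and is in the same spirit as the paper's argument. But as written there is a genuine gap: the entire difficulty of the lemma sits in the case you defer. Your treatment of a new shortest path $Q$ that uses flipped edges consists of ``a parity analysis at this entry vertex\dots severely restricts the possible entry configurations'' and ``in each surviving configuration I would splice a suitable prefix or reversed segment of $P$\dots,'' and you then explicitly flag the $a\ge 1$ case---where $Q$ can enter the flipped suffix from a subtree hanging off some $i_k$ with $k>a$ and traverse it toward $i_a$, and where $i_a$ ends up incident to two allocation edges---as an unresolved obstacle. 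That case is not a corner case; it is the theorem. Nothing in the proposal establishes that the spliced object is alternating at time $t$, starts in $U^t$, and is no longer than $Q$, and the invocation of the choice of $a$ and of (I1)--(I2) is never made concrete (note that (I1)--(I2) constrain prices and bundle sizes, not path lengths, so it is not clear what role they would play in the splicing).

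The paper avoids this combinatorial explosion with one idea you are missing: it decomposes the simultaneous transfer into single-chore sub-steps $t=t_0<t_1<\cdots<t+1$, moving $j_\ell, j_{\ell-1},\dots,j_{a+1}$ one at a time. At each sub-step exactly one new allocation edge $(i_{\ell-r-1},j_{\ell-r})$ is created, so the only case distinction needed is whether the new shortest path uses that single edge; if it does, appending the old allocation edge $(j_{\ell-r},i_{\ell-r})$ to the unchanged prefix of that path yields $\mathrm{level}(i_{\ell-r},t_r)\le \mathrm{level}(i_{\ell-r-1},t_{r+1})<\mathrm{level}(i_{\ell-r-1},t_r)=\ell-r-1$, contradicting $\mathrm{level}(i_{\ell-r},t_r)=\ell-r$. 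Monotonicity for all other agents then follows by decomposing their shortest path at the unique new allocation edge. I would recommend either adopting this sub-step decomposition or actually carrying out your case analysis in full; in its current form the proposal is a plan, not a proof. (Two smaller imprecisions: $U$ can also shrink at $i_a$ when $a\ge 1$ and $i_a\in U^t$, not only at $i_0$ when $a=0$; and for the chore levels, the holder of a transferred chore changes between $t$ and $t+1$, so ``the path ending at its current holder'' refers to different agents at the two time steps and needs to be argued, not just asserted.)
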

\begin{proof}
    Let $P=(i_0, j_1, i_1, j_2,\ldots ,i_{\ell-1}, j_\ell, i_\ell)$ be a shortest alternating path in $H$, where $i_0 \in U^t$ and $i_\ell \in V^t$, computed by Algorithm~\ref{alg:1} in the $t$-th iteration.
    Let $a\in \{0\} \cup [\ell-1]$ denote the index determined by Algorithm~\ref{alg:1} at the same iteration.
    To simplify the analysis, we assume that instead of transferring multiple chores simultaneously in each iteration, the algorithm transfers the chores $j_\ell, j_{\ell-1}, \ldots, j_{a+1}$ one by one in this order. Moreover, we introduce a finer sequence of time steps $t = t_0 < t_1 < \cdots < t_{\ell - a - 1} = t+1$. Specifically, time step $t_r$ for $r \in \{0,1,\ldots, \ell - a -1 \}$ denotes the state {\it immediately before} the chore $j_{\ell - r}$ is transferred from agent $i_{\ell - r}$ to agent $i_{\ell - r - 1}$.

    We show that $\mathrm{level}(i,t) \le \mathrm{level}(i,t+1)$ for all $i\in N$.
    The argument for $\mathrm{level}(j,t) \le \mathrm{level}(j,t+1)$ for $j \in M_H$ is analogous. 
    It suffices to show that $\mathrm{level}(i,t_r) \le \mathrm{level}(i,t_{r+1})$ for all $i\in N$ for any $r \in \{0,1,\ldots, \ell - a -1 \}$.

    Fix an arbitrary $r \in \{0,1,\ldots, \ell - a -1 \}$ and $i\in N$.
    Observe that the path $(i_0, j_1,\ldots, i_{\ell-r-1}, j_{\ell-r}, i_{\ell-r})$ is a shortest alternating path from $U^{t_r}$ to $i_{\ell-r}$ at time step $t_r$.
    Thus, we have $\mathrm{level}(i_q,t_r) = q$ for all $q \in \{0\}\cup [\ell-r]$.
    Moreover, since agents in $R$ once matched never become unmatched again, we have $U^{t_r}\supseteq U^{t_{r+1}}$.
    We consider the following two cases:
    \begin{itemize}
        \item $i=i_{\ell-r-1}$.\\
        Suppose, for the sake of contradiction, that $\mathrm{level}(i_{\ell-r-1},t_r) > \mathrm{level}(i_{\ell-r-1},t_{r+1})$. 
        Consider a shortest alternating path $Q$ from $U^{t_{r+1}}$ to agent $i_{\ell-r-1}$ at time step $t_{r+1}$.
        This path $Q$ must include the newly created allocation edge $(i_{\ell-r-1},j_{\ell-r})$.
        
        Note that the subpath of $Q$ from its starting point to $j_{\ell-r}$ exists at both time steps $t_{r}$ and $t_{r+1}$.
        If $Q$ contains $i_{\ell-r}$, then we have $\mathrm{level}(i_{\ell-r},t_{r}) < \mathrm{level}(i_{\ell-r-1},t_{r+1})$.
        Otherwise, by concatenating the subpath of $Q$ from its starting point to $j_{\ell-r}$ with the edge $(i_{\ell-r},j_{\ell-r})$ at time step $t_r$, we have $\mathrm{level}(i_{\ell-r},t_{r}) \le \mathrm{level}(i_{\ell-r-1},t_{r+1})$.
        In either case, we obtain 
        $$\ell-r-1=\mathrm{level}(i_{\ell-r-1},t_r) > \mathrm{level}(i_{\ell-r-1},t_{r+1}) \ge \mathrm{level}(i_{\ell-r},t_{r})=\ell-r,$$
        which is a contradiction.
        Therefore, we conclude that $\mathrm{level}(i_{\ell-r-1},t_r) \le \mathrm{level}(i_{\ell-r-1},t_{r+1})$.
        
        \item Otherwise.\\
        Consider a shortest alternating path $Q'$ from $U^{t_{r+1}}$ to agent $i$ at time step $t_{r+1}$.
        If $Q'$ does not contain the newly created allocation edge $(i_{\ell-r-1},j_{\ell-r})$, then $Q'$ exists at both time steps $t_{r}$ and $t_{r+1}$, and we have $\mathrm{level}(i,t_r) \le \mathrm{level}(i,t_{r+1})$.
        Otherwise, if $Q'$ contains $(i_{\ell-r-1},j_{\ell-r})$, it decomposes into a path to $i_{\ell-r-1}$ followed by a path from $i_{\ell-r-1}$ to $i$.
        Since $\mathrm{level}(i_{\ell-r-1},t_r) \le \mathrm{level}(i_{\ell-r-1},t_{r+1})$ and the latter alternating path exists at both time steps $t_{r}$ and $t_{r+1}$, 
        we have $\mathrm{level}(i,t_{r})\le \mathrm{level}(i,t_{r+1})$.
    \end{itemize}

\end{proof}

Using the monotonicity of the levels established above, we can now bound the number of iterations of Algorithm~\ref{alg:1}.
\begin{lemma}\label{lem:terminate}
Algorithm~\ref{alg:1} terminates after $O(n^3)$ iterations.
\end{lemma}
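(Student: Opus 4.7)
The plan is to show that the potential function $\Phi(t)$ is bounded in $[0, O(n^3)]$ and strictly decreases by at least $1$ in each iteration, from which the $O(n^3)$ iteration bound follows immediately. For the bound: since $\mathrm{level}(i,t)\in\{0,1,\ldots,n-1\}$, one has $\sum_{i\in N}n(n-\mathrm{level}(i,t))\le n\cdot n^2 = n^3$; and since $\overline{\bx}^t_i\subseteq \bx^t_i\cap M_H$ with $(\bx^t_i)_{i\in N}$ partitioning $M$, Lemma~\ref{lem:n-1} gives $\sum_{i\in N}|\overline{\bx}^t_i|\le |M_H|\le n-1$. Hence $\Phi(t)=O(n^3)$, and $\Phi(t)\ge 0$ trivially.

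To establish the strict decrease, I would argue by cases on whether any agent's level strictly increases during the iteration. In Case A, some agent's level strictly increases; by the monotonicity of Lemma~\ref{lem:levelnondec}, the first term $\sum_{i}n(n-\mathrm{level}(i,t))$ decreases by at least $n$, while the second term always lies in $[0,n-1]$, so its change is at most $n-1$. Thus $\Phi(t+1)-\Phi(t)\le -n+(n-1) = -1$. In Case B, no agent level changes, so the first term is unchanged and I would show the second term strictly decreases. Using that $H$ is a forest, I would establish that on the shortest alternating path $P=(i_0,j_1,\ldots,j_\ell,i_\ell)$ chosen in this iteration one has $\mathrm{level}(i_k,t)=\mathrm{level}(j_k,t)=k$ for every $k$: if instead $\mathrm{level}(i_k,t)<k$, then concatenating a shorter alternating path from $U^t$ to $i_k$ with the suffix $(i_k,j_{k+1},\ldots,i_\ell)$ of $P$ would give a shorter alternating path from $U^t$ to $V^t$, contradicting the minimality of $P$ (the forest property is used to rule out vertex repetition in the concatenation). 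For each $k\in\{a+1,\ldots,\ell\}$, chore $j_k$ was critical for $i_k$ (both having level $k$); after the transfer, $j_k$ is held by $i_{k-1}$ whose level remains $k-1$ in Case B, and by monotonicity $\mathrm{level}(j_k,t+1)\ge k>k-1$, so $j_k$ loses its critical status. Moreover, for any chore $j$ not involved in the transfer, $\mathrm{level}(j,t+1)\ge \mathrm{level}(j,t)$ combined with unchanged agent levels implies that no previously non-critical chore becomes critical. Hence $\sum_i|\overline{\bx}^t_i|$ strictly decreases by at least $\ell-a\ge 1$.

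Combining the two cases, $\Phi$ strictly decreases by at least $1$ in each iteration, so the algorithm terminates within $O(n^3)$ iterations. The main obstacle is Case B: establishing that the levels along the shortest alternating path equal positions (which crucially uses the forest structure of $H$) and carefully tracking how the transfer affects the set of critical chores when chore levels may themselves shift. The forest structure of $H$, the non-negativity and monotonicity of levels, and the property that $j\in\bx^t_i$ implies $\mathrm{level}(j,t)\ge\mathrm{level}(i,t)$ together give the needed control.
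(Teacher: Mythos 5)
Your proof is correct and follows essentially the same approach as the paper: the same potential function $\Phi$, the same appeal to level monotonicity (Lemma~\ref{lem:levelnondec}), the same identification $\mathrm{level}(i_k,t)=\mathrm{level}(j_k,t)=k$ along the shortest alternating path, and the same bookkeeping showing non-critical chores stay non-critical while the transferred chores lose criticality. The only difference is organizational---you split into the global cases ``some agent's level increases'' versus ``no levels change,'' whereas the paper bounds each agent's individual contribution to $\Phi$---but the underlying argument is the same.
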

\begin{proof}
    We will show that the value of the potential function decreases by at least one in each iteration of Algorithm~\ref{alg:1}, i.e., $\Phi(t) > \Phi(t+1)$ for any $t$.
    Note that $\Phi(t)$ is non-negative, integral, and its maximum value is $O(n^3)$.
    Thus, once we show this, the proof is complete.

    Fix any time step $t$.
    Let $P=(i_0, j_1, i_1, j_2,\ldots ,i_{\ell-1}, j_\ell, i_\ell)$ be a shortest alternating path in $H$, computed by Algorithm~\ref{alg:1} in the $t$-th iteration.
    Let $a\in \{0\} \cup [\ell-1]$ denote the index determined by Algorithm~\ref{alg:1} at the same iteration.
    Note that $\mathrm{level}(i_k,t)=k$ for any $k\in \{0\}\cup [\ell]$ and $\mathrm{level}(j_k,t)=k$ for any $k\in [\ell]$.

    Fix an arbitrary $i \in N$.
    We analyze the contribution of each agent to $\Phi(t)$ by considering the following two cases:
    \begin{itemize}    
        
        
        \item $i = i_{k}$ for some $a < k \le \ell$.\\
        We show that either $\mathrm{level}(i,t+1) > \mathrm{level}(i,t)$, or $\mathrm{level}(i,t+1)=\mathrm{level}(i,t)$ and $\overline{\bx}^{t+1}_i \subsetneq \overline{\bx}^{t}_i$.
        Note that if this holds, then the contribution of agent $i$ to $\Phi(t)$ strictly decreases, since $|\overline{\bx}^{t}_i| \le n-1$ by Lemma~\ref{lem:n-1}.
        
        By Lemma~\ref{lem:levelnondec}, we know that $\mathrm{level}(i,t+1) \ge \mathrm{level}(i,t)$.
        Suppose that $\mathrm{level}(i,t+1) = \mathrm{level}(i,t)$.
        Now agent $i$ relinquishes the critical chore $j_{k}$.
        Moreover, for any $j \in \bx^t_{i} \setminus \overline{\bx}^t_{i}$, 
        we have $$\mathrm{level}(j,t+1)\ge \mathrm{level}(j,t)> \mathrm{level}(i,t)=\mathrm{level}(i,t+1),$$
        where the first inequality follows from Lemma~\ref{lem:levelnondec} and the second follows from the fact that $j$ was not critical for agent $i$ at time step $t$.
        This shows that any chore that was non-critical at time $t$ remains non-critical at time $t+1$.

        Finally, if $i = i_k$ for some $a < k < \ell$, then agent $i$ acquires the chore $j_{k+1}$.
        We have 
        $$\mathrm{level}(j_{k+1},t+1)\ge \mathrm{level}(j_{k+1},t)=k+1>\mathrm{level}(i,t+1) = \mathrm{level}(i,t) = k.$$
        Thus, $j_{k+1}$ is not critical for agent $i$ at time step $t+1$. 
        Therefore, we conclude that $\overline{\bx}^{t+1}_i \subsetneq \overline{\bx}^{t}_i$.
        
        \item Otherwise.\\
        In this case, either $i = i_a$ or $\bx^{t+1}_i = \bx^{t}_i$.
        We show that either $\mathrm{level}(i,t+1) > \mathrm{level}(i,t)$, or $\mathrm{level}(i,t+1)=\mathrm{level}(i,t)$ and $\overline{\bx}^{t+1}_i \subseteq \overline{\bx}^{t}_i$.
        Note that if this holds, then the contribution of agent $i$ to $\Phi(t)$ does not increase.
        By Lemma~\ref{lem:levelnondec}, we know that $\mathrm{level}(i,t+1) \ge \mathrm{level}(i,t)$.
        Suppose that $\mathrm{level}(i,t+1)=\mathrm{level}(i,t)$.
        Then, by a similar argument as above, we see that any chore that was non-critical at time $t$ remains non-critical at time $t+1$.

        If $i = i_a$, then agent $i$ acquires the chore $j_{a+1}$.
        By a similar argument, we can show that $j_{a+1}$ is not critical for agent $i$ at time step $t+1$.
        Therefore, we conclude that $\overline{\bx}^{t+1}_{i} \subseteq \overline{\bx}^{t}_{i}$.
    \end{itemize}
    Consequently, the contribution of each agent to the potential $\Phi$ does not increase and the contribution of at least one agent to $\Phi$ strictly decreases.
    Therefore, we conclude that $\Phi(t) > \Phi(t+1)$.

\end{proof}

We are now ready to prove Lemma~\ref{lem:pEF1inG}.
\begin{proof}[Proof of Lemma~\ref{lem:pEF1inG}]
    By Lemma~\ref{lem:initial}, it is guaranteed that there exists an optimal allocation to $\mathbf{LP}(\bw, \tau)$ satisfying (I1) and (I2).
    By Lemma~\ref{lem:terminate}, Algorithm~\ref{alg:1} terminates after $O(n^3)$ iterations and outputs a pEF1 optimal allocation to $\mathbf{LP}(\bw, \tau)$.
    This completes the proof.
\end{proof}
\section{Existence of EF1 and fPO Allocation of Indivisible Chores}\label{sec:fPO}
In this section, we prove Theorem~\ref{thm:fPO}.
Note that the proof method closely follows that of Theorem 3.2 in~\cite{barman2018finding} for the setting of goods.

\begin{proof}[Proof of Theorem~\ref{thm:fPO}]
Let $I = (N, M, \{c_i\}_{i \in N})$ be a fair division instance. For each $z \in \mathbb{N}$, define a perturbation parameter $\epsilon_z := \frac{1}{z}\log_{(q_{nm})^n} (1 + \beta)$, where $\beta := \min \{ \frac{\delta'}{2 (c_{\max})^n}, \frac{\delta}{2m c_{\max}}, \frac{{\delta}^{n}}{2m (q_{nm})^{n-1} (c_{\max})^{n}}\}$. 
Let $I^{\epsilon_z}$ denote the instance obtained by perturbing $I$ with $\epsilon_z$.
Note that $\epsilon_z$ is sufficiently small to satisfy the assumptions of Lemmas~\ref{lem:non-degenerate}--\ref{lem:po}.

By the proof of Theorem~\ref{thm:main}, each perturbed instance $I^{\epsilon_z}$ admits an allocation $\bx^z$ that is both EF1 and fPO. Furthermore, by Lemma~\ref{lem:ef1}, each $\bx^z$ is also EF1 in the original instance $I$.

Hence, it suffices to show that the sequence of EF1 allocations $(\bx^z)_{z \in \mathbb{N}}$ contains an fPO allocation in $I$. Since $\bx^z$ is fPO in $I^{\epsilon_z}$, Lemma~\ref{lem:cha} implies that for each $z \in \mathbb{N}$, there exists a positive weight vector $\bw^z = (w^z_i)_{i \in N} \in \mathbb{R}^n_{>0}$ such that $\bx^z$ is an optimal allocation to $\mathbf{LP}(\bw^z)$. By a similar argument to the proof of Lemma~\ref{lem:po}, we may assume that
\begin{equation}\label{eq:a}
\max_{i \in N} w^z_i = 1 
\quad \text{and} \quad
w^z_i \ge \left( \frac{\delta}{q_{nm} c_{\max}} \right)^{n-1}
\quad \text{for all } i \in N.
\end{equation}
Thus, the sequence $(\bw^z)_{z\in \mathbb{N}}$ is bounded.
Since the number of possible allocations is finite, there exists a subsequence $(\bx^{z_k})_{k \in \mathbb{N}}$ such that $\bx^{z_k} = \bx^*$ for all $k$, for some fixed allocation $\bx^*$. By the Bolzano–Weierstrass theorem, there exists a further subsequence $(\bw^{z_{t_k}})_{k \in \mathbb{N}}$ that converges to some $\bw^* \in \mathbb{R}^n_{> 0}$, i.e., $\lim_{k \rightarrow \infty} \bw^{z_{t_k}} = \bw^*$. Reindexing if necessary, we denote this subsequence as $(\bx^{z_k}, \bw^{z_k})_{k \in \mathbb{N}}$, where $\bx^{z_k} = \bx^*$ for any $k \in \mathbb{N}$ and $\lim_{k \rightarrow \infty} \bw^{z_{k}} = \bw^*$.
Note that $w^*_i > 0$ for all $i \in N$ by (\ref{eq:a}).

By construction of the perturbed instance $I^{\epsilon_{z_k}}$, we have
$$
c_{ij} \le c^{\epsilon_{z_k}}_{ij} \le c_{ij} q^{\epsilon_{z_k}}_{ij}, \quad \forall i \in N, \forall j \in M.
$$
Hence, by the sandwich theorem, $\lim_{k \to \infty} c^{\epsilon_{z_k}}_{ij} = c_{ij}$.
Let $\bp^{z_k}$ denote an optimal solution to the dual program $\mathbf{Dual\text{-}LP}(\bw^{z_k})$. From the complementary slackness theorem, for any $j \in \bx^*_i$,
$$
p^{z_k}_j = w^{z_k}_i c_{ij} q^{\epsilon_{z_k}}_{ij}.
$$
Taking the limit as $k \to \infty$, we obtain
$$
\lim_{k \to \infty} p^{z_k}_j = w^*_i c_{ij}, \quad \text{for\ any } j \in \bx^*_i.
$$
Define $\bp^* := \lim_{k \to \infty} \bp^{z_k}$.
From the feasibility of each $\bp^{z_k}$ in the dual program, we have
$$
p^{z_k}_j \le w^{z_k}_i c_{ij} q^{\epsilon_{z_k}}_{ij}, \quad \forall i \in N, \forall j \in M.
$$
Taking the limit, it follows that
$$
p^*_j \le w^*_i c_{ij}, \quad \forall i \in N, \forall j \in M.
$$
Thus, the pair $(\bx^*, \bp^*)$ satisfies the feasibility and complementary slackness conditions for $\mathbf{LP}(\bw^*)$ and $\mathbf{Dual\text{-}LP}(\bw^*)$. 
By the complementary slackness theorem, $\bx^*$ is an optimal solution to $\mathbf{LP}(\bw^*)$, and hence, $\bx^*$ is an fPO allocation in the original instance $I$ by Lemma~\ref{lem:cha}.
\end{proof}
\section{Finding an EF1 and PO Allocation for Constant Number of Agents}\label{sec:const}
In this section, we prove Theorem~\ref{thm:const}.
We assume that each cost value is a positive integer, i.e., $c_{ij}\in \mathbb{Z}_{> 0}$ for all $i\in N$ and $j \in M$.
Our approach combines Theorem~\ref{thm:fPO} with the results from~\cite{branzei2024algorithms}.

We begin by introducing some terminology from~\cite{branzei2024algorithms}.
Given a fractional allocation $\bx$, the {\it consumption~graph} $G_{\bx}$ is defined as the bipartite graph $(N,M;E)$, where $(i,j)\in E$ if and only if $x_{ij} > 0$.
We say that $\bc \in \mathbb{R}^n_{\ge 0}$ is an {\it fPO cost vector} if there exists an fPO fractional allocation $\bx$ such that $\bc = (c_1(\bx_1),\ldots , c_n(\bx_n))$.
A collection of bipartite graphs $\mathcal{G}$ is said to be {\it rich} for a given fair division instance $(N,M,\{c_{i}\}_{i \in N})$ if, for any fPO cost vector $\bc$, there exists a fractional allocation $\bx$ with $\bc=(c_1(\bx_1),\ldots , c_n(\bx_n))$ such that the consumption graph $G_{\bx}$ belongs to $\mathcal{G}$.

Bogomolnaia~\cite{bogomolnaia2019dividing} showed that for each fPO cost vector $\bc$ of a non-degenerate instance, there is a unique fractional allocation $\bx$ such that $\bc=(c_1(\bx_1),\ldots , c_n(\bx_n))$ (see Lemma 2 in~\cite{bogomolnaia2019dividing}).

The following theorem shows that such a rich family of graphs can be constructed in polynomial time when the number of agents is constant.
\begin{theorem}[Lemma 4.4 in~\cite{branzei2024algorithms}]\label{thm:branzei}
For a constant number of agents $n$, a rich family of graphs $\mathcal{G}$ can be constructed in time $O(m^{\frac{n(n-1)}{2}+1})$
and has at most $(2m-1)^{\frac{n(n-1)}{2}}$ elements.
    
\end{theorem}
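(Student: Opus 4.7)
The plan is to construct $\mathcal{G}$ by enumerating the cells of a natural hyperplane arrangement in weight space that controls the combinatorial structure of fPO allocations. By Lemma~\ref{lem:chaoffPO}, every fPO fractional allocation $\bx$ is optimal for $\mathbf{LP}(\bw)$ for some positive weight vector $\bw \in \mathbb{R}_{>0}^n$, and by LP complementary slackness the consumption graph $G_{\bx}$ is contained in the \emph{min-cost graph} $B(\bw) := \{(i,j) \in N \times M : w_i c_{ij} = \min_{i' \in N} w_{i'} c_{i'j}\}$. Conversely, any fractional allocation supported on $B(\bw)$ is optimal for $\mathbf{LP}(\bw)$, hence fPO. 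Therefore, to obtain a rich family it suffices to enumerate enough bipartite graphs so that for every fPO cost vector $\bc$ at least one allocation realizing $\bc$ has its consumption graph in the enumerated family, and a natural candidate is the collection of all realizable $B(\bw)$ together with a canonical acyclic support subgraph for each.

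The key structural observation is that $B(\bw)$ depends only on the signs of $w_i c_{ij} - w_{i'} c_{i'j}$ over all triples $(i,i',j)$, and equivalently for each pair $(i,i') \in \binom{N}{2}$ only on the position of the ratio $\rho_{ii'} := w_{i'}/w_i$ relative to the $m$ threshold values $\{c_{ij}/c_{i'j} : j \in M\}$ on the positive half-line. Sorting these thresholds, one checks that there are at most $2m-1$ distinct combinatorial positions per pair (after collapsing extremal configurations that produce the same restriction of $B(\bw)$ to $\{i,i'\} \times M$), and since the configurations for the $\binom{n}{2}$ pairs are essentially independent, there are at most $(2m-1)^{\binom{n}{2}}$ combinatorial types of $B(\bw)$ overall. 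Algorithmically, I would sort the thresholds for each pair in $O(m \log m)$ time, enumerate the joint configurations, test feasibility of each via a small linear program with $O(nm)$ constraints, and for each feasible configuration compute a witness $\bw$, extract $B(\bw)$, and include a canonical acyclic consumption subgraph (obtained as a basic feasible solution of $\mathbf{LP}(\bw)$) in $\mathcal{G}$. For constant $n$ the per-configuration work is $O(m)$, giving total running time $O(m^{n(n-1)/2 + 1})$ and $|\mathcal{G}| \le (2m-1)^{n(n-1)/2}$, matching the claimed bounds.

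The principal obstacle is verifying the tight per-pair count of $2m-1$ rather than the naive $2m+1$: one must carefully identify which extremal open intervals collapse onto adjacent singletons in terms of the produced $B$-structure, through a case analysis of what happens when $\rho_{ii'}$ lies below the smallest or above the largest threshold. A secondary obstacle is establishing genuine richness for arbitrary, possibly degenerate, instances: one must argue that every fPO cost vector $\bc$ in the original instance is realized by some allocation whose consumption graph lies in the enumerated family. I would handle this via a perturbation-and-limit argument — applying the enumeration to a non-degenerate perturbation, where the uniqueness result of Bogomolnaia (Lemma 2 in~\cite{bogomolnaia2019dividing}) forces each fPO cost vector to correspond to a single allocation whose consumption graph is automatically one of the enumerated forests, and then taking the perturbation parameter to zero while using compactness of the allocation polytope and the finiteness of $\mathcal{G}$ to transfer the conclusion back to the original instance.
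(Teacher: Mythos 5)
First, note that the paper does not prove this statement at all: it is imported verbatim as Lemma~4.4 of Br\^anzei and Sandomirskiy~\cite{branzei2024algorithms} and used as a black box in the proof of Theorem~\ref{thm:const}. So there is no internal proof to compare against; your sketch has to be judged against the cited source's argument, whose general shape (pairwise ratio thresholds $c_{ij}/c_{i'j}$, at most $2m-1$ combinatorial positions per pair of agents, a product bound of $(2m-1)^{\binom{n}{2}}$, and an enumeration with LP-based feasibility checks) you do reconstruct correctly.

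There is, however, a genuine gap in how you establish \emph{richness}. Richness demands that for \emph{every} fPO cost vector $\bc$ there exist a fractional allocation realizing \emph{that exact} $\bc$ whose consumption graph lies in $\mathcal{G}$. Your family consists of one ``canonical acyclic consumption subgraph, obtained as a basic feasible solution of $\mathbf{LP}(\bw)$,'' per feasible configuration. But a basic feasible solution of $\mathbf{LP}(\bw)$ only preserves the weighted social cost, not the individual cost vector: different optimal solutions of $\mathbf{LP}(\bw)$ generally induce different cost vectors, and in a degenerate instance a single weight vector (equivalently, a single min-cost graph $B(\bw)$) can support a whole continuum of fPO cost vectors. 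Finitely many canonical forests cannot realize them all, so the family you construct is not rich for such instances. The cited construction avoids this by enumerating the threshold-consistent graphs themselves and arguing that the consumption graph of (an allocation realizing) each fPO cost vector is literally one of them, rather than passing to a basic solution. Your closing perturbation-and-limit argument does not obviously repair the gap either: a limit of allocations whose consumption graphs equal some fixed $G$ can have a strictly smaller consumption graph (coordinates tending to $0$), so membership of $G_{\bx}$ in $\mathcal{G}$ need not survive the limit. For the way the present paper actually uses the theorem this is harmless --- it is applied only to the non-degenerate instance $I^{\epsilon}$, where Bogomolnaia's uniqueness result pins down a single (forest-supported) allocation per fPO cost vector --- but as a proof of the theorem as stated, the richness step is incomplete. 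The $2m-1$ versus $2m+1$ counting issue you flag is, by contrast, a minor bookkeeping matter that would at worst change the constant in the exponent's base.
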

We are now ready to prove Theorem~\ref{thm:const}.
\begin{proof}[Proof of Theorem~\ref{thm:const}]
We first compute the $nm$ smallest prime numbers $\{q_{ij}\}_{i\in N j \in M}$.
Since it is known that $q_{nm} = O(nm \log nm )$ by the prime
number theorem, we can find these primes in polynomial time using the sieve of Eratosthenes.
Next, we set $\epsilon:= \lfloor \log_{(q_{nm})^n} (1+\frac{1}{2m (q_{nm})^{n-1} (c_{\max})^{n}}) \rfloor$.
Let $I^{\epsilon}$ denote the instance obtained by perturbing $I$ with $\epsilon$.
Since each cost value is a positive integer, we have $\delta \ge 1$ and $\delta' \ge 1$.
Thus, $\epsilon$ is sufficiently small to satisfy the assumptions of Lemmas~\ref{lem:non-degenerate}--\ref{lem:po}.  
By Lemma~\ref{lem:non-degenerate}, $I^\epsilon = (N, M, \{c^{(\epsilon)}_i\}_{i \in N})$ is non-degenerate.

By Theorem~\ref{thm:fPO}, we know that $I^{\epsilon}$ has an EF1 and fPO allocation.
We now show how to compute an EF1 and fPO allocation $\bx$ in $I^\epsilon$ in polynomial time when $n$ is constant.
Then, by Lemmas~\ref{lem:ef1} and~\ref{lem:po}, this allocation $\bx$ will also be EF1 and PO in the original instance $I$.

Since $n$ is constant, we can use Theorem~\ref{thm:branzei} to generate a rich family of bipartite graphs $\mathcal{G}$ in polynomial time.
Furthermore, since $I^{\epsilon}$ is non-degenerate, $\mathbf{x}$ is the unique allocation corresponding to the cost vector $(c_1(\bx_1),\ldots , c_n(\bx_n))$.
Thus, the consumption graph $G_{\mathbf{x}}$ of $\mathbf{x}$ must belong to $\mathcal{G}$.

Therfore, we can go through each graph $G$ in $\mathcal{G}$, check if every chore has degree one, which means it gives an integral allocation, and then check if the allocation is EF1.
Since $\mathcal{G}$ has only a polynomial number of graphs in $m$, and checking EF1 also takes polynomial time, the whole algorithm runs in polynomial time.
\end{proof}

\section{Weighted Fair Division}\label{sec:weight}
In this section, we prove Theorem~\ref{thm:weight}.
Since most of the proof closely follows the proofs of Theorems~\ref{thm:main}--\ref{thm:const}, we only highlight the parts that differ.

We begin by introducing some terminology for weighted fair division.
A {\it weighted fair division instance} $I$ is represented by a tuple $I = (N, M, \{c_i \}_{i\in N}, \{\alpha_i\}_{i\in N})$, where $\alpha_i > 0$ denotes the weight associated with agent $i$. This generalizes a standard fair division instance by assigning a positive weight to each agent.
We denote $\alpha_{\min}:= \min_{i \in N} \alpha_i$.

Given a weighted fair division instance $I = (N, M, \{c_i \}_{i\in N}, \{\alpha_i\}_{i\in N})$ and an allocation $\bx=(\bx_i)_{i \in N}$, 
we say that an agent $i\in N$  {\it weighted envies} another agent $i'\in N$ if $\frac{c_i(\bx_i)}{\alpha_i} > \frac{c_i(\bx_{i'})}{\alpha_{i'}}$.
An allocation $\bx$ is said to be {\it weighted envy-free} (wEF) if no agent weighted envies any other agent.
An allocation $\bx$ is said to be  {\it weighted envy-free up to one chore} (wEF1) if for any pair of agents $i,i'\in N$ where $i$ weighted envies $i'$, there exists a chore $j\in \bx_i$ such that $\frac{c_i(\bx_i\setminus \{j\})}{\alpha_i}\le  \frac{c_i(\bx_{i'})}{\alpha_{i'}}$.

Let $\bp = (p_j)_{j \in M}$ be a price vector.
An agent $i \in N$ is said to \emph{weighted price envy} another agent $i' \in N$ if $\frac{\bp(\bx_i)}{\alpha_i} > \frac{\bp(\bx_{i'})}{\alpha_{i'}}$, and is \emph{weighted price envy-free} if this inequality does not hold for any other agent. 
An allocation $\bx$ is called \emph{weighted price envy-free} (wpEF) if every agent is weighted price envy-free. An allocation $\bx$ is called \emph{weighted price envy-free up to one chore} (wpEF1) if, for any pair of agents $i, i' \in N$ where $i$ weighted price envies $i'$, there exists a chore $j \in \bx_i$ with $\frac{\bp(\bx_i \setminus \{j\})}{\alpha_i} \le \frac{\bp(\bx_{i'})}{\alpha_{i'}}$.

We define the positive quantity $\delta''$ associated with the weighted fair division instance $I$ as follows:

$$
\delta'' := \min_{i, i'\in N} \min_{S,T: \frac{c_i(S)}{\alpha_i}\neq \frac{c_i(T)}{\alpha_{i'}}} \left| \frac{c_i(S)}{\alpha_i}-\frac{c_i(T)}{\alpha_{i'}} \right|.
$$

Similar to the unweighted setting, we have the following properties.
\begin{observation}\label{ob:wpEF1}
An allocation $\bx$ is wpEF1 if and only if $\max_{i \in N} \frac{\hat{\bp}(\bx_{i})}{\alpha_i} \le \min_{i' \in N} \frac{\bp(\bx_{i'})}{\alpha_{i'}}$ holds.
\end{observation}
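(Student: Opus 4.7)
The plan is to mirror the proof of Observation~\ref{ob:pEF1} almost verbatim, since the introduction of weights $\alpha_i$ only amounts to dividing the corresponding price terms on each side by the appropriate weight. The statement reduces to verifying the chain of equivalences
\begin{align*}
\bx \text{ is wpEF1}
  &\iff \forall i,i'\in N,\ \frac{\hat{\bp}(\bx_i)}{\alpha_i} \le \frac{\bp(\bx_{i'})}{\alpha_{i'}} \\
  &\iff \max_{i\in N}\frac{\hat{\bp}(\bx_i)}{\alpha_i} \le \min_{i'\in N}\frac{\bp(\bx_{i'})}{\alpha_{i'}},
\end{align*}
where the second equivalence is just a rewriting of quantifiers over the finite sets $N$. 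Only the first equivalence requires (a very small amount of) argument.

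For the forward direction of the first equivalence, I would fix a pair $i,i'\in N$ and split into cases. If $i$ does not weighted price envy $i'$, then $\bp(\bx_i)/\alpha_i \le \bp(\bx_{i'})/\alpha_{i'}$, and since $\hat{\bp}(\bx_i)\le \bp(\bx_i)$ by definition of $\hat{\bp}$, the desired inequality follows. If $i$ weighted price envies $i'$, then by wpEF1 there exists $j\in \bx_i$ with $\bp(\bx_i\setminus\{j\})/\alpha_i \le \bp(\bx_{i'})/\alpha_{i'}$; combined with $\hat{\bp}(\bx_i)\le \bp(\bx_i\setminus\{j\})$, which holds because $\hat{\bp}(\bx_i)$ is the minimum of such quantities, we again get the inequality. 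For the reverse direction, I would take any $i,i'$ with $i$ weighted price envying $i'$, note that then $\bx_i\ne \emptyset$ (else $\bp(\bx_i)/\alpha_i = 0$ and no envy is possible under nonnegative prices), and pick $j^*\in \bx_i$ attaining the minimum in the definition of $\hat{\bp}(\bx_i)$; this $j^*$ certifies the wpEF1 condition for the pair $(i,i')$.

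There is no real obstacle here; this is a definition-chasing observation whose entire content is just unpacking the $\max/\min$ form. The only minor care needed is to handle the edge case $\bx_i=\emptyset$ correctly via the convention $\hat{\bp}(\emptyset)=0$, and to note that in the forward direction the case split is exhaustive because weighted price envy is defined by the strict inequality $\bp(\bx_i)/\alpha_i > \bp(\bx_{i'})/\alpha_{i'}$. The proof is then complete in a few lines, and I would present it as a single chain of iffs followed by these two short sub-arguments, exactly parallel to the unweighted observation.
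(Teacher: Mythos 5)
Your proposal is correct and matches the paper's proof, which simply asserts the same two-line chain of equivalences ``from the definition''; your extra case analysis just fills in the details the paper leaves implicit. No issues.
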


\begin{lemma}\label{lem:wpEF1iswEF1}
Given a positive weight vector $\bw \in \mathbb{R}_{>0}^n$, let $\bx$ be an optimal allocation to $\mathbf{LP}(\bw)$, and let $\bp$ be an optimal solution to $\mathbf{Dual\text{-}LP}(\bw)$.  
If the allocation $\bx$ is wpEF1, then $\bx$ is wEF1.
\end{lemma}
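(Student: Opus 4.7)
The plan is to mirror the proof of Lemma~\ref{lem:pEF1isEF1}, inserting the agent weights $\alpha_i$ into the appropriate places. The key observation is that the LP duality gives relations multiplicative in $w_i$, while the fairness notions here are multiplicative in $1/\alpha_i$; since both $w_i$ and $\alpha_i$ are positive, the weighted envy inequalities propagate cleanly through division.

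More concretely, I would fix an arbitrary pair $i,i' \in N$ such that agent $i$ weighted envies $i'$, i.e., $c_i(\bx_i)/\alpha_i > c_i(\bx_{i'})/\alpha_{i'}$. Since costs are nonnegative and $\alpha_i, \alpha_{i'} > 0$, this forces $c_i(\bx_i) > 0$, so agent $i$ is allocated at least one chore in $\bx$. By the wpEF1 hypothesis, there exists a chore $j \in \bx_i$ with
\[
\frac{\bp(\bx_i \setminus \{j\})}{\alpha_i} \;\le\; \frac{\bp(\bx_{i'})}{\alpha_{i'}}.
\]

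Next, I would invoke LP duality. By complementary slackness applied to every $k \in \bx_i \setminus \{j\}$ (where $x_{ik} > 0$), we have $p_k = w_i c_{ik}$, so $\bp(\bx_i \setminus \{j\}) = w_i c_i(\bx_i \setminus \{j\})$. Similarly, by dual feasibility $p_k \le w_i c_{ik}$ for every $k \in \bx_{i'}$, hence $\bp(\bx_{i'}) \le w_i c_i(\bx_{i'})$. Substituting these into the wpEF1 inequality yields
\[
\frac{w_i\, c_i(\bx_i \setminus \{j\})}{\alpha_i} \;\le\; \frac{\bp(\bx_{i'})}{\alpha_{i'}} \;\le\; \frac{w_i\, c_i(\bx_{i'})}{\alpha_{i'}},
\]
and dividing through by $w_i > 0$ gives $c_i(\bx_i \setminus \{j\})/\alpha_i \le c_i(\bx_{i'})/\alpha_{i'}$. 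Since the pair $(i,i')$ was arbitrary, $\bx$ is wEF1.

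There is no real obstacle here: the proof is a near-verbatim adaptation of Lemma~\ref{lem:pEF1isEF1}. The only thing worth verifying is that the $\alpha_i$ normalization does not interfere with the LP-based reasoning, which it does not, because complementary slackness and dual feasibility only produce factors of $w_i$, and the positive denominators $\alpha_i, \alpha_{i'}$ can be carried through all inequalities without sign issues.
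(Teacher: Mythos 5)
Your proof is correct and is essentially identical to the paper's: the same chain of wpEF1 inequality, complementary slackness on $\bx_i\setminus\{j\}$, and dual feasibility on $\bx_{i'}$, followed by dividing out $w_i>0$. The paper's version is a near-verbatim weighted adaptation of Lemma~\ref{lem:pEF1isEF1}, exactly as you describe.
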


\begin{lemma}\label{lem:wef1}
Let $I = (N, M, \{c_i\}_{i \in N})$ be a fair division instance, and suppose that 
$0 < \epsilon < \log_{q_{nm}}\left( 1 + \frac{\delta'' \alpha_{\min}}{2m c_{\max}} \right)$. 
If an allocation $\bx$ is wEF1 in $I^{\epsilon}$, then $\bx$ is also wEF1 in $I$.
\end{lemma}

Since the proofs are almost identical to those in the unweighted setting, we omit the proofs of the following two lemmas.

\begin{lemma}\label{lem:wtau}
Let $\bw \in \Delta^{n-1}$ be any point in the $(n{-}1)$-dimensional standard simplex.  
Let $\bx$ be any optimal allocation to $\mathbf{LP}(\bw, \tau)$, and let $\bp$ be the optimal solution to $\mathbf{Dual\text{-}LP}(\bw, \tau)$.  
Then there exists an agent $i \in N$ such that $w_i > 0$ and agent $i$ is weighted price envy-free with respect to $\bx$ and $\bp$.
\end{lemma}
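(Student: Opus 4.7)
The plan is to mirror the proof of Lemma~\ref{lem:tau} from the unweighted setting, observing that its key inequality does not involve the entitlements $\alpha_i$ and therefore continues to go through unchanged. The central observation is that weighted price envy-freeness of an agent $i$, namely $\bp(\bx_i)/\alpha_i \le \bp(\bx_{i'})/\alpha_{i'}$ for all $i' \in N$, is trivially satisfied whenever $\bp(\bx_i) = 0$, because all prices are non-negative and all entitlements $\alpha_{i'}$ are strictly positive.

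First I would handle the case where $w_i > 0$ for every $i \in N$. Then any agent $i^*$ that minimizes the ratio $\bp(\bx_i)/\alpha_i$ over $N$ is weighted price envy-free by the definition of the minimum, and clearly $w_{i^*} > 0$, so the conclusion is immediate. Next, I would suppose that $w_k = 0$ for some agent $k \in N$, and choose an agent $h \in N$ with $w_h \ge 1/n$, which exists because $\bw \in \Delta^{n-1}$. The same chain of inequalities used in Lemma~\ref{lem:tau}, namely $\tau(c_{kj} + n w_h c_{hj} - c_{hj}) < \tau(c_{kj} + n w_h c_{hj}) \le 2\tau n w_h c_{\max} < w_h c_{\min} \le w_h c_{hj}$, relies only on $n w_h \ge 1$ and $\tau < c_{\min}/(2 n c_{\max})$, and yields $(\tau + (1 - \tau n) w_h) c_{hj} > \tau c_{kj}$ for every $j \in M$. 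By the optimality of $\bx$ for $\mathbf{LP}(\bw, \tau)$, agent $h$ can hold no chore, since otherwise transferring such a chore to $k$ would strictly decrease the objective. Therefore $\bp(\bx_h) = 0$, and $h$ is weighted price envy-free with $w_h > 0$.

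I do not foresee a substantial obstacle here: the entitlements $\alpha_i$ enter the statement only through the denominators, and they play no role in either the primal optimality argument (which forces $h$ to receive no chore) or in the trivial fact that a bundle of zero total price makes its owner weighted price envy-free. The weighted case therefore reduces to the unweighted one without further work.
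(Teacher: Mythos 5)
Your proposal is correct and follows exactly the argument the paper intends: it reproduces the proof of Lemma~\ref{lem:tau}, noting that the entitlements $\alpha_i$ appear only in the denominators of the weighted price envy-freeness condition and hence do not affect either the case analysis (taking a minimizer of $\bp(\bx_i)/\alpha_i$ when all weights are positive) or the LP-optimality argument forcing $\bp(\bx_h)=0$ when some $w_k=0$. The paper omits this proof precisely because it is identical to the unweighted one, as you observed.
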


\begin{lemma}\label{lem:wexistw}
There exists a weight vector $\bw \in \Delta^{n-1}$ such that for each agent $i \in N$, there exists an optimal allocation $\bx$ to $\mathbf{LP}(\bw, \tau)$ in which agent $i$ is weighted price envy-free with respect to $\bx$ and the unique optimal solution $\bp$ to $\mathbf{Dual\text{-}LP}(\bw, \tau)$.
\end{lemma}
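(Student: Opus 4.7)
The plan is to mirror the proof of Lemma~\ref{lem:existw} almost verbatim, with weighted price envy-freeness replacing standard price envy-freeness throughout and the KKM lemma again supplying the fixed point. Because the entitlements $\alpha_i$ are positive constants that do not depend on $\bw$, they only rescale the inequalities that define the sets to be covered, so the topological structure carries over unchanged.

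For each agent $i \in N$, I would define $C_i \subseteq \Delta^{n-1}$ as the set of weight vectors $\bw$ for which there exists an optimal allocation $\bx$ to $\mathbf{LP}(\bw, \tau)$ such that agent $i$ is weighted price envy-free with respect to $\bx$ and the unique optimal $\bp$ of $\mathbf{Dual\text{-}LP}(\bw, \tau)$. The first substantive step is to verify that each $C_i$ is closed. Given $\{\bw^t\}_{t \in \mathbb{N}} \subseteq C_i$ with $\bw^t \to \bw^*$ and witnesses $\bx^t, \bp^t$, I would pass to a subsequence along which $\bx^t = \bx^*$ is constant (possible since there are finitely many allocations), and then use continuity of the objective of $\mathbf{LP}(\bw, \tau)$ in $\bw$ to conclude that $\bx^*$ is optimal at $\bw^*$. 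The dual optima $\bp^t$ converge to the unique dual optimum $\bp^*$ at $\bw^*$, and because the condition $\bp(\bx^*_i)/\alpha_i \le \bp(\bx^*_{i'})/\alpha_{i'}$ is a closed linear condition in $\bp$ for fixed positive $\alpha_i$, it is preserved in the limit. Hence $\bw^* \in C_i$.

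The covering hypothesis required by the KKM lemma is precisely Lemma~\ref{lem:wtau}: for every $\bw \in \Delta^{n-1}$ there is an agent $i$ with $w_i > 0$ who is weighted price envy-free with respect to any optimal primal-dual pair at $\bw$, so $\bw \in C_i$. Applying Lemma~\ref{lem:kkm} then yields a weight vector $\bw \in \bigcap_{i \in N} C_i$, which by definition satisfies the conclusion of the lemma.

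The only potential obstacle is ensuring that the introduction of entitlements does not disturb either the closedness of $C_i$ or the covering hypothesis. Since the $\alpha_i$ appear only as positive scaling factors in the inequalities defining weighted price envy-freeness, closedness follows from the same continuity and finiteness arguments as in the unweighted setting, and the covering hypothesis is supplied directly by the weighted analog Lemma~\ref{lem:wtau}. Consequently, no new ideas are needed beyond the adaptations already made in Lemmas~\ref{lem:wpEF1iswEF1} and~\ref{lem:wtau}, and the proof of Lemma~\ref{lem:existw} transports to the weighted setting with only the notational substitution of $\bp(\bx_i)/\alpha_i$ for $\bp(\bx_i)$.
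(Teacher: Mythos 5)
Your proposal is correct and matches the paper's intended argument: the paper explicitly omits the proof of Lemma~\ref{lem:wexistw} on the grounds that it is almost identical to that of Lemma~\ref{lem:existw}, which is precisely the KKM argument you reproduce with $\bp(\bx_i)/\alpha_i$ in place of $\bp(\bx_i)$ and Lemma~\ref{lem:wtau} supplying the covering condition. Your closedness and finiteness arguments track the unweighted proof step for step, so no further comparison is needed.
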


Finally, similar to the unweighted setting, we have the following result.
\begin{lemma}\label{lem:wpEF1inG}
There exists an allocation $\bx$ that is both weighted pEF1 and an optimal allocation to $\mathbf{LP}(\bw, \tau)$.
\end{lemma}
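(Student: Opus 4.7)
The plan is to port the proof of Lemma~\ref{lem:pEF1inG} to the weighted setting by replacing every price-based quantity with its $\alpha$-scaled analog. For each agent $i \in N$, let $r_i$ denote the total price of degree-one chores assigned to $i$ in $G(\bw,\bp)$, and define $r^w_i := r_i/\alpha_i$, $r^w_{\max} := \max_{i\in N} r^w_i$, and $R^w := \{i \in N \mid r^w_i = r^w_{\max}\}$. The graph $G$ is defined from the tightness of dual constraints and therefore unchanged from Section~\ref{sec:proof-pEF1inG}; in particular, $G$ is a forest by non-degeneracy, and pruning the degree-one chores yields a bipartite forest $H = (N, M_H; E_H)$ with $|M_H|\leq n-1$ that admits a matching covering $M_H$ (Lemmas~\ref{lem:n-1} and~\ref{lem:covermatching}, whose proofs are purely structural).

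I then work with the weighted invariants \textbf{(I1$'$)} $|\bx_i \cap M_H| \leq 1$ for all $i \in R^w$, and \textbf{(I2$'$)} $\bp(\bx_i)/\alpha_i \geq r^w_{\max}$ for all $i \in N$. A direct calculation shows that (I1$'$) forces $\hat{\bp}(\bx_i)/\alpha_i \leq r^w_{\max}$ on each $i \in R^w$, so by Observation~\ref{ob:wpEF1} wpEF1 holds as soon as no agent outside $R^w$ exceeds the bound. Existence of an initial optimal allocation satisfying (I1$'$)--(I2$'$) follows from Lemma~\ref{lem:wexistw}: fixing any $i \in R^w$, the associated weighted-pEF witness $\by$ satisfies $\bp(\by_{i'})/\alpha_{i'} \geq \bp(\by_i)/\alpha_i \geq r_i/\alpha_i = r^w_{\max}$ for every $i'$, since degree-one chores are forced onto their unique neighbor in $G$ by complementary slackness and thus $\bp(\by_i) \geq r_i$. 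The reassignment step in the proof of Lemma~\ref{lem:initial} carries over because moving any $M_H$-chore out of $R^w$ onto an agent in $N \setminus R^w$ can only weakly increase $\bp$-values on $N \setminus R^w$.

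The weighted violator set is $V^w := \{i \in N \mid \hat{\bp}(\bx_i)/\alpha_i > r^w_{\max}\}$, and the unmatched set $U \subseteq R^w$ is defined exactly as before. The analog of Lemma~\ref{lem:pathexist} follows by the same inductive walk, with (I1$'$) and (I2$'$) forcing $V^w \subseteq N \setminus R^w$ and guaranteeing an alternating path from $U$ to $V^w$ in $H$. In FindpEF1, I replace the threshold test in line~4 by: $a$ is the largest index in $[\ell-1]$ with $r^w_{\max} \geq \bp(\bx_{i_a}\cup\{j_{a+1}\}\setminus\{j_a\})/\alpha_{i_a}$. With this single change every inequality in the proofs of Lemmas~\ref{lem:invariant}, \ref{lem:levelnondec}, and~\ref{lem:terminate} has a direct weighted analog obtained by dividing the corresponding $\bp$-quantity by the appropriate $\alpha_i$. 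Crucially, the potential $\Phi$ depends only on the structure of $H$ and the current allocation; neither prices nor weights enter the level or critical-chore arguments, so the $O(n^3)$ termination bound transfers unchanged.

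The main obstacle I expect is the verification that (I1$'$) is preserved at $i_a$ after a transfer step — equivalently, that $i_a \in R^w$ forces $i_a \in U$. Mirroring the unweighted argument, if $i_a \in R^w \setminus U$ then positivity of $p_{j_{a+1}}$ (Observation~\ref{ob:1}) together with (I2$'$) would give $\bp(\bx_{i_a}\cup\{j_{a+1}\}\setminus\{j_a\})/\alpha_{i_a} > r^w_{\max}$, contradicting the maximality of $a$. Once this is in place, the remainder of the argument is a uniform rescaling of the unweighted proof by the relevant $\alpha_i$, so the only care needed is bookkeeping to ensure that each $\alpha_i$ factor lands on the correct side of each inequality.
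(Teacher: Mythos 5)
Your proposal follows essentially the same route as the paper: normalize $r_i$ by $\alpha_i$, carry the two invariants and the violator/unmatched sets over in $\alpha$-scaled form, replace the threshold test in line~4 of the algorithm by its weighted analog, and observe that the level/potential termination argument depends only on the structure of $H$ and the allocation, hence transfers verbatim. The one point worth tightening is that preservation of the first invariant at $i_a$ uses that for $i_a \in R$ with $\bx^t_{i_a}\cap M_H=\{j_a\}$ the remaining chores are exactly the forced degree-one chores, so $\bp(\bx^t_{i_a}\cup\{j_{a+1}\}\setminus\{j_a\})/\alpha_{i_a}=r_{\max}+p_{j_{a+1}}/\alpha_{i_a}>r_{\max}$, contradicting the defining property of the index $a$ rather than its maximality; otherwise your argument matches the paper's.
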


Assuming this lemma, we can now prove Theorem~\ref{thm:weight}.

\begin{proof}[Proof of Theorem~\ref{thm:weight}]
    For the first statement, 
    let $\bx$ be an allocation satisfying the conditions of Lemma~\ref{lem:wpEF1inG}.
    By the first part of Observation~\ref{ob:1} together with Lemma~\ref{lem:cha}, we see that $\bx$ is fPO.
    Moreover, by Lemma~\ref{lem:wpEF1iswEF1}, the allocation $\bx$ is also wEF1.
    Hence, $\bx$ is a wEF1 and fPO allocation in the non-degenerate instance.
    By Lemmas~\ref{lem:non-degenerate}, \ref{lem:wef1}, and~\ref{lem:po}, for a sufficiently small perturbation parameter $\epsilon$, 
    the allocation $\bx$ remains a wEF1 and PO allocation in the original instance.
    This completes the proof, showing that a wEF1 and PO allocation exists for any instance.

    The second and third claims are omitted since their proofs are almost identical to those of Theorems~\ref{thm:fPO} and~\ref{thm:const}.
\end{proof}

\subsection{Proof of Lemma~\ref{lem:wpEF1inG}}\label{sec:proof-wpEF1inG}
Let $\bw \in \Delta^{n-1}$ be a weight vector satisfying the condition stated in Lemma~\ref{lem:wexistw}, and let $\bp$ denote the unique optimal solution to the dual problem $\mathbf{Dual\text{-}LP}(\bw, \tau)$.  
Define a bipartite graph $G(\bw, \bp) = (N, M; E)$ in the same way as in Section~\ref{sec:main}, and we refer to $G(\bw, \bp)$ simply as $G$.

For each agent $i \in N$, define
$$
r_i := \sum_{(i, j) \in E \,:\, d_G(j) = 1} \frac{p_j}{\alpha_i}
$$
to be the total price of degree-one chores assigned to agent $i$ in $G$, normalized by $\alpha_i$.
Let $r_{\max} := \max_{i \in N} r_i$ denote the maximum such total over all agents, and let $ R := \{i \in N \mid r_i = r_{\max} \}$
denote the set of agents who receive this maximum total of degree-one chores.

For any optimal allocation $\bx = (\bx_i)_{i \in N}$ to $\mathbf{LP}(\bw, \tau)$, we consider the following two invariants:
\begin{description}
\item[(I'1)] Every agent in $R$ receives at most one chore from $M_H$, i.e., $|\bx_i \cap M_H| \le 1\quad \forall i \in R$.
\item[(I'2)] Every agent receives a bundle whose total price normalized by $\alpha_i$ is at least $r_{\max}$, i.e., $\frac{\bp(\bx_i)}{\alpha_i} \ge r_{\max}\quad  \forall i \in N$.
\end{description}

\begin{lemma}\label{lem:winitial}
There exists an optimal allocation $\bx = (\bx_i)_{i \in N}$ to $\mathbf{LP}(\bw, \tau)$ satisfying invariants \textup{(I'1)} and \textup{(I'2)}.
\end{lemma}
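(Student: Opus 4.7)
The plan is to mirror the proof of Lemma~\ref{lem:initial} exactly, replacing every unnormalized quantity by its $\alpha_i$-normalized counterpart, and invoking Lemma~\ref{lem:wexistw} (weighted pEF version) in place of Lemma~\ref{lem:existw}. The structural scaffolding from the unweighted argument carries over without modification because the forest structure of $G$, the matching lemma (Lemma~\ref{lem:covermatching}), and the fact that degree-one chores of $G$ are forced to be assigned to their unique neighbor by complementary slackness are all independent of the agent weights $\{\alpha_i\}_{i\in N}$.

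Concretely, I would proceed as follows. First I would fix an arbitrary $i^* \in R$ and apply Lemma~\ref{lem:wexistw} to obtain an optimal allocation $\by$ to $\mathbf{LP}(\bw, \tau)$ such that $\bp(\by_{i^*})/\alpha_{i^*} \le \bp(\by_{i'})/\alpha_{i'}$ for every $i' \in N$. Since any chore $j \in M$ with $d_G(j) = 1$ must be assigned by $\by$ to its unique $G$-neighbor, agent $i^*$ receives in $\by$ at least the degree-one chores $\{j : (i^*,j) \in E,\, d_G(j) = 1\}$, whose total price equals $\alpha_{i^*} r_{i^*} = \alpha_{i^*} r_{\max}$. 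Hence $\bp(\by_{i^*})/\alpha_{i^*} \ge r_{\max}$, which by weighted price envy-freeness of $i^*$ yields $\bp(\by_{i'})/\alpha_{i'} \ge r_{\max}$ for every $i' \in N$; in particular this gives (I'2) for all agents in $N \setminus R$.

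Next, exactly as in the unweighted proof, I would reassign to $N \setminus R$ any chore in $\Gamma_H(N \setminus R)$ that happens to be held by an agent of $R$ under $\by$. Such moves only add chores to the bundles of $N \setminus R$, so the inequality $\bp(\bx_{i'})/\alpha_{i'} \ge r_{\max}$ is preserved for $i' \in N\setminus R$; and since we only move chores along edges in $E_H \subseteq E$, the resulting allocation remains optimal for $\mathbf{LP}(\bw, \tau)$ by complementary slackness. For the remaining chores $M_H \setminus \Gamma_H(N \setminus R)$, which by definition only have neighbors in $R$ inside $H$, I would invoke Lemma~\ref{lem:covermatching} to obtain a matching covering them and assign these chores to agents in $R$ along that matching. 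This guarantees (I'1) since every agent in $R$ receives at most one chore from $M_H$.

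Finally I would verify (I'2) on $R$: for any $i \in R$, the bundle $\bx_i$ still contains all the degree-one chores forced onto $i$, whose total price is $\alpha_i r_{\max}$, so $\bp(\bx_i)/\alpha_i \ge r_{\max}$ holds automatically (with equality iff $i$ receives no $M_H$-chore). Combined with the earlier verification of (I'2) on $N \setminus R$ and the matching-based verification of (I'1), this yields an optimal allocation to $\mathbf{LP}(\bw, \tau)$ satisfying both invariants. There is no real obstacle here beyond carefully tracking the $\alpha_i$ normalization; the only point requiring mild care is confirming that the weighted pEF guarantee of Lemma~\ref{lem:wexistw} interacts correctly with the forced degree-one chores, which is precisely where the definition $r_i := \sum_{(i,j)\in E,\, d_G(j)=1} p_j/\alpha_i$ makes the argument go through.
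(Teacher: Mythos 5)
Your proposal is correct and follows essentially the same route as the paper's proof: invoke Lemma~\ref{lem:wexistw} for an agent in $R$ to get a weighted price envy-free optimal allocation, deduce $\bp(\by_{i'})/\alpha_{i'} \ge r_{\max}$ for all $i' \in N\setminus R$, reassign $\Gamma_H(N\setminus R)$ to $N\setminus R$, and handle $M_H\setminus\Gamma_H(N\setminus R)$ via the matching of Lemma~\ref{lem:covermatching}. The only difference is that you explicitly justify $r_{\max}\le \bp(\by_{i^*})/\alpha_{i^*}$ via the forced degree-one chores, a step the paper leaves implicit.
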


Given an optimal allocation $\bx=(\bx_i)_{i \in N}$ to $\mathbf{LP}(\bw, \tau)$, we define the sets of \emph{violator agents} and \emph{unmatched agents} as follows:
$$V := \{i \in N \mid \frac{\hat{\bp}(\bx_i)}{\alpha_i} > r_{\max} \}, 
\qquad U := \{i \in R \mid \bx_i \cap M_H = \emptyset \}.
$$
\begin{lemma}\label{lem:wpathexist}
Suppose that an optimal allocation $\bx = (\bx_i)_{i \in N}$ to $\mathbf{LP}(\bw, \tau)$ satisfies invariants \textup{(I'1)} and \textup{(I'2)}, and that $\bx$ is not wpEF1. Then the following hold:
\begin{enumerate}
    \item $U \neq \emptyset$.
    \item $\emptyset \neq V \subseteq N \setminus R$.
    \item There exists an alternating path from $U$ to $V$ in $H$.
\end{enumerate}
\end{lemma}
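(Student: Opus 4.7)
The plan is to adapt the proof of Lemma~\ref{lem:pathexist} almost verbatim, with each unweighted quantity $\bp(\bx_i)$ replaced by its normalized version $\bp(\bx_i)/\alpha_i$. Throughout, I will exploit the fact that in any optimal allocation $\bx$ to $\mathbf{LP}(\bw,\tau)$, every degree-one chore of $G$ is forced to the unique agent it is incident to, so for each $i \in N$ we have the decomposition $\bp(\bx_i) = r_i\alpha_i + \bp(\bx_i \cap M_H)$.

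For statement (1), I would argue that (I'2) forces every agent in $N\setminus R$ to own at least one chore from $M_H$: since $r_i < r_{\max}$ for such $i$, the bound $\bp(\bx_i)/\alpha_i \ge r_{\max}$ cannot be met from degree-one chores alone. Combined with (I'1), which caps agents in $R$ at one $M_H$-chore each, and with Lemma~\ref{lem:n-1} ($|M_H|\le n-1$), a simple counting argument yields that at least one agent in $R$ owns no chore from $M_H$, i.e.\ $U\ne\emptyset$. For statement (2), the strict inequality defining non-wpEF1, together with Observation~\ref{ob:wpEF1} and (I'2), gives $\max_i \hat{\bp}(\bx_i)/\alpha_i > r_{\max}$, so $V\ne\emptyset$. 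To show $V\subseteq N\setminus R$, I fix $i\in R$; by (I'1) the bundle $\bx_i$ is the union of degree-one chores (contributing exactly $r_{\max}\alpha_i$ to $\bp(\bx_i)$) with at most one chore $j^\ast\in M_H$. Removing the most expensive chore of $\bx_i$ (in particular, removing $j^\ast$ if present) leaves total price at most $r_{\max}\alpha_i$, hence $\hat{\bp}(\bx_i)/\alpha_i \le r_{\max}$, so $i\notin V$.

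For statement (3), I would construct the alternating path by walking backwards in $H$ from an arbitrary $i_1\in V\subseteq N\setminus R$. Since $i_1\in N\setminus R$, agent $i_1$ owns some $j_1\in \bx_{i_1}\cap M_H$, so $(i_1,j_1)\in E_H$. As $d_H(j_1)\ge 2$, there is another $i_2\ne i_1$ with $(i_2,j_1)\in E_H$; if $i_2\in U$ the path $(i_2,j_1,i_1)$ is done. Otherwise, $i_2$ is either in $R\setminus U$ or in $N\setminus R$, and in either case $\bx_{i_2}\cap M_H$ is nonempty (by definition of $U$ in the first case, and by the same argument as above in the second), giving some $j_2\in \bx_{i_2}\cap M_H\setminus\{j_1\}$ with $(i_2,j_2)\in E_H$. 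Continuing inductively, the vertices visited are all distinct because $H$ is a forest, so the process terminates at some agent in $U$, producing the required alternating path.

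The main obstacle in transferring the proof is conceptual rather than technical: it lies in verifying that statement (2) still works cleanly under the weighted normalization. Specifically, the bound $\hat{\bp}(\bx_i)/\alpha_i \le r_{\max}$ for $i\in R$ is the only place where the weights $\alpha_i$ meaningfully enter the argument, and the key point is that the forced degree-one chores contribute exactly $r_i\alpha_i = r_{\max}\alpha_i$ (with the $\alpha_i$ in the numerator cancelling the $\alpha_i$ in the denominator of the normalization) so the residual contribution from the at-most-one $M_H$-chore can always be eliminated by the $\hat{\bp}$ operation. Once this is in place, statements (1) and (3) reduce to the forest-structure and counting arguments already used in Lemma~\ref{lem:pathexist}.
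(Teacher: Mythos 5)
Your proposal is correct and follows essentially the same route as the paper's proof: the counting argument via Lemma~\ref{lem:n-1} for $U\neq\emptyset$, Observation~\ref{ob:wpEF1} with (I'2) for $V\neq\emptyset$ and (I'1) for $V\subseteq N\setminus R$, and the inductive backward walk in the forest $H$ for the alternating path. The only difference is that you spell out the decomposition $\bp(\bx_i)=r_i\alpha_i+\bp(\bx_i\cap M_H)$ to justify $\hat{\bp}(\bx_i)/\alpha_i\le r_{\max}$ for $i\in R$, a step the paper asserts directly from (I'1) and (I'2).
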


\begin{algorithm}[t]
\caption{An algorithm to find a wpEF1 optimal allocation to $\mathbf{LP}(\bw, \tau)$}
\label{alg:2}
\begin{algorithmic}[1]
\Require The graph $H= (N,M_H; E_H)$, the dual optimal solution $\bp=(p_j)_{j \in M}$ to $\mathbf{Dual\text{-}LP}(\bw, \tau)$, and an optimal allocation $\bx=(\bx_i)_{i \in \mathcal{N}}$ to $\mathbf{LP}(\bw, \tau)$ satisfying (I'1) and (I'2).
\Ensure A pEF1 optimal allocation $\bx'=(\bx'_i)_{i \in \mathcal{N}}$ to $\mathbf{LP}(\bw, \tau)$.
\While{the current allocation $\bx$ is not wpEF1}
\State Let $U$ be the set of unmatched agents and $V$ the set of violator agents for $\bx$.
\State Find a {\it shortest} alternating path $P=(i_0, j_1, i_1, j_2,\ldots ,i_{\ell-1}, j_\ell, i_\ell)$ in $H$, where $i_0\in U$ and $i_\ell \in V$.
\State Let $a \in [\ell-1]$ be the {\it largest} index such that $r_{\max} > \frac{\bp(\bx_{i_a} \cup \{j_{a+1}\} \setminus \{j_a\})}{\alpha_{i_a}}$.
(If no such $a$ exists, set $a=0$.)
\For{$\forall i \in N$}
\If{$i=i_a$}
\State $\bx_{i_a}\gets \bx_{i_a} \cup \{j_{a+1}\}$
\ElsIf{$i=i_{\ell}$}
\State $\bx_{i_{\ell}}\gets \bx_{i_{\ell}} \setminus \{j_{\ell}\}$
\ElsIf{$i=i_k$ with $a<k<\ell$}
\State $\bx_{i_k}\gets \bx_{i_k} \cup \{j_{k+1}\} \setminus \{j_k\}$
\Else
\State $\bx_i\gets \bx_i$
\EndIf
\EndFor
\EndWhile
\Return $\bx$
\end{algorithmic}
\end{algorithm}

\begin{lemma}\label{lem:winvariant}
Algorithm~\ref{alg:2} maintains invariants \textup{(I'1)} and \textup{(I'2)} throughout its execution.
\end{lemma}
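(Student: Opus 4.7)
The plan is to adapt the proof of Lemma~\ref{lem:invariant} in an essentially mechanical way, replacing every raw price $\bp(\bx_i)$ by the normalized weighted price $\frac{\bp(\bx_i)}{\alpha_i}$. I would proceed by induction on the iteration index $t$. The base case is handled by Lemma~\ref{lem:winitial}, which guarantees that the initial allocation $\bx^1$ satisfies both (I'1) and (I'2). For the inductive step, assume that $\bx^t$ satisfies (I'1) and (I'2) and that the current allocation is not yet wpEF1; let $P = (i_0, j_1, i_1, \ldots, j_\ell, i_\ell)$ be the shortest alternating path from $U^t$ to $V^t$ found in Line~3, and let $a \in \{0\} \cup [\ell-1]$ be the index chosen in Line~4. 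I verify that $\bx^{t+1}$ satisfies each invariant by a case analysis over the agents $i \in N$ of exactly the same shape as in Lemma~\ref{lem:invariant}.

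To verify (I'1), fix any $i \in R$. The case $\bx^{t+1}_i = \bx^t_i$ is immediate. The case $i = i_\ell$ cannot occur since Lemma~\ref{lem:wpathexist} yields $i_\ell \in V \subseteq N \setminus R$. For $i = i_k$ with $a < k < \ell$, agent $i$ swaps one chore of $M_H$ for another, so $|\bx^{t+1}_i \cap M_H| = |\bx^t_i \cap M_H| \le 1$. The key case is $i = i_a$: I argue that $i_a$ must lie in $U^t$ (so $i_a$ has no chore from $M_H$, and hence gaining $j_{a+1}$ leaves $|\bx^{t+1}_{i_a} \cap M_H| = 1$). Indeed, if $i_a \in R \setminus U^t$, then by (I'1) together with the definition of $r_{\max}$ the quantity $\frac{\bp(\bx^t_{i_a} \cup \{j_{a+1}\} \setminus \{j_a\})}{\alpha_{i_a}}$ exceeds $r_{\max}$ by $\frac{p_{j_{a+1}}}{\alpha_{i_a}} > 0$ (using positivity from Observation~\ref{ob:1}), contradicting the maximality of $a$ in Line~4.

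To verify (I'2), fix any $i \in N$ and split into four analogous cases. The unchanged case is trivial. For $i = i_\ell \in V$, the chain $\frac{\bp(\bx^{t+1}_i)}{\alpha_i} = \frac{\bp(\bx^t_i \setminus \{j_\ell\})}{\alpha_i} \ge \frac{\hat{\bp}(\bx^t_i)}{\alpha_i} > r_{\max}$ follows from the definition of $V$. For $i = i_a$, the inductive hypothesis combined with $p_{j_{a+1}} > 0$ gives $\frac{\bp(\bx^{t+1}_i)}{\alpha_i} > \frac{\bp(\bx^t_i)}{\alpha_i} \ge r_{\max}$. For $i = i_k$ with $a < k < \ell$, the maximality of $a$ in Line~4 yields $r_{\max} < \frac{\bp(\bx^{t+1}_{i_k})}{\alpha_{i_k}}$, since otherwise $i_k$ would have been chosen as the index $a$ instead.

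I expect no genuine conceptual obstacle here: both the definition of $r_{\max}$ and the threshold used in Line~4 of Algorithm~\ref{alg:2} are normalized by $\alpha_i$, so every inequality in the unweighted proof carries over unchanged. The only point requiring care is making sure that in the $i = i_a$ subcase of (I'1) the normalization in the choice-of-$a$ criterion uses the same $\alpha_{i_a}$ as the definition of $R$, so that the contradiction with the maximality of $a$ is genuinely obtained.
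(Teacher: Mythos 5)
Your proposal is correct and matches the paper's own proof essentially step for step: the same induction on $t$ with base case from Lemma~\ref{lem:winitial}, the same four-case analysis for each invariant, and the same key observations (that $i_\ell \in V \subseteq N \setminus R$ rules out the $i = i_\ell$ case for (I'1), that $i_a$ must lie in $U^t$ lest the maximality of $a$ be contradicted, and that intermediate agents stay above $r_{\max}$ by the choice of $a$). Your explicit justification that the normalized price of $\bx^t_{i_a} \cup \{j_{a+1}\} \setminus \{j_a\}$ exceeds $r_{\max}$ by exactly $p_{j_{a+1}}/\alpha_{i_a}$ is, if anything, slightly more detailed than the paper's phrasing.
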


Since the proof is identical to that in the unweighted setting, we omit the proof of the following lemma.
\begin{lemma}\label{lem:wterminate}
Algorithm~\ref{alg:2} terminates after $O(n^3)$ iterations.
\end{lemma}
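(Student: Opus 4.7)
The plan is to mirror the argument used for Lemma~\ref{lem:terminate} essentially verbatim, after observing that the combinatorial behavior of Algorithm~\ref{alg:2} is identical to that of Algorithm~\ref{alg:1}: both algorithms repeatedly pick a shortest alternating path in $H$ from an unmatched agent in $U$ to a violator in $V$, determine a threshold index $a$ using a price criterion, and perform exactly the same chore-transfer operation along the path. The only differences are cosmetic (the threshold in line~4 and the definitions of $r_i$, $V$ are normalized by the entitlements $\alpha_i$), and none of these differences enter the termination argument.

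The key ingredient is still the potential function
\[
\Phi(t) := \sum_{i \in N} \bigl( n(n - \mathrm{level}(i,t)) + |\overline{\bx}^t_i| \bigr),
\]
where levels and critical chores are defined exactly as in Section~\ref{sec:proof-pEF1inG}. The existence of an alternating path from $U^t$ to any agent $i$ at each time step follows from the weighted analog of Lemma~\ref{lem:pathexist} (namely Lemma~\ref{lem:wpathexist}) together with invariant (I'2), so levels are well-defined. The potential is non-negative integer-valued and bounded by $O(n^3)$, using that $|M_H| \le n-1$ (Lemma~\ref{lem:n-1}, which depends only on $H$ and not on the weights).

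First I would prove the weighted analog of Lemma~\ref{lem:levelnondec}: for all $i \in N$, $j \in M_H$, and $t$, we have $\mathrm{level}(i,t) \le \mathrm{level}(i,t+1)$ and $\mathrm{level}(j,t) \le \mathrm{level}(j,t+1)$. The argument is word-for-word the one in the unweighted proof: decompose the single multi-transfer iteration into $\ell - a$ single-chore transfers, observe that $U^{t_r} \supseteq U^{t_{r+1}}$ (agents in $R$ once matched remain matched, a purely combinatorial property), and rule out shortcuts through the newly-created allocation edge $(i_{\ell-r-1}, j_{\ell-r})$ via a case analysis, which only uses the graph structure of $H$ and the shortest-path property.

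Then, to show $\Phi(t) > \Phi(t+1)$, I would repeat the two-case analysis from the proof of Lemma~\ref{lem:terminate}. For every $i_k$ with $a < k \le \ell$, either the level strictly increases or the level is preserved and the set of critical chores strictly shrinks (because $i_k$ loses the critical chore $j_k$, while any newly acquired $j_{k+1}$ has $\mathrm{level}(j_{k+1},t+1) \ge \mathrm{level}(j_{k+1},t) = k+1 > \mathrm{level}(i_k,t+1)$, hence is non-critical, and any previously non-critical chore remains non-critical). For every other agent the level is non-decreasing and the set of critical chores can only shrink, so their contribution to $\Phi$ does not increase. Since at least one agent on the path contributes a strict decrease, $\Phi$ drops by at least one per iteration, giving the $O(n^3)$ bound.

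There is essentially no obstacle here; the only thing to double-check is that the weighted threshold in line~4 does not interfere with the structural facts used above. It does not, because the definition of the index $a$ (whatever inequality it uses) only determines which edges are traversed, not the combinatorial properties of levels, critical chores, or the monotonicity of $U^t$. Hence the termination proof goes through unchanged.
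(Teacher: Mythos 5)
Your proposal is correct and matches the paper exactly: the paper omits the proof of Lemma~\ref{lem:wterminate} precisely because it is identical to that of Lemma~\ref{lem:terminate}, and your argument reproduces that proof while correctly noting that the $\alpha_i$-normalizations affect only the price thresholds, not the combinatorial facts (level monotonicity, shrinking critical-chore sets, and the potential $\Phi$) on which termination rests.
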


\begin{proof}[Proof of Lemma~\ref{lem:wpEF1inG}]
    By Lemma~\ref{lem:winitial}, it is guaranteed that there exists an optimal allocation to $\mathbf{LP}(\bw, \tau)$ satisfying (I'1) and (I'2).
    By Lemma~\ref{lem:wterminate}, Algorithm~\ref{alg:2} terminates after $O(n^3)$ iterations and outputs a wpEF1 optimal allocation to $\mathbf{LP}(\bw, \tau)$.
    This completes the proof.
\end{proof}

\section{Conclusion}\label{sec:con}
We study the problem of fairly and efficiently allocating indivisible chores among agents with additive cost functions. Our main contribution is to show the existence of an EF1 and PO allocation.
This is achieved by a novel combination of a fixed point argument and a discrete algorithm.

A major open question is whether one can design a polynomial-time algorithm to compute EF1 and PO allocations for both goods and chores. In our existence proof, the fixed point argument is used to guarantee an initial optimal allocation.
If such an allocation can be found in polynomial time, it would lead directly to a polynomial-time algorithm for computing an EF1 and PO allocation.

Another intriguing direction is to investigate the existence of EF1 and PO (or fPO) allocations under more general valuation classes beyond additive valuations, as well as in settings involving mixed division.

It is also an interesting question whether our techniques can be extended to settings with additional constraints, such as category constraints.

\appendix
\section{Missing Proofs}\label{ap:1}

In this appendix, we provide the detailed proofs omitted from the main text.

\subsection{Proofs from Section~\ref{sec:pre}}

\begin{proof}[Proof of Lemma~\ref{lem:non-degenerate}]
Assume for contradiction that $I^\epsilon$ is degenerate. 
Then, there exists a cycle $C = (i_1, j_1, i_2, j_2, \ldots, i_\ell, j_\ell, i_{\ell+1})$ in the complete bipartite graph $(N, M, E)$ such that $\pi(C) = 1$. 
Hence, we have
$$
A \cdot P^\epsilon = B \cdot Q^\epsilon,
$$
where $A = \prod_{k=1}^\ell c_{i_k j_k}$, $B = \prod_{k=1}^\ell c_{i_{k+1} j_k}$, and $P$ and $Q$ are products of distinct primes.

We first show that $A = B$. 
Suppose that $A > B$ (the case $A < B$ is analogous). Then, we have
$$
A \cdot P^\epsilon - B \cdot Q^\epsilon > A - B ( q_{nm})^{n\epsilon} > A - B - \frac{\delta'}{2} > 0,
$$
where the first inequality follows from $P^\epsilon > 1$ and $Q \le (q_{nm})^n$, 
the second from the inequality $B ((q_{nm})^{n\epsilon} - 1) \le  (c_{\max})^n ((q_{nm})^{n\epsilon} - 1) < \frac{\delta'}{2}$, 
and the third from the definition of $\delta'$.

This contradicts the assumption $A \cdot P^\epsilon = B \cdot Q^\epsilon$, hence we have $A = B$. 
However, this implies $P = Q$, contradicting the fact that $P$ and $Q$ are products of distinct primes. 
Therefore, $I^\epsilon$ is non-degenerate.
\end{proof}

\begin{proof}[Proof of Lemma~\ref{lem:ef1}]
Let $i \in N$ and $S, T \subseteq M$ such that $c_i(S) > c_i(T)$. 
Then we have
$$
c^{(\epsilon)}_i(S) - c_i(S) = \sum_{j \in S} (q^\epsilon_{ij} - 1)  c_{ij} > 0,
$$
and
$$
c^{(\epsilon)}_i(T) - c_i(T) = \sum_{j \in T} (q^\epsilon_{ij} - 1) c_{ij} 
\le (q^\epsilon_{nm} - 1)  m  c_{\max} < \frac{\delta}{2}.
$$
Hence,
$$
c^{(\epsilon)}_i(S) - c^{(\epsilon)}_i(T) > c_i(S) - c_i(T) - \frac{\delta}{2} > 0.
$$
Now, suppose that $\bx$ is not EF1 in $I$. 
Then, there exists a pair of agents $i, i' \in N$ and a chore $j \in \bx_i$ such that 
$c_i(\bx_i \setminus \{j\}) > c_i(\bx_{i'})$. 

By the argument above, it follows that
$c^{(\epsilon)}_i(\bx_i \setminus \{j\}) > c^{(\epsilon)}_i(\bx_{i'})$, which implies that
$\bx$ is not EF1 in $I^{\epsilon}$. 
\end{proof}

\begin{proof}[Proof of Lemma~\ref{lem:po}]
Let $\bx$ be an fPO allocation in $I^{\epsilon}$. By Lemma~\ref{lem:cha}, there exists $\bw = (w_i)_{i \in N} \in \mathbb{R}_{>0}^n$ such that $\bx$ is an optimal allocation of \textbf{LP}($\bw$). Let $i_1, \ldots, i_n$ be a permutation of $\{1, \ldots, n\}$ such that $w_{i_1} \ge w_{i_2} \ge \cdots \ge w_{i_n}$. Since scaling the weights does not affect the set of optimal solutions, we may, without loss of generality, normalize the weights so that $w_{i_1} = 1$.

We first show that we can assume 
$\frac{w_{i_{k+1}}}{w_{i_k}} \ge \frac{\delta}{q_{nm} c_{\max}}$ for all $k \in [n-1]$. Suppose for contradiction that there exists $k \in [n-1]$ such that $\frac{w_{i_{k+1}}}{w_{i_k}} < \frac{\delta}{q_{nm}  c_{\max}}$. Then, for any $i \in \{i_1, \dots, i_k\}$ and any chore $j$, we have:
\begin{align*}
w_{i_{k+1}}  c^{(\epsilon)}_{i_{k+1}j} 
&< \frac{w_{i_k} \delta   c_{i_{k+1}j}  q^{\epsilon}_{i_{k+1}j}}{q_{nm}  c_{\max}} \notag \\
&< w_{i_k}  \delta \notag \\
&< w_i  c_{ij}q^{\epsilon}_{ij}\\
&= w_i  c^{(\epsilon)}_{ij}, \notag
\end{align*}
where the second inequality follows from $\frac{c_{i_{k+1}j}  q^{\epsilon}_{i_{k+1}j}}{q_{nm}  c_{\max}} < 1$, noting that $\epsilon < 1$.
This implies that agents $i_1, \dots, i_k$ do not receive any chore in $\bx$. 
Otherwise, reallocating any chore from $i$ to $i_{k+1}$ decreases the objective value of $\mathbf{LP}(\bw)$, contradicting the optimality of $\bx$.
Therefore, we can scale $w_{i_{k+1}}, \dots, w_{i_n}$ so that $\frac{w_{i_{k+1}}}{w_{i_k}} = \frac{\delta}{q_{nm} c_{\max}}$ without affecting the optimality of $\bx$. By repeatedly applying this operation as long as there exists $k \in [n-1]$ such that 
$\frac{w_{i_{k+1}}}{w_{i_k}} < \frac{\delta}{q_{nm} c_{\max}}$,  we may assume that
$\frac{w_{i_{k+1}}}{w_{i_k}} \ge \frac{\delta}{q_{nm} c_{\max}}$ for all $k \in [n-1]$. Consequently, we obtain
$$
w_{i_n} \ge \left( \frac{\delta}{q_{nm} c_{\max}} \right)^{n-1}.
$$

Let $\bp = (p_j)_{j \in M}$ be an optimal solution of \textbf{Dual-LP}($\bw$). From dual feasibility, we have
$$
p_j \le w_i c^{(\epsilon)}_{ij} \le w_i c_{ij} q_{nm}^{\epsilon} < w_i c_{ij}(1+\alpha).
$$
By the duality theorem, we obtain
\begin{equation}
\bp(M) = \sum_{i \in N} w_i c^{(\epsilon)}_i(\bx_i) > \sum_{i \in N} w_i c_i(\bx_i). \label{eq:1}
\end{equation}
Suppose for contradiction that $\bx$ is not PO in $I$. Then, there exists another allocation $\by$ such that $c_i(\by_i) \le c_i(\bx_i)$ for all $i \in N$, and $c_h(\by_h) + \frac{\delta}{2} < c_h(\bx_h)$ for some $h \in N$. Therefore,
\begin{align}
\sum_{i \in N} w_i c_i(\bx_i)
&> w_h c_h(\by_h) + \frac{w_h \delta}{2} + \sum_{i \ne h} w_i c_i(\by_i) \notag \\
&> \frac{\bp(\by_h)}{1+\alpha} + \frac{w_h \delta}{2} + \sum_{i \ne h} \frac{\bp(\by_i)}{1+\alpha} \notag \\
&= \frac{\bp(M)}{1+\alpha} + \frac{w_h \delta}{2}. \label{eq:2}
\end{align}

Combining \eqref{eq:1} and \eqref{eq:2} yields:
\[
\frac{2\alpha}{w_h \delta} \bp(M) > 1 + \alpha.
\]

On the other hand, since $p_j < (1 + \alpha) c_{\max}$ and $w_h \ge \left( \frac{\delta}{q_{nm} c_{\max}} \right)^{n-1}$, we have:
\begin{align}
\frac{2\alpha}{w_h \delta} \bp(M)
&< \frac{2\alpha m}{w_h \delta} (1 + \alpha) c_{\max} \notag \\
&\le \frac{2m (q_{nm})^{n-1} (c_{\max})^{n}}{{\delta}^{n}} \cdot \alpha(1 + \alpha) \notag \\
&= 1 + \alpha, \notag
\end{align}
which is a contradiction. Hence, $\bx$ is PO in $I$.
\end{proof}

\subsection{Proofs from Section~\ref{sec:weight}}
\begin{proof}[Proof of Observation~\ref{ob:wpEF1}]
From the definition, we directly obtain 
    \begin{align*}
    \text{An allocation}~\bx~\text{is wpEF1}
     &\Longleftrightarrow \forall i, i' \in N, \frac{\hat{\bp}(\bx_{i})}{\alpha_i} \le \frac{\bp(\bx_{i'})}{\alpha_{i'}}  \\
     &\Longleftrightarrow \max_{i \in N} \frac{\hat{\bp}(\bx_{i})}{\alpha_i} \le \min_{i' \in N} \frac{\bp(\bx_{i'})}{\alpha_{i'}}.
    \end{align*}
\end{proof}

\begin{proof}[Proof of Lemma~\ref{lem:wpEF1iswEF1}]
Let $i, i' \in N$ be any pair of agents such that $i$ weighted envies $i'$.  
Since $i$ weighted envies $i'$, agent $i$ must have at least one chore.
Since the allocation $\bx$ is wpEF1, there exists a chore $j \in \bx_i$ such that $\frac{\bp(\bx_i \setminus \{j\})}{\alpha_i} \le \frac{\bp(\bx_{i'})}{\alpha_{i'}}.$
Since $\bx$ and $\bp$ are optimal solutions to the primal and dual problems, they satisfy the complementary slackness conditions by the complementary slackness theorem. Hence, we have
\begin{align*}
\frac{w_i c_i(\bx_i \setminus \{j\})}{\alpha_i} &= \frac{\bp(\bx_i \setminus \{j\})}{\alpha_i} \\
&\le \frac{\bp(\bx_{i'})}{\alpha_{i'}} \\
&\le \frac{w_i c_i(\bx_{i'})}{\alpha_{i'}}, 
\end{align*}
where the last inequality follows from the dual feasibility condition.

Therefore, we obtain $\frac{c_i(\bx_i \setminus \{j\})}{\alpha_i} \le \frac{c_i(\bx_{i'})}{\alpha_{i'}}$.
Since $i$ and $i'$ are arbitrary, it follows that $\bx$ is wEF1.
\end{proof}

\begin{proof}[Proof of Lemma~\ref{lem:wef1}]
Let $i \in N$ and $S, T \subseteq M$ such that $\frac{c_i(S)}{\alpha_i} > \frac{c_i(T)}{\alpha_{i'}}$. 
Then we have
$$
\frac{c^{(\epsilon)}_i(S) - c_i(S)}{\alpha_i} = \frac{1}{\alpha_i}\sum_{j \in S} (q^\epsilon_{ij} - 1)  c_{ij} > 0,
$$
and
$$
\frac{c^{(\epsilon)}_i(T) - c_i(T)}{\alpha_{i'}} = \frac{1}{\alpha_{i'}}\sum_{j \in T} (q^\epsilon_{ij} - 1) c_{ij} 
\le \frac{(q^\epsilon_{nm} - 1)  m  c_{\max}}{\alpha_{\min}} < \frac{\delta''}{2}.
$$
Hence,
$$
\frac{c^{(\epsilon)}_i(S)}{\alpha_i} - \frac{c^{(\epsilon)}_i(T)}{\alpha_{i'}} > \frac{c_i(S)}{\alpha_{i}} - \frac{c_i(T)}{\alpha_{i'}} - \frac{\delta''}{2} > 0.
$$
Now, suppose that $\bx$ is not wEF1 in $I$. 
Then, there exists a pair of agents $i, i' \in N$ and a chore $j \in \bx_i$ such that 
$\frac{c_i(\bx_i \setminus \{j\})}{\alpha_i} > \frac{c_i(\bx_{i'})}{\alpha_{i'}}$. 

By the argument above, it follows that
$\frac{c^{(\epsilon)}_i(\bx_i \setminus \{j\})}{\alpha_i} > \frac{c^{(\epsilon)}_i(\bx_{i'})}{\alpha_{i'}}$, which implies that
$\bx$ is not wEF1 in $I^{\epsilon}$. 
\end{proof}

\begin{proof}[Proof of Lemma~\ref{lem:winitial}]
Consider the set of chores $\Gamma_H(N \setminus R)$.
We first show that these chores can be allocated to agents in $N \setminus R$ such that every agent receives chores whose total price is at least $r_{\max}$.

Let $i \in R$ be arbitrary. Since $\bw \in \Delta^{n-1}$ is a weight vector satisfying the condition stated in Lemma~\ref{lem:wexistw}, there exists an optimal allocation $\by$ to $\mathbf{LP}(\bw, \tau)$ such that agent $i$ is weighted price envy-free, i.e., $\frac{\bp(\by_i)}{\alpha_i} \le \frac{\bp(\by_{i'})}{\alpha_{i'}}$ for all $i' \in N$.
Since $i \in R$, we have $r_{\max} \le \frac{\bp(\by_i)}{\alpha_i}$, it follows that $\frac{\bp(\by_{i'})}{\alpha_{i'}} \ge r_{\max}$ for all $i' \in N \setminus R$.
If there exists a chore in $\Gamma_H(N \setminus R)$ that is not allocated to $N \setminus R$ under $\by$, then we can reassign it arbitrarily to an agent in $N \setminus R$ without decreasing the total price of any bundle held by agents in $N \setminus R$. Thus, we can ensure that each agent in $N \setminus R$ receives a bundle from $\Gamma_H(N \setminus R)$ whose total price normalized by their weights is at least $r_{\max}$.

Next, consider the remaining chores $M_H \setminus \Gamma_H(N \setminus R)$ (which may be empty). By Lemma~\ref{lem:covermatching}, there exists a matching that covers $M_H \setminus \Gamma_H(N \setminus R)$.
By the definition of $\Gamma_H(N \setminus R)$, each of these chores is matched only to agents in $R$. Thus, we can assign them to agents in $R$ according to a matching.

Combining these assignments yields an optimal allocation $\bx$ to $\mathbf{LP}(\bw, \tau)$ that satisfies both \textup{(I1)} and \textup{(I2)}.
\end{proof}

\begin{proof}[Proof of Lemma~\ref{lem:wpathexist}]
For the first statement, observe that (I'2) implies that every agent in $N \setminus R$ must have at least one chore from $M_H$ in $\bx$. Since Lemma~\ref{lem:n-1} guarantees $|M_H| \leq n - 1$, there must exist at least one agent in $R$ who receives no chore from $M_H$,  
which implies that $U \neq \emptyset$.

For the second statement, since $\bx$ is not wpEF1, it follows from Observation~\ref{ob:wpEF1} and (I'2) that 
$$
\max_{i \in N} \frac{\hat{\bp}(\bx_i)}{\alpha_i} > \min_{i' \in N} \frac{\bp(\bx_{i'})}{\alpha_{i'}} \ge r_{\max},
$$
which implies $V \neq \emptyset$. Moreover, by (I'1) and (I'2), for any $i \in R$, we have $\frac{\hat{\bp}(\bx_i)}{\alpha_i} \le r_{\max} \le \min_{i' \in N} \frac{\bp(\bx_{i'})}{\alpha_{i'}}$, which implies that $i \notin V$. Hence, $V \subseteq N \setminus R$.

For the third statement, pick any $i_1 \in V$. By (I'2), agent $i_1$ must have at least one chore from $M_H$, hence there exists $j_1 \in \bx_{i_1}$ with $(i_1, j_1) \in E_H$. Since $d_H(j_1) \ge 2$, there exists another agent $i_2 \neq i_1$ with $(i_2, j_1) \in E_H$. If $i_2 \in U$, then we have found an alternating path from $U$ to $V$. Otherwise, $i_2 \in R \setminus U$ or $i_2 \in N \setminus R$. In either case, by the definition of $U$ and (I'2), there exists $j_2 \in \bx_{i_2}$ with $(i_2, j_2) \in E_H$.
By continuing inductively in this way, we construct an alternating path. Since the number of vertices is finite and $H$ is a forest, this process must eventually reach some agent in $U$. This completes the proof.
\end{proof}

\begin{proof}[Proof of Lemma~\ref{lem:winvariant}]
We prove by induction on $t$ that both invariants (I'1) and (I'2) are maintained in each iteration of the algorithm.
The initial optimal allocation $\bx^1$ satisfies (I'1) and (I'2) by Lemma~\ref{lem:winitial}.

Now assume that at time step $t$, the current allocation $\bx^t$ satisfies \textup{(I'1)} and \textup{(I'2)}.
Let $P=(i_0, j_1, i_1, j_2,\ldots ,i_{\ell-1}, j_\ell, i_\ell)$ be the shortest alternating path in $H$ computed in Algorithm~\ref{alg:2} during this iteration.
We show that after the transfer step, the updated allocation $\bx^{t+1}$ also satisfies \textup{(I'1)} and \textup{(I'2)}. 

\smallskip
\noindent\textbf{Invariant (I'1):} 
Fix any agent $i \in R$.
We consider the following four cases:
\begin{itemize}
    \item $\bx^{t+1}_i = \bx^t_i$.\\
    In this case, (I'1) clearly holds.
    \item $i = i_\ell$.\\
    By Lemma~\ref{lem:wpathexist}, we have $i_\ell \in V^t \subseteq N \setminus R$, and hence $i \neq i_\ell$. Thus, this case does not occur.
    \item $i = i_a$.\\
    In this case, agent $i$ must belong to $U^t$. Indeed, if $i \in R \setminus U^t$, then we have  $\frac{\bp(\bx^t_{i_a} \cup \{j_{a+1}\} \setminus \{j_a\})}{\alpha_{i_a}} \ge r_{\max}$ since $i\in R$, contradicting the choice of $i_a$. Thus, (I'1) is preserved after the update.
    \item Otherwise.\\
    Agent $i$ acquires one chore from $M_H$ and relinquishes one chore from $M_H$. Hence, the number of chores from $M_H$ assigned to $i$ remains unchanged, and (I'1) is preserved.
\end{itemize}

\smallskip
\noindent\textbf{Invariant (I'2):} 
Fix any agent $i \in N$.
We consider the following four cases:
\begin{itemize}
    \item $\bx^{t+1}_i = \bx^t_i$.\\
    In this case, (I'2) clearly holds.
    \item $i = i_\ell$.\\
    Since $i \in V^t$, we have 
    $$
    \frac{\bp(\bx^{t+1}_i)}{\alpha_i} = \frac{\bp(\bx^t_i \setminus \{j_\ell\})}{\alpha_i} \ge \frac{\hat{\bp}(\bx^t_i)}{\alpha_i} > r_{\max}.
    $$
    Hence, (I'2) continues to hold after the update.
    \item $i = i_a$.\\
    We have
    $$
    \frac{\bp(\bx^{t+1}_i)}{\alpha_i} = \frac{\bp(\bx^t_i \cup \{j_{a+1}\})}{\alpha_i} > \frac{\bp(\bx^t_i)}{\alpha_i} \ge r_{\max}.
    $$
    Thus, (I'2) continues to hold after the update.
    \item Otherwise.\\
    Let $i = i_k$ for some $a < k < \ell$ along the path $P$. 
    Agent $i_k$ acquires $j_{k+1}$ and relinquishes $j_k$. 
    By the choice of index $a$, the total prices of the updated bundle satisfies
    $$
    \frac{\bp(\bx^{t+1}_{i_k})}{\alpha_{i_k}} = \frac{\bp(\bx^t_{i_k} \cup \{j_{k+1}\} \setminus \{j_k\})}{\alpha_{i_k}} \ge r_{\max},
    $$
    otherwise $i_k$ would have been chosen as $i_a$, contradicting the definition of $a$. 
    Hence, (I'2) continues to hold after the update.
\end{itemize}

Therefore, both invariants (I'1) and (I'2) are maintained in each iteration of the algorithm.
\end{proof}

\section*{Acknowledgments.}
The author is grateful to Ayumi Igarashi for pointing out the recent paper~\cite{igarashi2025fair}, for early discussions on the use of weights in the proof of the existence of EF1+PO allocations, and for verifying the correctness of the proof.
This work was partially supported by the joint project of Kyoto University and Toyota Motor Corporation, titled ``Advanced Mathematical Science for Mobility Society'' and supported by JST ERATO Grant Number JPMJER2301, and JSPS KAKENHI Grant Number JP23K19956, Japan.

\bibliographystyle{plain}
\bibliography{main}
\end{document}